\newcommand{\thickhline}{%
    \noalign {\ifnum 0=`}\fi \hrule height 1pt
    \futurelet \reserved@a \@xhline
}
\newcommand{\Explore}{\textsc{Explore}\xspace}
\newcommand{\LExplore}{\textsc{LExplore}\xspace}
\newcommand{\sInitL}{\textsf{InitL}\xspace}
\newcommand{\sInit}{\textsf{Init}\xspace}
\newcommand{\sHappy}{\textsf{Happy}\xspace}
\newcommand{\sForward}{\textsf{Forward}\xspace}
\newcommand{\sBounce}{\textsf{Bounce}\xspace}
\newcommand{\sReverse}{\textsf{Reverse}\xspace}
\newcommand{\sReturn}{\textsf{Return}\xspace}
\newcommand{\sTerminate}{\textsf{Terminate}\xspace}
\newcommand{\sFComm}{\textsf{FComm}\xspace}
\newcommand{\sBComm}{\textsf{BComm}\xspace}
\newcommand{\sFirstBlockL}{\textsf{FirstBlockL}\xspace}
\newcommand{\sFirstBlock}{\textsf{FirstBlock}\xspace}
\newcommand{\sAtLandmarkL}{\textsf{AtLandmarkL}\xspace}
\newcommand{\sAtLandmark}{\textsf{AtLandmark}\xspace}
\newcommand{\sReady}{\textsf{Ready}\xspace}
\newcommand{\pFailed}{\textsl{failed}\xspace}
\newcommand{\aLook}{\textsf{Look}\xspace}
\newcommand{\aMove}{\textsf{Move}\xspace}
\newcommand{\aCompute}{\textsf{Compute}\xspace}
\newcommand{\pMeeting}{\textsl{meeting}\xspace}
\newcommand{\pCatches}{\textsl{catches}\xspace}
\newcommand{\pCatched}{\textsl{caught}\xspace}
\newcommand{\dLeft}{\textit{left}\xspace}
\newcommand{\dRight}{\textit{right}\xspace}
\newcommand{\vTtime}{\ensuremath{Ttime}\xspace}
\newcommand{\vTsteps}{\ensuremath{Tsteps}\xspace}
\newcommand{\vTnodes}{\ensuremath{Tnodes}\xspace}
\newcommand{\vEtime}{\ensuremath{Etime}\xspace}
\newcommand{\vEsteps}{\ensuremath{Esteps}\xspace}
\newcommand{\vBtime}{\ensuremath{Btime}\xspace}
\newcommand{\vNtime}{\ensuremath{Ntime}\xspace}
\newtheorem{observation}{Observation}
\newtheorem{lemma}{Lemma}
\newtheorem{corollary}{Corollary}
\newtheorem{theorem}{Theorem}
\begin{document}

\title{Live Exploration of  Dynamic Rings}


\author{G. Di Luna\footnote{University of Ottawa, gdiluna@uottowa.ca, g.a.diluna@gmail.com}, S. Dobrev \footnote{Slovak Academy of Sciences, stefan.dobrev@savba.sk}, 
P. Flocchini \footnote{University of Ottawa, flocchin@site.uottawa.ca}, N. Santoro \footnote{Carleton University, santoro@scs.carleton.ca}}


\maketitle

\begin{abstract}

In the    {\em graph exploration} problem, a team of mobile computational entities, called agents, arbitrarily positioned at some nodes of a graph,
must  cooperate  so that each node is
eventually visited by at least one agent.
 In the  literature, the  main focus has been on graphs that are {\em static}; that is,
 the topology is either invariant in time or subject to localized  changes.

 The few studies  on exploration of {\em dynamic} graphs have been almost all limited to the  {\em centralized}  case
(i.e., assuming  complete a priori  knowledge of  the  changes and the times of their occurrence).

 We investigate   the {\em decentralized}   exploration of   dynamic graphs  (i.e.,  when the agents  are unaware
of the  location and timing of the changes) focusing, in this paper,  on   dynamic systems whose underlying graph is a  {\em ring}.

We first  consider  the  {\em fully-synchronous} systems  traditionally assumed in the literature; i.e., all agents  are active at  each  time step.
We then   introduce  the notion of  {\em semi-synchronous} systems,   where only a subset of  agents might be active at each time step  (the   choice of the subset  is  made by an adversary); this model is common in the context of  mobile agents in continuous spaces  but has never been studied before for agents moving in graphs.
  Our main focus is on the impact that the level of synchrony as well as other factors such as    anonymity,  knowledge of the size of the ring,  and chirality (i.e.,  common orientation) have on the solvability of the problem, focusing on the minimum number of agents necessary.

  We draw an extensive map of feasibility,  and  of  complexity  in terms of   minimum
number of agent movements. All our sufficiency proofs are constructive, and almost all our solution protocols are asymptotically optimal.


\end{abstract}

\section{Introduction}

\subsection{Framework}

\subsubsection{Graph Exploration}

The  problem  of {\em graph exploration}  requires a team of mobile computational entities, usually called {agents} or {robots},   located at the nodes of a graph and capable of moving from node to neighbouring node, to explore the graph,
with the requirement that each node is
eventually visited by at least one agent.

This  classical  problem has been   extensively investigated, starting from the pioneering work of Shannon \cite{Sh51}.
In the vast literature on the subject (e.g., see \cite{AlbH97,DaFKNS07,DengP99,DieP14,FlKMS09,FraIPPP04,PaP99}),  a wide spectrum of different assumptions have been made and
examined e.g. with regard to:   the computational power of the agent(s);  the  structure of the graph  and its
properties;  the level of topological knowledge available to the agents;
whether or not  the network is anonymous (i.e., the nodes lack distinct identifiers); whether the nodes can be marked (e.g., by leaving a pebble); the level of
synchronization. In case of multiple agents, different types of means of communication  have been considered, including:  face-to-face,  where agents communicate when they are on the same node;  wireless, where the  agents are able to communicate even when they are on different nodes;  with whiteboard,  where the agents communicate by writing on and reading from  whiteboards present in each node.

Regardless of their differences, all these investigations  share the common assumption that the graph is  {\em  static}: its  topological structure does not change during the exploration.
This is true also for those investigations   that  consider faulty nodes and/or faulty links (e.g., see \cite{BaDFS15,CoKR06,DoKS13,DoFPS06}).

\subsubsection{Dynamic Graphs}

In distributed computing,  researchers  recently started to investigate
 highly {\em  dynamic graphs}, that is graphs where the topological changes are not
 localized and sporadic; on the contrary,  the topology  changes continuously and at unpredictable locations.
These investigations have been motivated by  the
 development of
 highly dynamic networks, where
changes are not anomalies (e.g., faults) but rather integral part of the nature of the system.
These highly dynamic networks are modelled
in a natural way in terms of  {\em time-varying graphs},
a model  formally defined in \cite{CaFQS12},
where
main classes of systems studied in the literature  and their computational relationship were identified.
If time is discrete (e.g., changes occur in rounds), the evolution of these systems can
 be  equivalently described as a sequence of static graphs,  called {\em evolving graph},  a model suggested in \cite{HarG97} and formalized  in \cite{Fe04}.

The study of distributed computations in
highly dynamic graphs
has focused mainly on problems of information diffusion
and reachability, such as broadcast, routing, etc. (e.g., see
\cite{AwE84,BCF09,BFJ03,BreDKV13,CaFMS14,CaFMS15,CleMPS11,GodM14}),
and on problems of coordination and agreement, such as election, consensus, etc. (e.g., see
\cite{ArGSS13,AuPR13,BRS12,DilBBC14,KLyO10,KuMO11}).
Clearly,  all these studies make  strong assumptions in order to  restrict the
 universe of the possible topological changes and their temporal occurrence. One such restriction is that  topological changes are {\em periodic}  (Class 8 of \cite{CaFQS12}),
 such as in carrier graphs  (e.g., see \cite{BreDKV13,FlKMS12a,FlMS13,IlW11,LW09b}).
 A popular restriction is by assuming that the network is {\em always connected} (Class 9 of \cite{CaFQS12}): at each time instant,  there is a connected spanning subgraph;
further assuming that such connected spanning subgraph persists for $T\geq 1$ time units
defines the extensively studied sub-class of {\em T-interval-connected} systems (e.g., see \cite{di2015brief,KLyO10,KuMO11,OdW05}).

\subsubsection{Exploration of Dynamic Graphs}

Returning to the exploration problem, very little is known in the case of dynamic graphs. On the probabilistic side, there is an early seminal work on random walks \cite{AvKL08}.
 On the deterministic side there are:  the   study of
 the complexity of computing a foremost exploration schedule  under the  {1-interval-connectivity} assumption \cite{Michail2014},  generalized and extended
 in \cite{ErHK15};
  the computation of an exploration schedule  for {\em rings}  under the  {T-interval-connectivity} assumption \cite{IlW13}; and the computation of an exploration schedule
 for {\em cactuses} under the  {1-interval-connected} assumption \cite{IlKW14}.
All  these studies are however mainly {\em centralized} (or off-line,  post-mortem); that is, they assume that the exploring agents have complete
{a priori} knowledge of  the topological changes and the times of their occurrence.

Very little is known on the {\em distributed} (or on-line,  live) case, i.e. when the location and timing of the changes are unknown to the agents.
Exploration of    {\em carrier networks}, a  periodic  class of time varying graphs, by a synchronous agent, has been studied in   \cite{FlMS13}, where the feasibility of the problem is investigated  depending on the knowledge available to the agent (size of the network or upper bound, length of the period)  and where  optimal solutions are proposed. Under a slightly different model, similar results can be found also in \cite{IlW11}.   Exploration  has  also been examined   assuming that the dynamic graph is $\delta$-recurrent
(i.e., each edge appears at least once every $\delta$ rounds)  \cite{IlW13}.
Apart from these results, to the best of our knowledge, nothing  is known.

\subsection{Contributions}

 In this paper, we  investigate   the distributed exploration of
 dynamic rings under the  1-interval connectivity assumption by mobile agents without explicit means of communication.

\begin{table*}
\footnotesize
\begin{tabularx}{\textwidth}{ |c|X|X|X| }
 \hline
 	{\em N. Agents} & {\em Assumptions} &  {\em Even if} & {\em Result}  \\
 \hline  \hline

 	2 &  No knolwedge on $n$, No Landmark & Non-anonymous agents, Chirality&      Partial Termination {\bf Impossible} (Th. \ref{impossibilitywithIDterminationsync}) \\ \hline
	Any &  No knolwedge on $n$, No Landmark,  Anonymous agents & Chirality &      Partial Termination {\bf Impossible} (Th. \ref{impossibilitymoreagents}) \\
 \hline
\end{tabularx}
\caption{Impossibility results for ${\cal FSYNC}$ model\label{table:fsynimp}.}
\end{table*}

\begin{table*}
\footnotesize
\begin{tabularx}{\textwidth}{ |c|X|X| }
 \hline
 	{\em N. Agents} & {\em Assumptions} &  {\em Exploration with  Termination}  \\
 \hline  \hline

 	2 &  Known  bound  $N$ & Explicit Termination in time  $3N-6$ (Th. \ref{tmh-knownNNoChirality})\\ \hline
	 	\hline
 	2 & Chirality, Landmark  & Explicit Termination in time  $O(n)$ (Th. \ref{thm:unknownNWithChirality}) \\
 	 	 	 	\hline
 	2 &  Landmark &  Explicit Termination in time  $O(n \log(n))$ (Th. \ref{thm:LandmarkNoChilarity}) \\

 \hline
\end{tabularx}
\caption{Possibility results for   ${\cal FSYNC}$ model\label{table:fsyn}.}
\end{table*}

 We consider three  different termination requirements once the ring has been explored: {\em Explicit Termination}, where,  within finite time, all agents must explicitly terminate and stop moving;
 {\em Explicit Partial Termination}, where, within finite time,    at least one agent explicitly terminates and stops moving;
 {\em Unconscious Exploration}, where the agents are not required to stop.

We first  consider  the  {\em fully-synchronous} systems   (Section \ref{sync})  traditionally assumed in the literature; i.e., all agents  are active at  every  time step.
We then   introduce  the notion of  {\em semi-synchronous} systems (Section \ref{ssync}),   where at each time step  only a subset of  the agents might be active  (the   choice of the subset  is  made by an adversary). The semi-synchronous  model is common in the context of  mobile agents in continuous spaces (e.g., \cite{book}) but has never been studied before for agents moving in graphs.
  Our main focus is on the impact that the level of synchrony as well as other factors such as    knowledge of the size of the ring, chirality (i.e.,  common sense of orientation),
   and  anonymity, have on the solvability of the problem.




We  start by examining the exploration problem in  {\em fully synchronous}   systems (${\cal FSYNC}$),  with  {\em two} agents, after showing that it is unsolvable with  only one.

 For {\em anonymous} rings, we first show that, without any additional knowledge on the ring size (e.g., an upperbound),
 two non-anonymous agents cannot explore and partially terminate.
 If the agents are anonymous, the same impossibility holds for any number of agents.
 On the other hand, unconscious exploration is possible using two anonymous agents and no additional knowledge.

If there is knowledge of an upper bound $N$ on the ring size, we show that two anonymous agents can explore and they can both terminate in $3N-6$ rounds. This can be done even if agents do not have a common chirality.

For  {\em non-anonymous} rings,  we show that the presence of a single observably different node ({\em landmark}) allows
 two anonymous agents to solve the  exploration problem with explicit  termination. This can be done without the need of any  additional information.
 We provide an algorithm that terminates in $O(n)$ rounds when there is chirality, and an $O(n \log (n))$ algorithm for the case without chirality.
 For the case of no chirality we do not know if $O(n \log (n))$ rounds are necessay.
 \color{black}


 A summary of these results is shown
in Tables    \ref{table:fsynimp} and  \ref{table:fsyn}.
 All the sufficiency proofs of ${\cal FSYNC}$ are constructive and, apart for the algorithm for the case of {\em non-anonymous} ring and no chirality, all  the proposed algorithms are asymptotically {\em optimal}. %

%

\begin{table*}
\footnotesize
\begin{tabularx}{\textwidth}{  |m{0,8cm}|c|m{2cm}|m{5cm}|X| }
 \hline
 	{\em Model} &{\em N. Agents} & {\em Assumptions} &  {\em Even if} & {\em Result}  \\
 \hline  \hline
 	({\sf NS})& Any & None 		&  Chirality,  Known $n$,  Landmark, Non-anonymous Agents &  Exploration  {\bf impossible} (Th. \ref{nsimpossible}) \\  \hline
	\hline
	\multirow{2}{*}{({\sf PT})}& 2 &No Chirality, Anonymous Agents  &     Known  $n$, Landmark &  Exploration  {\bf impossible} (Th. \ref{twonotenoughpt})\\ \cline{2-5}
	  & 2 & None  & Chirality, Known  $n$, Landmark &    Explicit Termination   {\bf impossible} (Th. \ref{parttermination}) \\ \hline
	  \hline
	  ( {\sf ET})& Any & Unknown $n$ & Known   bound $N$, Chirality,  Landmark, Non-anonymous Agents & Partial Termination {\bf impossible} (Th. \ref{ETnottermination})\\
 \hline
\end{tabularx}
\caption{Impossibility results in ${\cal SSYNC}$ models\label{table:ssynimp}.}
\end{table*}

 \begin{table*}
 \footnotesize
\begin{tabularx}{\textwidth}{ | c | c | X | X | }
 \hline
 	{\em Model} & {\em N. Agents} & {\em Assumptions} &  {\em Exploration and Termination}  \\
 \hline
 \hline

 	\multirow{4}{*}{({\sf PT})} & \multirow{2}{*}{2}  & Chirality, Known   bound $N$&   Partial Termination   in  $O(N^2)$ moves (Th. \ref{PTupperbound})
	  \\
	   \cline{3-4}
 &  & Chirality, Landmark&   Partial Termination   in  $O(n^2)$ moves  (Th. \ref{PTlandmark})\\

 	 \cline{2-4}
	& \multirow{2}{*}{3}  &  Known   bound $N$ & Partial Termination   in  $O(N^2)$ moves (Th. \ref{PTupperboundNoChirality}) \\
	  	 \cline{3-4}
		&  &Landmark  & Partial Termination   in  $O(n^2)$ moves (Th. \ref{PTupperboundNoChiralityLandmark}) \\
 	 \hline
	  \hline
	 \multirow{2}{*}{({\sf ET})}& 2 &    Chirality &   Unconscious  Exploration possible  (Th. \ref{th:perpetual})\\
           		 \cline{2-4}
	 & 3 &  Known $n$ &   Partial Termination possible (Th. \ref{th:ETpartialtermination})  \\

 \hline
\end{tabularx}
\caption{Possibility results for  ${\cal SSYNC}$ models\label{table:ssyn}.}
\end{table*}

  We then examine  the problem in  {\em semi-synchronous}  systems (${\cal SSYNC}$). In these systems it is  possible that
an  agent $a$ waiting to traverse a  missing link $e$  is inactive  in the round the edge reappears.  Depending on what happens  to  that  agent,
we consider and analyze three  different {\em transportation models}  (described in details  later in the paper) and establish feasibility and complexity
results:\\

 \indent -  {\em No Simultaneity}  ({\sf NS}) model :  $a$  is not allowed to move while inactive. This is the weakest of the models that we consider.  In this case, exploration  is impossible with any number of agents, even with exact  knowledge of the ring size,  nodes with distinct IDs,  and common chirality.

\indent -  {\em Passive Transport} model  ({\sf PT}):  $a$ is transported on $e$. In this case,
we show  that, without chirality, two anonymous robots are not sufficient to explore the  ring; the result holds even if  there is a distinguished landmark node and the exact network size is known. On the other hand, with chirality, two agents  can perform the exploration if there is a known upper bound   on the ring size or there is a landmark node.
As for termination, we show that it is impossible to guarantee explicit termination of both agents (even if exact knowledge of the size, chirality and landmark are available). On the other hand, we
prove that it is always possible for at least one of the agents to terminate.
Interestingly,  presence of chirality allows to solve the problem with only $2$ agents;   without chirality  $3$ agents are necessary.
The {\sf PT} model is the strongest of the ${\cal SSYNC}$ models.

\indent -   {\em Eventual Transport}  ({\sf ET}) model:  $a$ is not allowed to  move  while inactive but,  should  the edge be present for an infinite number of rounds, $a$ is guaranteed to be eventually active at a round when the edge is present. In this case, we show that
 exploration with  partial termination  of all agents  is impossible, regardless of the number of agents, even if an upper bound on the ring size is known,   nodes have distinct IDs, and agree on chirality.  On the other hand, with exact knowledge of the ring size, we prove that exploration is possible with three anonymous agents even without chirality, and  at least one  agent explicitly terminates.\\

The results are summarized
in Tables \ref{table:ssynimp} and \ref{table:ssyn}.
 Also  for ${\cal SSYNC}$,  all the sufficiency proofs are constructive,  and almost all the proposed algorithms are asymptotically {optimal}.

\section{Model and Basic Limitations}

\subsection{{\bf Model and Terminology}}
Let  ${\cal R} = (v_0,\ldots v_{n-1})$  be
a synchronous ring where, at any time
step $t\in {\sc N}$,  one of its edges might not be present;
the choice
on which link is missing (if any) is made by an adversary.
Such a dynamic network is   known in the literature  as a {\em
1-interval connected}  ring.

 Each node $v_i$  is connected to
its two neighbours $v_{i-1}$ and $v_{i+1}$ via distinctly labeled ports
  $q_{i-}$ and   $q_{i+}$, respectively\ (all operations on
the indices are modulo $n$); the labeling of the ports may not be
globally consistent, thus might not provide an orientation, and the label may not be comparable \cite{CHALOPIN200854}.
The ring is said to be {\em anonymous} if the nodes have no distinguishable
identifiers, and {\em with landmark} if there is a node (the landmark) which is
different from all others.

Operating in  ${\cal R}$ is a  set $A=\{a_{0},\ldots,a_{m-1}\}$  of
agents, each provided with  memory and computational capabilities.
The agents are anonymous
and all execute the same protocol.
Any number of agents can
reside at a node at the same time.
   Initially located     at    arbitrary nodes,  not necessarily distinct, they do not have any explicit
communication mechanism, nor can leave  marks   on the nodes.
The agents are {\em mobile}, that is they can move from node to
neighboring node.
To move, an agent has to position itself on the port from which it
wants to leave. Access to ports is done in  mutual exclusion: an agent will not succeed to gain an already occupied port;
when several agents try to position themselves on the same port, only one of them succeeds.
Two agents moving in opposite directions on the same edge in the same
round might not be able to
detect each other.

Each agent $a_j$  has a consistent private orientation of the ring;
that is, it has a function $\lambda_j$
which designates each port either {\em left} or {\em right} and
$\lambda_j(q_{i-})=\lambda_j(q_{k-})$, for all  $0\leq i,k<n$.
The orientations of the agents might not be the same.
If all agents agree on the same orientation and are aware of it,
we say that there is {\em chirality}.

The system operates in {\em  synchronous} time steps, called {\em
rounds}.   Initially, all agents are {\em inactive}. Each time step
$t\in {\sc N}$ starts with a non-empty subset  $A(t)\subseteq A$ of the agents becoming
{\em active}.
Upon activation,   agent $a_j\in A(t)$  at node $v_i$ performs a sequence of
operations:  \aLook,  \aCompute, and (possibly) \aMove.

  \begin{enumerate}
 \item

 {\sf Look:} The agent determines its own position within the
node (i.e., whether or not it is on a port, and if so on which one),
  and the position  of the  other agents (if any) at that node. We
call this information a {\em snapshot}.

\item
 {\sf Compute:} Based on the snapshot and the content of its
local memory, the  agent   executes its  protocol (the same for all
agents) to
determine whether or not to move  and, if so, in what direction; the
result will be  $direction\in  \{left,right,nil\}$, where $left$ and
$right$ are with respect to its own local orientation.
If
$direction = nil$, the agent  becomes inactive.
  If $direction \neq  nil$, $a_j$  attempts to access the appropriate
port (if not already there);  if it gains access, it positions itself
on the port, otherwise it sets private variable  $moved=false$ and
becomes inactive.

\item
{\sf Move:} Let the agent be  positioned on port  $q_{i-}$
(resp., $q_{i+}$) after computing. If the link between $v_i$ and
$v_{i-1}$ (resp., $v_{i+1}$) is present in this round,  then
  agent $a_j$ will move to $v_{i-1}$ (resp., $v_{i+1}$), reach it,
set private variable $moved=true$, and become inactive.
  If the  link between $v_i$ and $v_{i-1}$ (resp., $v_{i+1}$) is not present,
   then   agent $a_j$ will remain  in the port, set  $moved=false$, and become
inactive. In either case, access to  port  $q_{i-}$  (resp.,
$q_{i+}$) continues to be denied to any other requesting  agent
during this round.
\end{enumerate}

By definition,  the delays are such that all active agents have
become inactive by the end of  round $t$;   the system  then starts
the new round $t+1$.

Notice that, since  access to a port is in mutual exclusion, in the same  round  at most
one agent will move in each direction on the same edge. Recall that
two agents moving in opposite directions on the same edge in the same
round might not be able to
detect each other.

A major computational factor is the nature of the  activation schedule
of the agents.
   If $A(t)=A$ for all $t\in {\sc N}$,  that is all agents are
activated at every time step, the system is said to be
     {\em fully synchronous}  (${\cal FSYNC}$). Otherwise the system
is said to be
{\em semi-synchronous}  (${\cal SSYNC}$);
the agents that are not activated in a round are said to be {\em
sleeping} or {\em passive} in that round; the choice of which agent is active in a round
is made under an adversarial scheduler, where every agent is  activated infinitely often.
 When an agent is activated, it does not know whether or not it was active in the previous round.
Observe  that  in  ${\cal SSYNC}$ it is possible for an
agent to be sleeping on a port.
This is indeed the case when  an agent $a$
gains access to a port $q$ when the link is not there (thus, it remains on $q$),
and $a$ is not activated
in the next round. What  may happen to an agent sleeping on a port
gives raise to   different models, described  in the following:

\begin{itemize}
\item{No Simultaneity ({\sf NS})}:
A sleeping agent cannot move.
There is no  guarantee of simultaneity for an agent sleeping on a port.

\item{Passive Transport ({\sf PT})}:
If an agent   is sleeping on a port at round $t$
and the corresponding edge is present in that round,   the agent   is
 moved to the other endpoint of the edge in round $t$.

  \item{Eventual Transport ({\sf ET})}:
A sleeping agent cannot move.
If an agent is sleeping on a port at round $t$ and the corresponding edge is present infinitely many times, then the agent will eventually become active at a round $t'>t$  when the  corresponding edge is present  (simultaneity condition).

\end{itemize}


The algorithm executed by the agents solves the {\em exploration} problem if,  within finite time, every node of the ring is
visited by at least one agent.  The exploration is said to be
with {\em explicit termination} if every agent  executing the algorithm  within finite time
enters a terminal state  and
no longer moves. The terminal state has to be entered only after the exploration of the ring.
The exploration is said to be with  {\em  explicit partial termination} (or just {\em partial termination}) if at least one  agent  executing the algorithm
enters a terminal state within finite time  and
no longer moves.
Finally, it is said to be {\em unconscious} if   the agents are not required to stop nor to be aware that the ring has been visited.

\subsection{{\bf Basic Limitations} \label{syncimp}}
 We begin our study by showing simple impossibility results.


\begin{observation}\label{obs-oneAgent}
The adversary can prevent an  agent from leaving the initial node $v_0$,
by always removing the edge over which the agent wants to leave $v_0$.
\end{observation}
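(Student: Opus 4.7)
The plan is to leverage the 1-interval-connectedness restriction together with the fact that a single agent can attempt to cross at most one edge per round. Since the adversary is allowed to remove one edge per round while keeping the footprint connected, and the agent's next attempted traversal is fully determined by its deterministic algorithm, the adversary can simply remove exactly the edge the agent is about to use.

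The argument proceeds by induction on the round number $t$. At round $t=0$, the agent is located at $v_0$. During the \aCompute\ phase, its protocol outputs a direction $\mathit{direction}\in\{\mathit{left},\mathit{right},\mathit{nil}\}$; if the output is $\mathit{nil}$ the agent does not even attempt to leave and the claim is immediate for this round, so assume $\mathit{direction}\in\{\mathit{left},\mathit{right}\}$. The agent then positions itself on the corresponding port $q_{0-}$ or $q_{0+}$. The adversary, knowing the deterministic protocol and the history, chooses its schedule so that precisely the incident edge selected by the agent is the missing edge of round $t$. Because only one edge is absent, the resulting graph is still a path on $n-1$ edges and hence connected, so the 1-interval-connected constraint is satisfied. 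During the \aMove\ phase the agent sees the incident link absent, sets $\mathit{moved}=\mathit{false}$, and remains at $v_0$.

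For the inductive step, assume the agent is still at $v_0$ at the start of round $t+1$. Its internal state and local snapshot are uniquely determined by the execution so far, so the adversary can again predict which port it will choose and remove exactly that edge in round $t+1$; the above argument applies verbatim. Hence in every round the agent fails to traverse and never leaves $v_0$.

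The only subtle point worth noting is that this reasoning relies on the adversary being adaptive (or at least capable of simulating the deterministic protocol in advance). Since the standard 1-interval-connected adversary is adversarial in exactly this sense, no obstacle arises. The observation is therefore essentially a direct consequence of the model, and the proof is a one-line induction once the model is made explicit.
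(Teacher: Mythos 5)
Your proof is correct and follows exactly the argument the paper intends: the paper states this observation without a separate proof, treating the phrase ``by always removing the edge over which the agent wants to leave $v_0$'' as the entire justification, and your round-by-round induction (the deterministic protocol fixes the attempted port, the adversary removes that one incident edge, the remaining path keeps the ring 1-interval connected, so the agent sets $moved=false$ and stays put) is precisely the formalization of that one-liner. Nothing is missing; your remark that determinism lets the adversary simulate the agent in advance is the only point the paper leaves implicit.
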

From this Observation,  we immediately get:
\begin{corollary}\label{cor-oneAgent}
A single agent is not able to explore the ring.
\end{corollary}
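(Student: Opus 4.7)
The plan is to derive the corollary as an immediate consequence of Observation~\ref{obs-oneAgent}. Since a single agent starts at some node $v_0$ and must visit every node of the ring, in particular it must leave $v_0$ at some round; I would formalize this by contradiction, assuming that the single agent successfully explores the ring and thus must eventually traverse at least one of the two edges incident to $v_0$.

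At every round in which the agent is positioned on a port of $v_0$ intending to traverse the corresponding incident edge, the adversary can, by the 1-interval-connectivity assumption, choose precisely that edge as the (at most) one missing edge for that round. I would then apply Observation~\ref{obs-oneAgent} directly: the adversary follows the strategy of always removing whichever incident edge of $v_0$ the agent is currently trying to cross (and if the agent is not on a port, removing either incident edge arbitrarily, keeping the ring 1-interval-connected). Consequently, the agent's $moved$ flag is always $false$ whenever it attempts to leave $v_0$, so the agent never changes position and remains at $v_0$ forever.

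Hence no node other than $v_0$ is ever visited, contradicting the assumption that exploration was achieved. The main (and only mild) point to articulate carefully is that this adversarial strategy is admissible within the 1-interval-connected model, which is immediate because it removes at most one edge per round. Since $n \geq 3$, this proves that a single agent cannot explore the ring, establishing the corollary.
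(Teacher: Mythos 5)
Your proof is correct and takes essentially the same route as the paper, which derives Corollary~\ref{cor-oneAgent} directly from Observation~\ref{obs-oneAgent} with no further argument; your write-up simply spells out the immediate implication (the adversary blocks whichever edge incident to $v_0$ the deterministic agent attempts to cross, a strategy admissible under 1-interval-connectivity since at most one edge is removed per round, so the agent never leaves $v_0$). The added care about admissibility of the adversarial schedule is a harmless elaboration of the same idea, not a different approach.
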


Hence, at least two agents are needed. However, the adversary can prevent their cooperation:

\begin{observation}\label{obs-2AgentsCantMeet}
The adversary can prevent two agents starting at different locations
from meeting each other  even if they have unlimited memory, common chirality
and distinct known IDs.  This result holds even if the scheduler is  ${\cal FSYNC}$
and never blocks both agents at the same round.
\end{observation}
 \begin{proof}
The adversary will never remove an edge, except in the case when that would lead to agents meeting in the next step. 
There are two possible cases how the agents can meet in the next step.\\
Case 1:  One agent is waiting at a node  and the other agent, at a neighbouring node, 
decides to traverse the edge $e$ connecting the two nodes. In this case, the adversary removes edge $e$. \\
Case 2:Both agents decide to traverse different edges $e$ and $e'$ leading to the same vertex. 
Again, it is sufficient for the adversary to remove one of the two edges to prevent   rendezvous.
\end{proof}


\begin{theorem}\label{impossibilitywithIDterminationsync}
There does not exist any partially terminating deterministic exploration algorithm
of anonymous rings of unknown size by  two  agents, even with distinct IDs, common chirality,
and when the scheduler is {\cal FSYNC}.
\end{theorem}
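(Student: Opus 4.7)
The idea is to argue by contradiction using Observation~\ref{obs-2AgentsCantMeet} together with an indistinguishability/ring-stretching argument. Suppose that $A$ is a deterministic algorithm with explicit termination for two agents with distinct IDs on every anonymous $1$-interval-connected ring. Fix any ring $R_n$ of size $n$, place the two agents at two different nodes, and invoke Observation~\ref{obs-2AgentsCantMeet} to select an adversarial schedule $\sigma$ in which the two agents never meet. Let $T$ be the round at which $A$ terminates under $\sigma$. Since the ring is anonymous and the agents never encounter each other, every snapshot seen by agent $a_i$ contains only self-information; hence the entire state trajectory of $a_i$---and, in particular, its decision to terminate at round $T$---is a deterministic function of $a_i$'s ID together with the finite sequence of outcomes (\emph{edge present}/\emph{edge absent}) it records on its move attempts.

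The next step is to replay these two per-agent outcome sequences on a much larger ring $R_N$, with $N$ chosen so that both agents can be placed at mutual ring-distance exceeding $2T$; any $N > 4T+2$ suffices. Place the two agents (with the same IDs) at such positions. The key feasibility point is that the $R_N$ adversary can reproduce both outcome sequences simultaneously while remaining $1$-interval-connected: in $R_n$ at most one edge is absent per round, so at most one of the two agents ever records \emph{edge absent} in any given round, and on $R_N$ the adversary can match that single outcome by removing exactly the one edge incident to the corresponding agent (and no other edge), keeping $R_N$ connected in every round. Because the initial distance on $R_N$ exceeds the total displacement either agent can accumulate in $T$ rounds, the two trajectories remain vertex- and edge-disjoint, the agents still do not meet (nor cross on an edge), and their snapshots round by round coincide with those in $R_n$.

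Both agents therefore execute identical actions in $R_N$ as in $R_n$ and terminate at the same round $T$. But in $R_N$ the two agents jointly visit at most $2(T+1) < N$ distinct nodes, so at least one node of $R_N$ is never visited, contradicting the exploration guarantee of $A$. The one subtlety that has to be handled carefully is the \emph{simultaneous} replay of both outcome sequences under $1$-interval-connectivity; the ``at most one missing edge per round in $R_n$, which can be charged to at most one of the two agents'' accounting is precisely what makes the simulation on $R_N$ work. The rest is the standard indistinguishability conclusion.
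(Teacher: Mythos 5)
Your overall strategy---take the never-meet schedule from Observation~\ref{obs-2AgentsCantMeet} on a small ring $R_n$, note that under it each agent's trajectory is a deterministic function of its ID and its sequence of present/absent outcomes, and replay both sequences on a ring $R_N$ too large to have been explored by the termination round---is sound, and it in fact supplies the indistinguishability argument that the paper's own two-line proof only asserts (``since the ring is anonymous and its size unknown, the only way to detect termination is to meet at least once''). But there is one genuine gap, and it sits exactly at the step you yourself flagged as the key feasibility point. You claim that because at most one edge is absent per round in $R_n$, at most one of the two agents records \emph{edge absent} in any given round. That inference is false: the two agents can stand at the \emph{two endpoints of the single missing edge} and both attempt to cross it in the same round, in which case both record \emph{absent}. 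This configuration is not exotic---it is precisely the ``both agents waiting on the same edge at opposite endpoints'' situation that drives the analyses of Theorems~\ref{thm-knownNWithChirality} and~\ref{tmh-knownNWithChirality}, and it arises naturally under a never-meet adversary exactly when the agents approach each other head-on. In such a round your $R_N$ adversary would have to remove two \emph{distinct} edges (the two agents' trajectories in $R_N$ are edge-disjoint by your distance choice), violating 1-interval-connectivity, so the replay breaks at that round.

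The gap is repairable, but it needs an argument you did not give: you must strengthen Observation~\ref{obs-2AgentsCantMeet} to yield a never-meet schedule in which \emph{at most one agent is blocked per round}. This is possible thanks to a feature of the model: two agents attempting to traverse the same edge in opposite directions in the same round simply swap positions and, as the model stipulates, need not detect each other---so this is not a meeting and the adversary can leave that edge present. Consequently the adversary only ever needs to remove an edge when exactly one agent's crossing would create a meeting (simultaneous entry into a common node, where blocking one of the two agents suffices, or entry into a node where the other agent remains); with two agents there is at most one potential meeting event per round, so this refined adversary blocks at most one agent per round. With that repair, your per-round accounting becomes valid, the $R_N$ adversary can always match the single \emph{absent} outcome with one edge removal, and the counting $2(T+1) < N$ finishes the contradiction as you describe. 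Once patched, your proof is a more rigorous rendering of the same approach as the paper's, which invokes Observation~\ref{obs-2AgentsCantMeet} and leaves the stretching/indistinguishability step implicit.
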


\begin{proof}
By contradiction,  assume that there exists a terminating exploration algorithm ${\cal A}$. Let us consider an execution ${ E}$ of ${\cal A}$ on a dynamic ring of size $n$,
where agents $a$ and $b$ start in two distinct locations and where the adversary always prevents the meeting of the agents,   never blocking the agents at the same round. By Observation \ref{obs-2AgentsCantMeet} this run exists.

 Let us assume, without loss of generality,  that in such an execution agent $a$ is the first one terminating at round $r({ E})$.
 Let us now consider an execution  ${ E}'$ of ${\cal A}$ on a dynamic ring of size $n'=8r( E)$, where the agents  start  at two distinct locations at distance $4r({ E})$.
 The execution  ${ E}'$ is constructed in such a way  that, until round $r( E)$ neither agent can distinguish this execution from ${ E}$. This is possible since in $E$ the adversary never blocks the
 two agents at the same time and they do not meet.

Since the size of the ring is $8r( E)$ and the agents started $4r( E)$ apart, at round $r( E)$ (when agent $a$ terminates) the distance between the two agents is at least $2r( E)$ 
and there are at least $6r({ E})$ unexplored nodes. The execution ${ E}'$ is completed by the adversary blocking agent $b$ at round $r( E)$ and afterwards,  preventing it from exploring the unexplored nodes.
Therefore, in this execution, the partial termination of algorithm ${\cal A}$ is incorrect.
 \end{proof}

Observe now that for anonymous agents the impossibility of explicit termination holds  regardless of their number. This is because,
in the setting when there is orientation and no edge is removed,  all the agents will act in the same way at each time step. If one were to
decide to terminate after $t$ time steps, they all would do so at the same time; however, since they do not know the ring size,
they would do the same also in a ring of size
$n $ multiple of $t$,    terminating without having completed the exploration.
Indeed, with the same reasoning, it is immediate that even the weaker partial termination is impossible.

\begin{theorem}\label{impossibilitymoreagents}
There does not exist any partially  terminating deterministic exploration algorithm of anonymous rings of unknown size by anonymous  agents, regardless of their number. The result holds even if agents have common chirality, and  the scheduler is  ${\cal FSYNC}$.
\end{theorem}

Summarizing, without some knowledge of the size of the ring
 or without the asymmetry introduced by a landmark node,
 exploration with partial termination is impossible
 even in the fully synchronous model by  two agents with IDs, or  by any number of anonymous agents.
 Notice that Theorems \ref{impossibilitywithIDterminationsync} and  \ref{impossibilitymoreagents} would hold  even if the agents were equipped with  wireless communication.

As for the amount of time required for exploration, there exists the following lower bound  due to \cite{ErHK15}:

\begin{observation} \label{thm-WithChiralityLB} \cite{ErHK15}
Exploration of an anonymous ring  by  two anonymous agents
requires  at least $2n-3$ time in the worst case, even if there is chirality and   the scheduler is  ${\cal FSYNC}$.
\end{observation}

\section{Ring Exploration in ${\cal FSYNC}$ \label{sync}}

In this section,
we    consider   exploration   when the system is fully synchronous,  presenting and
analyzing protocols that solve the problem  under different assumptions on
 knowledge of the ring size, anonymity of the nodes,  and presence of chirality. All   these solutions do not require the agents to be able to communicate explicitly.

Our algorithms use   as a building block   procedure \Explore($dir$ $|$ $p_1:s_1$; $p_2:s_2$; \dots; $p_k:s_k$),
where $dir$ is either \dLeft~or \dRight,  $p_i$ is a predicate,  and $s_i$ is a state.
In Procedure  \Explore,  the agent performs \aLook, then evaluates the predicates $p_1, \ldots, p_k$ in order;
as soon as a predicate is satisfied, say $p_i$, the procedure exits and the agent does a transition to the specified state, say $s_i$. If no predicate is satisfied, the agent tries to \aMove in the specified direction $dir$ and the procedure is executed again in the next round.

Furthermore, the following variables are maintained by the algorithms:
\begin{itemize}
\item \vTtime, \vTsteps: the total number of rounds and edge traversals, respectively, since the beginning of the execution of the algorithm.
\item \vEtime, \vEsteps: the total number of rounds and edge traversals, respectively, since   the last call of procedure \Explore.
\item \vBtime: the number of consecutive rounds the agent has been currently waiting in a port.
\end{itemize}
In particular, the following predicates are used:
\begin{itemize}
\item \pMeeting: both agents are in the node.
\item \pCatches: the agent observes the other agent on the port corresponding to its moving direction.
\item \pCatched: the agent is on the port after a failed move, the other agent is observed in the node.
\end{itemize}

Observe that, in a fully synchronous system, when predicate \pCatches holds for an agent, then \pCatched holds for the other agent. In the following, we say that the agents {\em catch each other} if both predicates hold.

\subsection{{\bf Known Upper Bound on Ring Size} \label{knownupp}}
In this section we study the simple case  of exploring the ring   when the agents  know an upper-bound $N\geq n$  on the ring size,
and we show how to solve the problem in asymptotically optimal time, even without chirality.

The directions \dLeft and \dRight now refer to the local orientation of the individual   agent. The algorithm is shown in Figure~\ref{alg-knnch}. We use the predicate \textsl{failed}, that is verified when an agent tries to enter a port  and it fails to do so.

The algorithm works as follows.
At the beginning, each agent goes left; recall that the left direction could be different for the two agents. An agent keeps going left unless:
$i)$ it catches the other agent in the first $2N-4$ rounds; or
$ii)$  $2N-4$ rounds have passed and the agent has been blocked for $N-1$ rounds; or
$iii)$  it is in a node and it fails to enter a port. In all these cases, the agent changes state to \sBounce and it goes right until termination, at round $3N-6$.

If instead an agent is caught in the first $2N-4$ rounds,  then it enters in state \sForward and it keeps the left direction until termination, at round $3N-6$.

\begin{figure}

\begin{framed}
\begin{algorithmic}
\State States: \{{\sf Init}, {\sf Bounce}, {\sf Forward}, {\sf Terminate}\}.
\AtState{Init}
    \State \Call{Explore}{\dLeft$|$ ($\vTtime \geq 2N-4 \land \vBtime = N-1) \lor \pFailed$: \sBounce; \pCatches: \sBounce; \pCatched: \sForward; $\vTtime \geq 2N-4$: \sForward}
\AtState{Bounce}
    \State \Call{Explore}{\dRight$|$ $\vTtime\geq 3N-6$: \sTerminate}
\AtState{Forward}
    \State \Call{Explore}{\dLeft$|$ $\vTtime\geq 3N-6$: \sTerminate}
\END
\end{algorithmic}

\caption{Algorithm {\sc KnownNNoChirality} \label{alg-knnch}}
\end{framed}

\end{figure}

\begin{theorem}  \label{tmh-knownNNoChirality} Algorithm  {\sc KnownNNoChirality} allows  two anonymous agents
without  chirality to explore  a 1-interval connected ring and to  explicitly terminate  in time $3N-6$, where   $N$  is a known upper-bound on the ring size.
\end{theorem}
\begin{proof}
 The termination by round $3N-6$ is trivial for the condition $\vTtime\geq 3N-6$.
It is sufficient to show that the ring has been explored when $\vTtime=3N-6$.
Let $a,b$ be the two agents.
\begin{itemize}

\item We first examine the case where $a,b$ start on the same node $v$.

 If they have different agreements on the $\mathit{left}$ direction then no agent can change direction in the first $2N-4$ rounds. It is trivial to see that after at most $N-1$ rounds the agents have explored the ring, and hence the Theorem holds.

 Let us consider then the case in which the agents agree on the same $\mathit{left}$ direction, and hence try to traverse the same edge $e$. Since the access to a port is done in mutual exclusion, only one of them will enter the port, while the other will fail.
If $e$ is not missing at round $1$, then at the beginning of round $2$ the two agents will be in different nodes; otherwise they will be on the same node.
 In either case, for one of the two agents  the predicate \textsl{failed} is true; notice that the \textsl{failed} predicate will not be verified in any other case. 
 This implies that the two agents will have different directions, and they will not change such directions in the first $2N-4$ rounds; thus in the next $N-1$ rounds the ring will be explored. Hence the Theorem holds.

\begin{figure*}
\begin{center}
\includegraphics[scale=0.6]{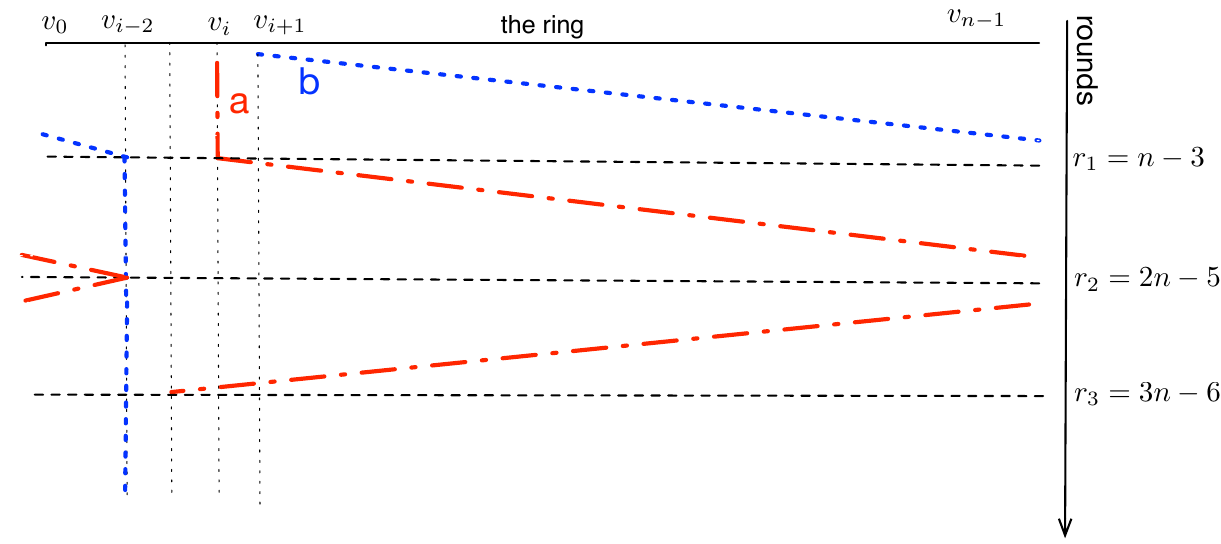}
\end{center}
\caption{Schedule where Algorithm {\sc KnownNNoChirality} takes $3n-6$ rounds to explore.\label{fig-tight1}}

\end{figure*}

\color{black}

\item We now  consider when $a$ and $b$ start on different nodes.

 Let us examine the case when they disagree on the $\mathit{left}$ direction.
 After $N-3$ rounds, they either (i)  are at distance $2$, or (ii)  are at distance $1$, or (iii) crossed each other, where the distance is the number of edges in the portion of the ring pointed by the left directions of the agents. 
 (i) If the distance between the agents is $2$, then they were initially on two neighbouring nodes; therefore, in the next round, at time $N-2$, the  ring will be explored.
(ii)  If they are at distance $1$ and the edge between them is not missing for the successive $N-1$ rounds, then they will cross each other and the ring will be explored by at most round $3N-6$.  If the edge between them is missing for the successive $N-1$ rounds, then at round $2N-4$ they will switch direction and the ring will be explored in the next $N-2$ rounds.
(iii) If  they cross  each other on edge $e$,  they will not change direction for the successive $N-1$ rounds, thus they will explore the ring.
That is, in all three cases the Theorem holds.

 Finally, let us consider   the case when they agree on the $\mathit{left}$ direction.

  If they catch each other before round $2N-4$, then the ring will be explored in the next $N-1$ rounds terminating the exploration by round $3N-6$.

 If they do not catch each other, then at each round at least one of them will traverse an edge, since they cannot be blocked at the same time. 
 Let us suppose that $a$ traverses $N-k$ edges with $k <2$; then
 $b$ traversed at least $N-4+k \geq N-1$ edges, exploring the ring. 
 So the last remaining case is when agent $a$ traverses exactly $N-2$ edges; but this implies that also $b$ has traversed at least $N-2$ edges and, since they start from different nodes, the ring has been explored also in this case.

\end{itemize} 

\end{proof}

\noindent  Notice that there exists a schedule in which the exploration takes $3n-6$ rounds, so the cost is tight for $N=n$. The schedule is reported in the Figure \ref{fig-tight1}: the agents start on distinct nodes, $a$ on $v_i$ and $b$ on $v_{i+1}$, and there is  chirality. Agent $a$ is  blocked for the first $n-3$ rounds; in the meanwhile agent $b$ reaches node $v_{i-2}$.
At this point, $b$ is blocked on node $v_{i-2}$ until round $r_2=2n-5$; during this time $a$ moves until it  catches $b$; this happens exactly at round $r_2$. In the next $n-1$ rounds, $b$ is still blocked while $a$ reaches node $v_{i-1}$, exploring the ring.

The algorithm is asymptotically optimal, as shown by the following theorem, even with the lesser requirement of partial termination.

\begin{theorem} \label{thm-knownNWithChiralityLB}
Exploration with partial termination of an anonymous ring  by  two anonymous agents
with knowledge of an upper bound $N$ on the ring size
requires  at least $N-1$ time in the worst case, even with chirality.
\end{theorem}

\begin{proof}
By contradiction, let ${\cal A}$ be a correct exploration algorithm that allows  partial termination 
with knowledge of an upper bound $N$  in   at most $N-2$ time unit.
Let  $R_i$ be an anonymous ring  of size  $i$, and for $N>5$ let $R(N)=\{R_i : 3\leq i\leq N\}$.
Consider now the simultaneous execution of ${\cal A}$ in $R(N)$ started by initially placing in each $R\in R(N)$
 two agents with chirality  at two neighbouring nodes,  and then  making no edge ever disappear. 
 
Clearly, at each time step, all the agents in all  rings of $R(N)$ have  the same view, perform exactly the same movement  in the same direction;
furthermore, in each ring they are
unaware of each other, and keep their distance to $1$. 
By symmetry, if one of the agents terminates in  a ring at time $t$, then they all do in all the rings at the same time.
By assumption, the execution of ${\cal A}$ terminates at a time $t\leq N-2$; this means that in $R_{N}$, when  both agents  terminate,
 at least one node is still unexplored at that time, contradicting the correctness of ${\cal A}$.
\end{proof}

\subsection{{\bf No Known Bounds On Ring Size} \label{noknowledge}}

We now consider exploring the ring when no upper-bound  on its size is available to the agents.
Under this condition, by Theorem \ref{impossibilitywithIDterminationsync}, it is {\em impossible} for two agents to explore an anonymous ring with termination, even if the agents have unique IDs.
Hence, for exploration to occur, either termination must not be required  or the ring must not be anonymous.
In the following we consider precisely those two cases. We first show how unconscious  exploration can be performed without
any other condition even if the agents are anonymous. We then consider a  ring in which there is a special node, called landmark, different from the others and visible to the agents;  we  prove that  exploration   can be performed with termination,  even if the agents are anonymous, in time $O(n)$ if there is chirality, $O(n \log n)$ otherwise.

\subsubsection{ Unconscious  Exploration \label{oblivious:exploration}}

We present a protocol, {\sc Unconscious Exploration}, that allows two anonymous  agents to perform exploration without knowing any bound on the ring size.
The basic idea of the algorithm is for each agent  to   guess the size of the ring with an initial estimate $G$
and move in one direction for  a time equal to twice the estimate;
the agent will then  double the size estimate. It changes direction if it has been blocked for a time that is equal to the previous estimate and it will keep direction otherwise. This process is repeated with the new guess.  The algorithm is shown in Figure~\ref{alg-PerpExlp}.

\begin{figure}

\begin{framed}
\begin{algorithmic}
\State States: \{{\sf Init}, {\sf Bounce}, {\sf Reverse}, {\sf Forward}, {\sf Keep}\}.
\AtState{Init}{}
    \State $G\gets 2$, $dir\gets\dLeft$
    \State \Call{Explore}{$dir$; $\vEtime\geq 2G \land \vBtime > G$: \sReverse, $\vEtime\geq 2G$: {\sf Keep}, \pCatches: \sBounce, \pCatched:\sForward}
\AtState{Reverse}
    \State $F\gets 2 \cdot G$, $dir\gets opposite(dir)$
    \State \Call{Explore}{$dir$;  $\vEtime\geq 2G \land \vBtime > G$: \sReverse, $\vEtime\geq 2G$: {\sf Keep}, \pCatches: \sBounce, \pCatched:\sForward}
    \AtState{Keep}
        \State $G\gets 2 \cdot G$
    \State \Call{Explore}{$dir$;  $\vEtime\geq 2G \land \vBtime > G$: \sReverse, $\vEtime\geq 2G$: {\sf Keep}, \pCatches: \sBounce, \pCatched:\sForward}
\AtState{Bounce}
    \State \Call{Explore}{$opposite(dir)$}
\AtState{Forward}
    \State \Call{Explore}{$dir$}
\END
\end{algorithmic}
\caption{Algorithm  {\sc Unconscious Exploration}  \label{alg-PerpExlp}}

\end{framed}
\end{figure}

\begin{theorem}\label{thm-PerpExpl}
Algorithm {\sc Unconscious Exploration} allows  two anonymous agents
without  chirality to explore, without terminating,  an 1-interval connected ring; the exploration is completed in $O(n)$ time.
\end{theorem}
\begin{proof} 
If the agents catch each other, then they start moving 
in opposite directions and, in the subsequent  $n-1$ moves  (unknown to them), they will explore the whole ring, proving the Theorem.

Consider now the case when the agents never catch each other. 
Let a {\em phase} be the period of time when the guess remains the same. Since $G$ is always doubled after $2G$ time steps, at time $t_n\leq 4n$, $G \geq n$. Let $r$ be the first round of the phase $P$ in which $G\geq n$.

 If, at $r$, the agents are moving in the same direction, since they do not catch each other and in each time step at least one of them makes progress, in the next $2G$ time steps the ring will be explored.
 
Consider now the case when agents are moving in opposite directions:
If at time $r+2G-1$ neither of them is blocked on an edge, then by the next phase $P+1$ starting at round $r+2G$, they must have crossed each other and they keep different directions; thus, the ring will be explored by the end of phase $P+1$.
Otherwise,  at time $r+2G-1$ at least one of them has to be blocked on an edge $e$.
At the beginning of the next phase, round $r+2G$, three cases are possible:
\begin{itemize}
\item Only one agent changes direction; in this case, they will have the same direction in phase $P+1$, thus exploring in $O(n)$ rounds.
 \item Both agents reverse direction; this happens only if both have been blocked for the last $G$ rounds of phase $P$, thus they will change directions from two different endpoints of the same edge $e$, and the ring is explored by    time $r+2G+4G$.

 \item No one changes direction; this implies that by the next $n$ rounds, i.e. by round $r+2G+2G$, they either crossed each other, and thus they will explore the ring by the end of phase $P+1$, or they are blocked on the endpoints of the same edge $e'$. In the latter case, if the edge $e'$ is removed for the last $2G$ rounds of phase $P+1$, then at round  $r+2G+4G$ they both change directions moving from the two endpoints of $e'$, and thus exploring the ring by the end of phase $P+2$. Otherwise, if the edge is present for one round in the the last $2G$ rounds of phase $P+1$, then they will cross each other and continue in the same direction for the phase $P+2$, exploring the ring.
 \end{itemize}
\end{proof}

\subsubsection{Termination: Landmark and Chirality} \label{chirality}

We now focus on solutions with termination. By Theorem  \ref{impossibilitywithIDterminationsync}, in absence of bounds on the ring size, the ring cannot be anonymous. Hence, we assume  that  there is a special node, called landmark, different from the others and visible to the visiting agents. 
in this  subsection we consider  chirality, but no other   additional knowledge, and we   show that two anonymous agents can explore the ring and terminate
  in optimal time $O(n)$.


Let  $v^*$ be the  landmark, identifiable by the agents:
when performing a {\sf Look}  operation at some node $v$,   a flag  $IsLandmark$ is   set to  $true$
if and only if $v=v^*$. 
The basic idea is  to explore the ring using the landmark to compute the size and allow termination. In order to coordinate termination, the agents implicitly ``communicate"  when they catch each other (by waiting at the node if not sure whether to terminate, and by leaving it if they already know that the ring is explored). When the agents catch each other for the first time, they break symmetry and  assume different roles.

\begin{figure*}
\footnotesize

\begin{framed}
\begin{algorithmic}
\State States: \{{\sf Init}, {\sf Bounce}, {\sf Return}, {\sf Forward}, {\sf Terminate}, {\sf BComm}, {\sf FComm}\}.
\AtState{Init}
    \State \Call{LExplore}{\dLeft$|$ $\vNtime>2\ size$: \sTerminate; \pCatches: \sBounce; \pCatched: \sForward}
\AtState{Bounce}
    \State \Call{LExplore}{\dRight$|$ \pMeeting: \sTerminate; $\vEtime > 2\vEsteps \vee \vNtime >0$: \sReturn, \pCatches: \sBComm}
\AtState{Return}
    \State $\mathit{bounceSteps} \gets \vEsteps$
    \State \Call{LExplore}{\dLeft$|$ $\vNtime>3\ size \vee \pCatched$: \sTerminate; \pCatches: \sBComm}
\AtState{Forward}
    \State \Call{LExplore}{\dLeft$|$ $\vNtime \geq 7\ size \vee \pMeeting \vee \pCatches$: \sTerminate; \pCatched: \sFComm }
\AtState{BComm}
    \State $\mathit{returnSteps}\gets \vEsteps$
    \If{$\mathit{returnSteps} \leq 2 \cdot \mathit{bounceSteps}$} \Comment{both must have waited on the same edge}
        \State \aMove(\dRight) \Comment{signal the need to terminate}
        \State \sTerminate in the next round
    \ElsIf{you know that the ring is explored ($n$ is known)}
           \State \aMove(\dRight) \Comment{signal the need to terminate}
        \State \sTerminate in the next round
    \Else
        \State Stay for one round in the node
        \If {agent F is in the node} \Comment{agent F waited to learn whether to terminate}
            \State change state to \sBounce and process it (in the same round)
        \Else   \Comment{ agent F left, or tried to leave and is on the port -- signalling to terminate}
            \State \sTerminate
        \EndIf
    \EndIf
\AtState{FComm}
    \If {you know that the ring is explored ($n$ is known)}
        \State \aMove(\dLeft) \Comment{signal to B that F knows $n$}
        \State \sTerminate in the next round
    \Else
        \State \aMove from the port to the node \Comment{i.e. staying at the same node}
        \If {agent B is in the node} \Comment{this happens next round}
            \State Change state to \sForward and process it (in the same round)
        \Else \Comment{B has left or is on the port}
            \State \sTerminate
        \EndIf
    \EndIf
\END
\end{algorithmic}
\caption{Algorithm {\sc LandmarkWithChirality} \label{alg-land}}
\end{framed}
\end{figure*}

 We assign  to them  logical names: $F$  for the agent being caught, and $B$ for the one that caught  $F$. These names do not change afterwards, even though it is possible for $F$ to catch $B$ later on.

Procedure \LExplore is very similar to \Explore with the following additions:
\begin{itemize}
\item Each agent keeps track of whether it is crossing the landmark and in which direction; furthermore, it tracks its distance from the landmark (since encountering it for the first time). In this way, it can detect whether it made a full loop around the ring. When it does so for the first time,   variable $size$ is set to the ring size $n$  ($size$ is initialized to infinity, all the tests using it while it has this initial value will fail).
\item An additional variable \vNtime is maintained, tracking the total number of rounds since the agent learned $n$.
\end{itemize}


The complete pseudocode is shown in Figure~\ref{alg-land}. Both agents start going left. If they never meet,  they terminate (see Lemma \ref{lm:noCatch}). If they catch each other, the naming is done.
After naming,   agent $F$   keeps going left.
 Agent $B$  moves  right until either it completes a loop of the ring or  it is blocked  for a number of rounds larger than the number of edges it has traversed so far, i.e. predicate ``$\vEtime > 2\vEsteps$".
 When one of these conditions is satisfied, agent $B$ goes left, and it tries to catch up with $F$.

 If they catch up and $F$ has done less than $\vEsteps$ steps to the left from its old position,  then $B$ and $F$ have waited on the same edge, and hence the ring has been explored;  $B$ can detect this,  and it ``communicates" the end of exploration to $F$.

  If they catch up but $F$ has done at least $\vEsteps$ steps to the left from its old position ($B$ can detect this), they both keep executing the algorithm. Note that $F$ has made progress towards completing a ring loop, a condition that can be detected because of the landmark. Should such an event occur, then both $F$ and $B$ will eventually terminate.

  If they do not catch up  for a certain number of rounds, then they will both know that the  ring is explored and they can terminate independently (see Lemma \ref{lm:termCorr}).

\begin{lemma}\label{lm:noCatch}
In Algorithm  {\sc LandmarkWithChirality},  if the agents do not catch each other and stay in the \sInit state, then they will explore the ring and explicitly terminate by round $7n-1$.
\end{lemma}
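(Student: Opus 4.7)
The strategy is to show that, because neither agent catches the other and both remain in \sInit, both keep executing \LExplore(\dLeft) and, by chirality, advance in a common global direction; they make enough successful leftward moves to learn $n$ within about $5n$ rounds, and then terminate a further $2n+1$ rounds later. Label the two agents $a,b$, write $p_a(t),p_b(t)$ for their positions, and let $m_x(t)$ (respectively $B_x(t)$) denote the number of rounds in $[1,t]$ in which $x\in\{a,b\}$ moved successfully (respectively failed to move).

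The first observation is that \emph{at most one agent fails to move per round}, i.e.\ $B_a(t)+B_b(t)\le t$. When the agents occupy distinct nodes their two target left-edges are distinct, so the adversary's single absent edge blocks at most one of them. The agents can coincide only at start-up; mutual exclusion then grants the port to exactly one of them, and the no-catch hypothesis forces the adversary to keep the corresponding edge present, so they separate in the first round and occupy distinct nodes thereafter.

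The second observation bounds the imbalance via the directed gap $g(t):=(p_b(t)-p_a(t))\bmod n$. A case check gives $g(t+1)-g(t)=+1$ if only $b$ failed to move, $-1$ if only $a$ failed to move, and $0$ otherwise, hence $g(t)-g(0)=B_b(t)-B_a(t)$. The extremal values $g=0$ and $g\equiv n$ correspond precisely to the \pCatches/\pCatched configurations (an agent on a port while the other arrives at its node), and are excluded by hypothesis. Thus $g(t)\in[1,n-1]$ for every $t\ge 1$, so $|B_a(t)-B_b(t)|\le n-2$; combined with $B_a(t)+B_b(t)\le t$ this forces $B_x(t)\le (t+n-2)/2$ and $m_x(t)\ge (t-n+2)/2$ for both $x\in\{a,b\}$.

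An agent starts at distance at most $n-1$ from the landmark, so $2n-1$ successful leftward moves suffice for it to visit the landmark twice and learn $n$; any $n-1$ such moves also cover every node of the ring, so exploration is finished well before. Solving $m_x(t)\ge 2n-1$ gives $t\ge 5n-3$ (allowing one round of slack for the coincident-start case), and the termination test $\vNtime>2n$ in \sInit then fires $2n+1$ rounds later, giving termination by round $5n-3+2n+1=7n-2$. The principal difficulty is the gap analysis: one must argue cleanly that the no-catch hypothesis is equivalent to forbidding $g\in\{0,n\}$, and that this together with the single-missing-edge budget yields the quantitative balance $|B_a(t)-B_b(t)|\le n-2$; after that, the round-count computation is routine.
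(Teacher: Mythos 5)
Your accounting for the phase in which \emph{both} agents are still running is sound, and it is essentially the paper's own first step made more precise: where the paper simply asserts that at least one agent advances per round and that the numbers of successful moves differ by at most $n-1$, you derive the slightly sharper $|B_a(t)-B_b(t)|\le n-2$ via the directed-gap potential $g(t)$, observing that the no-catch hypothesis confines the (unwrapped) gap to $[1,n-1]$. The arithmetic $m_x(t)\ge (t-n+2)/2$, the ``$2n-1$ moves suffice to learn $n$'' step, and the count $5n-3+2n+1=7n-2$ all check out --- \emph{under the assumption that both agents are still attempting a move in every round up to time $t$}.

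That assumption is exactly where the gap is. Nothing prevents one agent from terminating well before round $5n-3$: if agent $a$ starts at the landmark and the adversary removes no edges at all, $a$ completes a full loop by round $n$, learns $n$, and exits via the $\vNtime>2n$ test around round $3n+1$. From that moment both pillars of your bound collapse for the surviving agent $b$: the ``at most one agent fails per round'' observation loses its content (only one target edge remains, and the adversary may remove precisely that edge every round), and the identity $g(t)-g(0)=B_b(t)-B_a(t)$ fails, because when $b$ is blocked while the terminated $a$ sits motionless, $g$ stays constant while $B_b$ grows without bound. Since a terminated agent rests \emph{in} a node, the adversary can stall $b$ forever at any node other than $a$'s final node without ever triggering \pCatched, so your claim that $m_b(5n-3)\ge 2n-1$ is unsupported in this branch, and the lemma's conclusion for $b$ does not follow. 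This is precisely why the paper's proof has a second paragraph: it shows that if an agent terminates at some round $r<5n-2$, it has looped the entire ring, and the no-catch hypothesis then forces the other agent to have already entered the landmark and to complete its second entry by round $r$ --- i.e., to know $n$ \emph{before} the first agent stops; from then on, $\vNtime>2n$ is a pure round-count timeout, so blocking cannot delay the second termination beyond $r+2n$. You need this case (or an equivalent argument that no agent can terminate while the other is still ignorant of $n$), and it cannot be waved away: the claim is genuinely tight, holding only because the adversary's budget for blocking the leading agent is limited by the initial gap.
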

\begin{proof}

Let the agents do not catch each other and stay in the \sInit state.

We first show that two agents starting on different nodes will explore the ring and explicitly terminate by round $7n-2$. Since the agents are moving in the same direction  but they start from different nodes, in each round at least one of them makes progress. Since they do not catch each other, the difference between the number of successful moves by the agents is at most $n-1$. Therefore, if by round $5n-2$ no agent has terminated, then both agents have crossed at least $2n-1$ edges and hence they both know $n$. By  construction, in additional $2n$ steps, the agents will terminate.
If an agent has terminated at round $r<5n-2$, this means that at time $r-2n$ this agent knew $n$, i.e. it has entered the landmark for the second time. As the agents did not catch each other, the other agent must have already entered the landmark. Since in the subsequent  $2n$ steps the agents do not catch each other and together made progress at least $2n$ times, by round $r$ the other agent will enter the landmark for the second time, and by round $r+2n$ it will terminate as well.

We examine now the case of the agents starting at the same node; since the agents are going in the same direction, at the first round they will both try to enter the same port.
 Since access to a port is granted in mutual exclusion, only one agent will succeed. If the edge $e$ corresponding to that port is missing then, at the beginning of next round, one agent sees the other inside the port after the failed move, and the agent in the port sees the other in the node. That is, the agents catch each other, contradicting the hypothesis.

Therefore, edge $e$ has to be present; this means that at the beginning of the next round the two agents will be in two different nodes.

But this is the same to consider as the agent are starting on a different nodes, hence we just have to add an additional round to the bound shown for the case of agents starting from distinct nodes.

\end{proof}

\begin{lemma}\label{lm:termCorr}
In Algorithm  {\sc LandmarkWithChirality}, if  an agent terminates, then the ring has  been explored and the other agent will terminate as well.
\end{lemma}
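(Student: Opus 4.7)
I plan to prove both conclusions together by case analysis on which agent first enters \sTerminate and on the guard that triggers it. The case in which neither agent ever leaves \sInit is already settled by Lemma~\ref{lm:noCatch}, so from here on I may assume that the two agents met at some round and were assigned the logical names $B$ (which moved to \sBounce) and $F$ (which moved to \sForward). Let $p$ denote the node of that first meeting and write $bounceSteps$, $returnSteps$, $fSteps$ for, respectively, the number of successful rightward moves $B$ performs in \sBounce, the number of successful leftward moves $B$ performs in \sReturn, and the number of successful leftward moves of $F$ after the meeting.

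The first step is a structural observation that I expect to reuse throughout. Using 1-interval connectivity (in every round either some agent makes a successful move, or both agents are stuck on the endpoints of the single missing edge), I would show that whenever $B$ catches $F$ during \sReturn one of the following holds: either $bounceSteps + fSteps \ge n$, so that $B$ and $F$ have collectively swept every node of the ring from opposite sides of $p$; or $B$ and $F$ were blocked simultaneously on the same edge during $B$'s \sBounce phase, which also certifies full exploration and, by a simple accounting of how many rounds $B$ spent waiting versus moving, forces the inequality $returnSteps \le 2\cdot bounceSteps$ --- exactly the test used by $B$ in \sBComm.

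Next I would dispatch the ``direct'' termination guards. The triggers \pMeeting in \sBounce or \sForward and \pCatches or \pCatched in \sForward or \sReturn each witness either a physical meeting after a joint traversal of all $n$ edges or the completion of a full loop by at least one agent; the counter guards $\vNtime > 3n$ in \sReturn and $\vNtime > 5n$ in \sForward fire only once an agent has crossed the landmark twice, so by then that agent has toured the whole ring. In every such case the matching predicate holds for the other agent in the same round (meetings are symmetric, and in the synchronous model \pCatches for one agent coincides with \pCatched for the other), so the other agent either terminates in the same round or is funnelled into the handshake treated below.

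The most delicate part, and the main obstacle I foresee, is verifying correctness of the implicit handshake implemented by \sBComm and \sFComm. Here I would check three ingredients. First, that $B$ takes the ``move right then terminate'' branch of \sBComm only when $returnSteps \le 2\cdot bounceSteps$, which by the structural observation certifies full exploration. Second, that $F$ takes the ``move left then terminate'' branch of \sFComm only after it has learned $n$, i.e.\ after a full loop past the landmark, which again certifies exploration. Third, that the two branches are mutually consistent: if exactly one agent signals while the other stays, the staying agent observes its partner absent from the node in the next round and therefore takes its own \sTerminate branch, so both agents terminate within one additional round; if both stay, they re-enter \sBounce and \sForward with all state variables intact, and the argument applies to a strictly shorter residual execution, which must eventually hit one of the exploration-certifying triggers above.
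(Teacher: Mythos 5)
Your case decomposition matches the paper's in outline, but two of your key steps fail as stated. The ``structural observation'' is false: it is not true that every catch of $F$ by $B$ in \sReturn satisfies one of your two exploration certificates. The adversary can block $B$ a few times during \sBounce while letting $F$ run free, then block $F$ while $B$ returns; $B$ then catches $F$ with $bounceSteps + fSteps \ll n$ and with the two agents never blocked simultaneously --- this is exactly the situation in which the algorithm's \sBComm test fails and the agents continue (it is the case $forwardSteps > bounceSteps$ that drives the progress argument in Theorem~\ref{thm:unknownNWithChirality}). Moreover, even the salvageable part of your observation points the wrong way: you derive $returnSteps \le 2\cdot bounceSteps$ \emph{from} same-edge blocking, but soundness of termination needs the converse, namely that the test implies exploration. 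The paper's route is the identity $returnSteps = bounceSteps + fSteps$ (modulo whole loops of $F$), so the test forces $fSteps \le bounceSteps$ or a full loop by $F$; since the \sReturn trigger $\vEtime > 2\vEsteps$ guarantees $B$ was blocked at least $bounceSteps$ times during \sBounce, $fSteps < bounceSteps$ means $F$ was blocked in a round in which $B$ was blocked, and with only one missing edge per round they must have waited on the same edge from opposite endpoints, certifying exploration. You also need to handle the mod-$n$ caveat (full loops of $F$), which your sweep condition $bounceSteps + fSteps \ge n$ does not capture.

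The second gap concerns the timeout guards. Your claim that ``in every such case the matching predicate holds for the other agent in the same round'' is true for \pMeeting and the \pCatches/\pCatched pair, but false for $\vNtime > 3n$ and $\vNtime > 5n$: when $B$ times out, nothing fires for $F$ in that round, and $F$ may not even know $n$ yet, so the lemma's second conclusion (``the other agent will terminate as well'') is left unproven precisely where the constants $3n$ and $5n$ do real work. The paper closes this with a counting argument: the agents move in the same direction, their successful-move counts differ by at most $n-1$, and at least one advances per round, so within $3n$ rounds of $B$ learning $n$ agent $F$ also learns $n$ (and symmetrically, if $F$ times out, $B$ enters \sReturn within $2n-1$ rounds of $F$ learning $n$ and learns $n$ within $3n$ more). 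Finally, your gloss of the case where $F$ catches $B$ (``full loop by at least one agent'') is not what happens there; the paper gives a genuinely geometric argument --- $F$ covers the counter-clockwise segment from the last catch node $v$ to the catch node $w$, the reversal node $z$ of $B$ lies in that segment, hence $B$ covers the clockwise segment --- and some such argument is needed, since neither agent need have looped.
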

\begin{proof}
We prove the lemma by case analysis on how the agents terminate.
 In the proof we assume that the left direction corresponds to a counter-clockwise direction.
Consider first the case when the agents  terminate at the same time. According to the algorithm, this happens    only in  the four cases considered below;  in each case, we show  that  the ring has been explored:

\begin{enumerate}
\item Agent $F$, in state \sForward, catches agent $B$ in state \sReturn at node $w$ (see Figure \ref{figure:fforwardbreturn}):
Consider the  node $v$ where $B$ caught $F$ and changed state to \sBounce the last time. The counter-clockwise segment from $v$ to $w$ has been explored by $F$, that never changed its direction.
Consider now the node $z$ where $B$ changed state to \sReturn the last time; it is not difficult to see that $z$ must be in the counter-clockwise segment from $v$ to $w$; this in turn implies that $B$ has explored the clockwise segment from $v$ to $w$. Thus the entire ring has been explored.

\begin{figure}
\begin{center}
\includegraphics[scale=0.49]{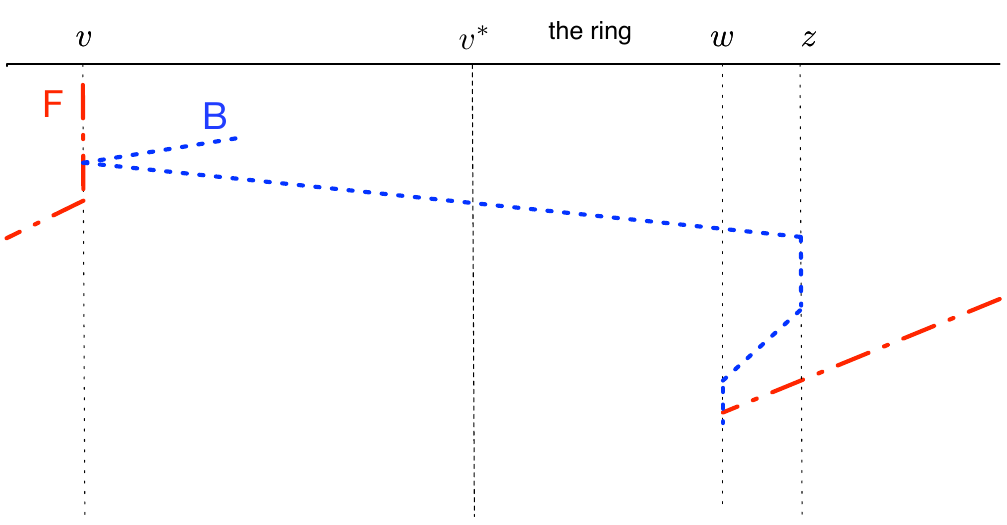}
\end{center}
\caption{ Agent $F$, in state \sForward, catches agent $B$ in state \sReturn at node $w$. \label{figure:fforwardbreturn}}

\end{figure}

\item The agents, $F$ in state \sForward and $B$ in state \sBounce, moving in opposite direction meet at a node: The entire ring has clearly been explored.

\item Agent $ F$,  in state {\textsf FComm}, knows   $n$ and signals $B$ to terminate: By construction.

\item Agent $B$, in state {\textsf BComm}, signals $F$ to terminate:
According to the algorithm when this occurs either $B$ knows $n$, or $\mathit{returnSteps} \leq 2 \cdot \mathit{bounceSteps}$. If $B$ knows $n$ the lemma trivially holds. Consider the second condition. When agent $B$ changes its direction, it has been blocked (not necessarily on the same edge) more than $\mathit{bounceSteps}$ times. Satisfying the test $\mathit{returnSteps} \leq 2 \cdot \mathit{bounceSteps}$ means  $F$ has either made progress of at most $\mathit{bounceSteps}$, or it has made one or more whole loops and then at most $\mathit{bounceSteps}$. In the first case $F$ has been blocked during one of the rounds when  $B$ has been blocked; this can only happen if they had been blocked on the same edge, i.e. the ring has been explored (see Figure \ref{figure:bsignaltermination} for an example). In the latter case, the ring has obviously been explored.

\begin{figure}
\begin{center}
\includegraphics[scale=0.5]{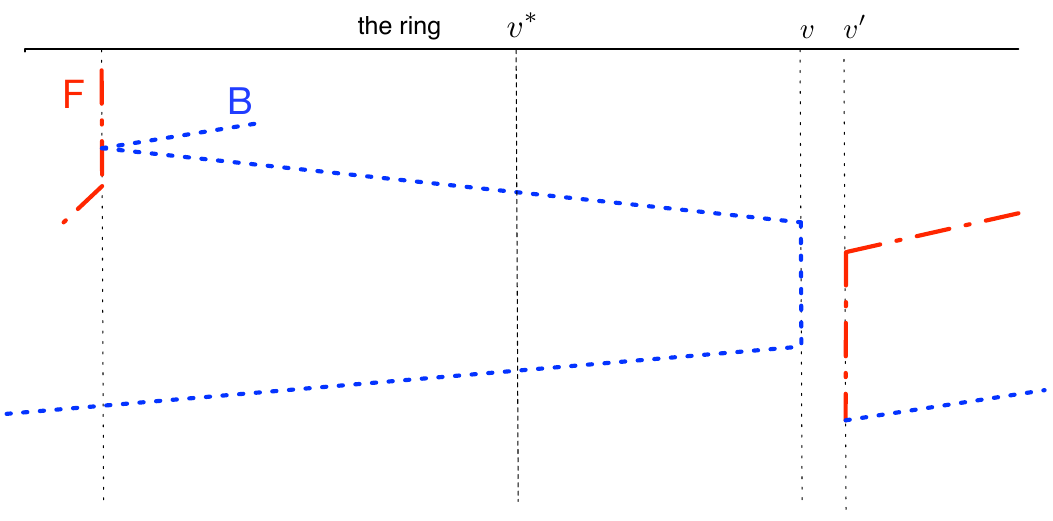}
\end{center}
\caption{ Agent $B$ signals $F$ to terminate, after checking that $\mathit{returnSteps} \leq 2 \cdot \mathit{bounceSteps}$. This implies that both agents were waiting on two endpoints of the same edge, in this case edge $(v,v')$.\label{figure:bsignaltermination}}

\end{figure}
\end{enumerate}

Consider now the cases when one agent terminates first (the case of agents never meeting is handled by Lemma~\ref{lm:noCatch}); we have two situations, in both we prove that the ring has been explored:

\begin{enumerate}
\item Agent $B$ terminates due to timeout $\vNtime > 3n$. As the agents are moving in the same direction, the number of successful moves differs by at most $n-1$. Since in each time step at least one of them advances, in less than $3n$ time steps from the moment when  $B$ learned $n$, $F $ will also learn $n$ and eventually terminate.

\item Agent $F$ terminates due to timeout $\vNtime\geq 7$.

%
%

Let $r$ be the round when  $F$ learned $n$.

If  $B$   entered state \sBounce at round $r' \leq r$,
 since the agents did not cross each other (satisfying the \pMeeting predicate),
agent $B$ switches to state \sReturn
at most $4n-2$ rounds from round $r$, as we now show. In fact, at round $r+4n-2$,  we have $\vEtime \geq 4n-2$.
If $\vEsteps <2n-1$, $B$ satisfies predicate ``$\vEtime > 2\vEsteps$" and thus enters state \sReturn. On the other hand, if $\vEsteps \geq 2n-1$, then $B$ has done at least a loop around the landmark $v^*$;
therefore ``$\vNtime > 0$" and thus $B$ enters   state \sReturn,
(see Figure \ref{fig:ssynclandmarkcase6} for an example).
Now the analogous argument as for  case 1 applies, both agents have the same direction and if $B$ catches $F$ it terminates. Therefore, if $B$ does not catch $F$, then it will
learn $n$ with at most $3n$ additional steps,  and eventually terminate.

Notice that, if  agent  $B$  enters state \sBounce  after round $r$, $B$ will catch  $F$ that  would signal  $B$ to terminate,
proving the lemma. Otherwise, if  $B$ does not catch $F$, that means that $B$ keeps staying in state \sReturn after round $r$, then it will
learn $n$ with at most $3n$ additional steps,  and eventually terminate.

\end{enumerate} 
\end{proof}

\begin{figure}
\begin{center}
\includegraphics[scale=0.49]{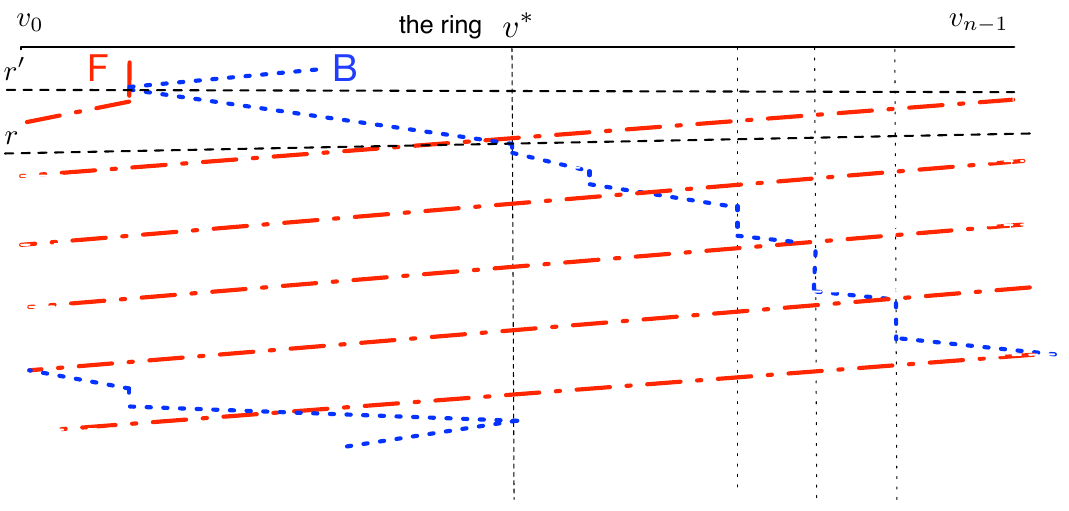}
\end{center}

\caption{Agent $F$ terminates due to timeout $\vNtime\geq 7n$: in this Figure we can see agent $F$ learning $n$ at round $r$, and agents $B$ switching state to \sReturn after $4n-2$ steps.\label{fig:ssynclandmarkcase6}}
\end{figure}

\begin{theorem}\label{thm:unknownNWithChirality}

Algorithm  {\sc LandmarkWithChirality}  allows  two anonymous agents
with  chirality to explore a 1-interval connected ring  with a landmark and to explicitly  terminate  in $O(n)$ time.
\end{theorem}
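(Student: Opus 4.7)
The plan is to derive the theorem from Lemmas \ref{lm:noCatch} and \ref{lm:termCorr} together with an explicit bound on the total number of rounds. Lemma \ref{lm:termCorr} already gives correctness of termination (the ring is fully explored) and the fact that both agents terminate whenever one does, so what remains is the $O(n)$ time bound.

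I would first split the execution according to the algorithm's structure. If the two agents never catch each other while both are in \sInit, Lemma \ref{lm:noCatch} directly supplies an $O(n)$ bound. Otherwise, let $r_0$ denote the round of the first catch. Because both agents move left in \sInit and the adversary can remove at most one edge per round, the distance between them shrinks by at least one every round; hence $r_0 \leq n$. This accounts for an initial linear budget.

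From round $r_0$ onward, $F$ moves left monotonically while $B$ alternates between \sBounce (going right) and \sReturn (going left to rejoin $F$). The central tool for bounding the time is the invariant that in every round at least one of the two agents performs a successful move, since the only way neither does is that both are blocked on the same missing edge --- a situation witnessing that the ring has already been fully explored, which then triggers \pMeeting or \pCatches. Using this, I would show that within $O(n)$ rounds after $r_0$ at least one agent has crossed the landmark twice and therefore learned $n$: since $F$ moves only left, $2n$ of its successful moves suffice; $B$ either completes a full right loop during its first \sBounce (learning $n$ itself), or the guard $\vEtime > 2\vEsteps$ fires within $O(n)$ rounds. Once $n$ is known to one agent, the explicit timeouts $\vNtime > 3n$ in \sReturn and $\vNtime > 5n$ in \sForward terminate the execution in $O(n)$ additional rounds, as was already shown in the second half of Lemma \ref{lm:termCorr}.

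The hard part will be ruling out pathological schedules with many short \sBounce/\sReturn alternations of $B$: whenever the BComm test $returnSteps \leq 2 \cdot bounceSteps$ fails, $B$ re-enters \sBounce without terminating, and one must verify that this can happen only a bounded number of times. I would address this with a potential argument showing that each completed Bounce-Return cycle that does not trigger termination forces $\Omega(n)$ additional leftward progress by $F$ or $\Omega(n)$ additional distinct edges traversed by $B$; since $F$ can make at most two landmark crossings before learning $n$, only $O(1)$ such cycles can occur before some agent learns $n$ or the ring is fully explored. Combined with the one-move-per-round invariant and the explicit timeouts, this yields the claimed $O(n)$ bound.
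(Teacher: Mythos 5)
Your skeleton (split on catch/no-catch, invoke Lemmas~\ref{lm:noCatch} and~\ref{lm:termCorr}, bound the time until someone learns $n$, then let the \vNtime timeouts finish) matches the paper's, but your key quantitative step is wrong. The paper's crucial observation is an \emph{amortized} per-cycle bound, not the claim you make that each non-terminating Bounce--Return cycle forces $\Omega(n)$ progress and hence only $O(1)$ cycles occur. That claim is false: the adversary can block $B$ after a single right step ($bounceSteps_i=1$), let the guard $\vEtime>2\vEsteps$ fire after $O(1)$ rounds, and have $B$ catch $F$ after $F$ moved only two steps left, so $returnSteps_i=3>2\cdot bounceSteps_i$ and the \sBComm test does not terminate; such a cycle costs $O(1)$ rounds, yields $O(1)$ progress, and can be repeated $\Theta(n)$ times. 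The bound survives only because the cost of each cycle is \emph{linear in the progress made by $F$ during that cycle}: from $returnSteps_i = bounceSteps_i + forwardSteps_i$ and $pTime_i \leq 2\cdot bounceSteps_i + returnSteps_i + forwardSteps_i \leq 3\cdot bounceSteps_i + 2\cdot forwardSteps_i$, non-termination at the catch forces $forwardSteps_i > bounceSteps_i$, hence $pTime_i \leq 5\cdot forwardSteps_i$; summing over all cycles gives total time at most $5$ times $F$'s total leftward progress, which is $O(n)$ before $F$ loops around the landmark and learns $n$. Without this amortization your potential argument does not close the proof.

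Two subsidiary steps are also flawed, though both are repairable. First, your bound $r_0\leq n$ on the first catch rests on ``the distance between them shrinks by at least one every round,'' which is false: the adversary can block the chasing agent while the chased one advances, so the gap can grow, and indeed the adversary can prevent a catch entirely. The paper instead derives the bound from Lemma~\ref{lm:noCatch} by indistinguishability: if no catch has occurred, the execution is a prefix of a catch-free one, which terminates by round $7n-2$, so any catch must occur by round $7n-3$. Second, your invariant ``in every round at least one agent moves, since both blocked on the same edge triggers \pMeeting or \pCatches'' is wrong: during \sBounce the agents sit at \emph{opposite endpoints} of the missing edge, cannot see each other across it, and no predicate fires; the adversary can hold them there for many rounds. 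This stall is cut off only by the guard $\vEtime > 2\vEsteps$ (which is exactly what the $2\cdot bounceSteps_i$ term above pays for), and the same-edge wait is detected only later, arithmetically, via the test $returnSteps \leq 2\cdot bounceSteps$ in state \sBComm.
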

\begin{proof}
If the agents do not catch each other, the proof follows from Lemma~\ref{lm:noCatch}. Consider now the case that the agents catch each other at least once.
Also by Lemma~\ref{lm:noCatch}, we know that the meeting will happen no later than in round $7n-2$. The crucial observation is that, either the time between two consecutive meetings is linear in the progress made by  agent $F$, or the agents terminate following the catch.

\begin{figure*}
\footnotesize

\begin{framed}
\begin{algorithmic}
\State States: \{{\sf InitL}, {\sf Happy}, {\sf FirstBlockL}, {\sf AtLandmarkL}, {\sf Ready}, {\sf Reverse},  {\sf Bounce}, {\sf Return}, {\sf Forward}, {\sf Terminate}, {\sf BComm}, {\sf FComm}\}.
\AtState{InitL}{}
    \State $dir\gets\dLeft$, $k_1\gets 0$, $k_2\gets 0$, $k_3\gets 0$
    \State \Call{LExplore}{$dir \,|$ $n$ is known: \sHappy;  $\vBtime > 0$: \sFirstBlockL;  \pCatches: \sBounce; \pCatched: \sForward}
\AtState{Happy}
    \State \Call{LExplore}{$dir\,|$ $ \vTtime \geq 32((3\lceil \log(n) \rceil+3)5 \cdot n)+1$: \sTerminate;  \pCatches: \sBounce; \pCatched: \sForward}
\AtState{FirstBlockL}
    \State $dir\gets\dRight$, $k_1\gets \vTtime-1$
    \State \Call{LExplore}{$dir\,|$ $n$ is known: \sHappy, $isLandmark$: \sAtLandmarkL; $\vBtime> 0$: \sReady; \pCatches: \sBounce; \pCatched: \sForward}
\AtState{AtLandmarkL}
    \State $k_3\gets \vEtime$
    \If {both agents are at the landmark}
        \State Wait one round
        \If {both agents are at the landmark}
            \State \sTerminate
        \EndIf
    \EndIf
    \State \Call{LExplore}{$dir\,|$ $n$ is known: \sHappy, $\vBtime > 0$: \sReady; \pCatches: \sBounce; \pCatched: \sForward}
\AtState{Ready}
    \State $k_2\gets \vEtime$
    \State Compute your ID by interleaving bits of the bit-strings representation of $k_1$, $k_2$ and $k_3$. Each $k_i$ string of bits is padded by a prefix  $0$ until its length is equal to the biggest of the three.
    \State {\em set}(ID)
    \State Change to state \sReverse and process it
\AtState{Reverse}
    \State $dir\gets ${\em direction}$(\vTtime)$
    \If {$n$ is known}
        \State \Call{LExplore}{$dir\,|$ $\vTtime \geq  32((3\lceil \log(n) \rceil+3)5 \cdot n)$: \sTerminate; \pCatches: \sBounce; \pCatched: \sForward}
    \Else
        \State \Call{LExplore}{$dir\,|$ {\em switch}(\vTtime): \sReverse; \pCatches: \sBounce; \pCatched: \sForward}
    \EndIf
\AtState{Bounce, Return, Forward, BComm, FComm}
    \State The same as in Algorithm LandmarkWithChirality.
\END
\end{algorithmic}

\caption{Algorithm {\sc StartFromLandmarkNoChirality} \label{alg-StartFromLandmarkNoChirality}}
\end{framed}

\end{figure*}

Let $\mathit{pTime}_i$ denote the time between $i$-th and $i+1$-th catch and let  $\mathit{forwdSteps}_i$ be the progress made in that time  by agent $F$. We have:
$$\mathit{returnSteps}_i = \mathit{bounceSteps}_i + \mathit{forwdSteps}_i$$ Furthermore, $$\mathit{pTime}_i \leq 2 \cdot \mathit{bounceSteps}_i+1+\mathit{returnSteps}_i+\mathit{forwdSteps}_i$$
Where,  $2 \cdot \mathit{bounceSteps}_i+1$ is an upper bound on the time needed by agent $B$ to switch state from \sBounce to \sReturn given by the predicate ``$\vEtime > 2\vEsteps$".
The quantity $\mathit{returnSteps}_i+\mathit{forwdSteps}_i$ is an upper bound on the time needed for $B$ to catch again agent $F$.
Substituting $\mathit{returnSteps}$ into the latter yields $$\mathit{pTime}_i\leq 3 \cdot \mathit{bounceSteps}_i+2 \cdot \mathit{forwdSteps}_i+1$$ If the agents do not terminate after this catch, it must be $\mathit{forwdSteps}_i>\mathit{bounceSteps}_i$, hence $\mathit{pTime}_i\leq 6 \cdot \mathit{forwdSteps}_i$.
This means that  by   time $12n$ at the latest  since the first catch, agent $F$ will know $n$ and will terminate in $7n$ further rounds (if it does not terminate earlier due to some other terminating condition). The correctness now follows from Lemma~\ref{lm:termCorr}, and optimality is obvious. 
\end{proof}

\subsubsection {Termination:   Landmark without Chirality  \label{nochi} }


 In this subsection we  consider the case of a landmark when there is no chirality, and  we
 prove that  exploration  with termination can still  be performed   in  time $O(n \log n)$.

We first   consider and solve
 the problem when both agents start from the landmark;
  we then adapt the algorithm to work when agents start in arbitrary positions.

 \paragraph{Starting from the Landmark.}
 The pseudocode of Algorithm {\sc StartFromLandmarkNoChirality} is in Figure \ref{alg-StartFromLandmarkNoChirality}.
The main difficulty lies in the case when the agents start in   opposite directions and never break the symmetry.
Our approach to solve this case is to add an initial phase in which the agents use the event of waiting on a missing edge to break symmetry, obtain  different IDs (of size $O(\log n)$) and then use these IDs to ensure that, if the agents  do not catch each other (or outright explore the ring), then they eventually move in the same direction for a sufficiently long time  so that Algorithm {\sc  LandmarkWithChirality} succeeds.

Let us remark that, if the  agents somehow catch each other, they establish chirality, and then they can use Algorithm {\sc LandmarkWithChirality} which leads to exploration and termination. Therefore,  if at any point the agents catch each other, they enter states \sForward and \sBounce and proceed with Algorithm {\sc LandmarkWithChirality}.

 \underline{ Computing the ID:}  Each agent tries to compute its ID according to the  procedure described below. If an agent does not succeed in computing its ID,  then it has explored the ring and it is aware of that.

 \begin{figure}
\begin{center}
\includegraphics[scale=0.49]{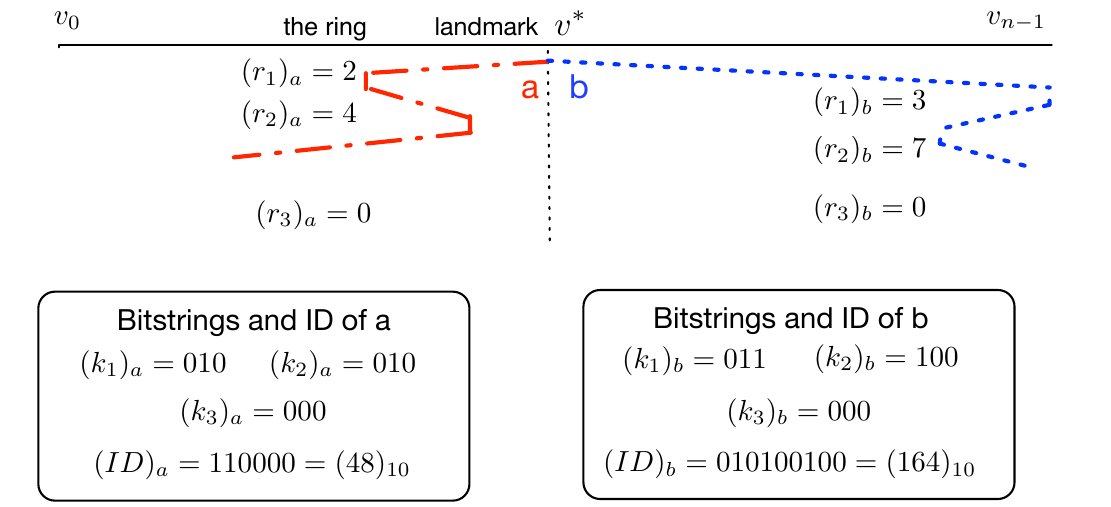}
\end{center}
\caption{Example of a run where different IDs are assigned to agents. In this case the round $r_3$ is $0$ for both agents; no one visits the landmark between $r_1$ and $r_2$.\label{fig:computIDexmpale}}

\end{figure}

  \begin{figure}
\begin{center}
\includegraphics[scale=0.49]{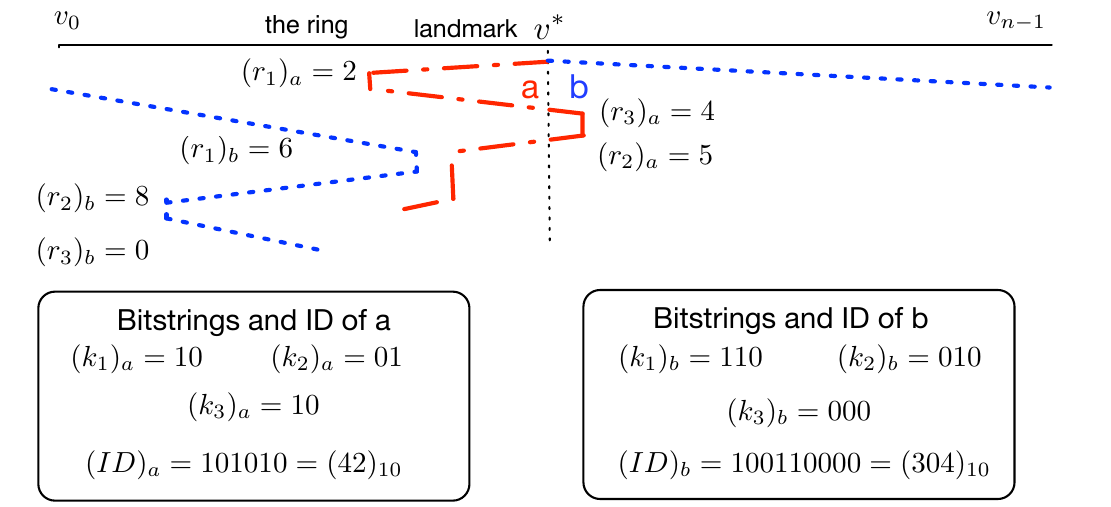}
\end{center}
\caption{Example of a run where different IDs are assigned to agents. In this case $r_3 \neq 0$ for agent $a$. \label{fig:computIDexmpale1}}

\end{figure}

 If  an agent does not know the ring size,  the first two times it waits in a port  it immediately changes direction.
 We indicate these rounds with $r_1$ and $r_2$, respectively.
 Let $r_3$ be the round when the agent entered the landmark for the first time between times $r_1$ and $r_2$ 
  ($r_3$ is set to $0$ if the agent does not traverse the landmark between rounds $r_1$ and $r_2$).
Let $k_2=(r_2-max(r_1,r_3))$  and  $k_3=max(0,(r_3-r_1))$;
 note that   
  these  values   are provided 
by variable  $\vEtime$ in the algorithm (see  Figure \ref{alg-StartFromLandmarkNoChirality}).
The computed ID of this agent consists of the interleaved bits of the numbers $k_1=r_1$, $k_2$ and $k_3$, where each bitstring is padded with a prefix of $0$'s until its length is equal to  that of the longest of the three.  Note that two IDs are equal if and only if their $k_i$'s are equal. In Figure \ref{fig:computIDexmpale} there is a detailed example on how the IDs are computed; notice that since both IDs start with $0$, this bit is ignored when the numerical value in base $10$ is obtained.  In Figure \ref{fig:computIDexmpale1} there is an example of  a run where $r_3 \neq 0$ for agent $a$.

Moreover, notice that if a round $r_1$ or $r_2$ does not exist, because the agent encountered a missing edge less than two times, then that agent has looped around the landmark; in this case it enters in the \sHappy State (cf. pseudocode). So it knows the ring size and it can compute an upper bound on the termination time of the other agent.

\underline{Using the IDs to decide the direction:}  The following procedure is used when an agent has computed its ID. Agents agree on a predetermined subdivision of rounds in phases.
Round  $r$ belongs to phase $j$, $r \in phase(j)$, iff $2^{j} \leq r < 2^{j+1}$. Given the ID, an agent computes a string of bits $S(ID)=10\circ(b($ID$))\circ0$,  where $\circ$ is the string concatenation and $b($ID$)$ is the minimal binary representation of ID.   Given a string $S$ we use $(S)_i$ to denote the $i-th$ character of $S$. Let us define as $\overline{j}$ the minimum value for which     $2^{\overline{j}} \geq len(S(ID))$ and $\overline{S(ID)}=(0)^{2^{\overline{j}}-len(S(ID))} \circ S(ID)$, where $len(S)$ is the length of the string $S$. For each phase $j \geq \overline{j}$ we associate the binary string  $d(ID,j)=Dup(\overline{S(ID)},2^{(j-\overline{j})})$, where $Dup(S,k)$ is the string obtained from $S$ by repeating each character $k$ times, e.g. $Dup(1010,2)=11001100$.

For each round $r \in phase(j)$, with $j > \overline{j}$, the direction of the agent is equal to $\mathit{left}$ if $(d(ID,j))_{r-2^j}=0$, otherwise it is $right$.
For a round $r \in phase(j)$ with $j \leq \overline{j}$ the direction of an agent is fixed to $\mathit{left}$.

In Figure \ref{fig:directions} there is an example of the bit sequences generated by an agent with $ID=1$.

In our algorithm this procedure is implemented using three functions:
\begin{itemize}
\item {\em set}$(ID)$: This function takes as parameter the ID of the agent, and it initializes the aforementioned procedure.
\item {\em direction}$(\vTtime)$: This function takes as parameter the current round and it returns the direction according to the aforementioned procedure.
\item {\em switch}$(\vTtime)$: This function takes as parameter the current round and  it returns true if $direction(\vTtime)$ $\neq direction(\vTtime-1)$.
\end{itemize}

\begin{figure}
\begin{center}
\includegraphics[scale=0.45]{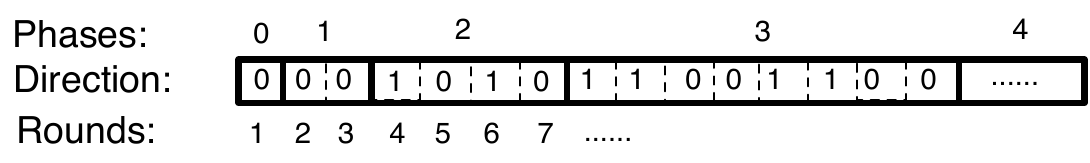}
\end{center}

\caption{ Directions for an agent with $ID=1$ \label{figure:nlogndirection}, a round with value $0/1$  corresponds to left/right direction.\label{fig:directions}}

\end{figure}

\begin{lemma}\label{lemma:direction}
Let us consider two agents with different IDs:\{$ID,ID'$\}, with $len(ID) \geq len(ID')$. For any constant $c >0$, by round $r < 32((len(ID)+3)c \cdot n)+1$ there has been a sequence of $c\cdot n$ rounds in which the agents had the same direction. Moreover,  by round $r$, each agent has moved in both directions for a sequence of rounds of length at least $c \cdot n$.
\end{lemma}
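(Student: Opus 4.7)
The plan is to pinpoint a phase $J$ within which the two direction sequences necessarily contain a long common-direction run and, for each agent, long runs in both directions, and then to check that $J$ ends comfortably before the claimed deadline.

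The engine of the argument will be two structural properties of every padded string $\overline{S(ID)}$: its last bit is $0$ (inherited from the trailing $0$ of $S(ID) = 10 \circ b(ID) \circ 0$), and it contains at least one $1$ (from the leading $10$). I will set $L = len(ID)+3$, $\overline{j}_{\max}=\max(\overline{j},\overline{j}')$, and choose
\[
J \;=\; \overline{j}_{\max} + \lceil \log_2(cn) \rceil,
\]
the smallest phase index for which both per-bit block sizes $2^{J-\overline{j}}$ and $2^{J-\overline{j}'}$ are at least $cn$. I will then partition phase $J$ into $2^{\overline{j}_{\max}}$ super-blocks of length $2^{J-\overline{j}_{\max}}\ge cn$. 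The crux is a short index computation: during the entire \emph{last} super-block, each agent reads the last bit of its own padded string. For agent $ID'$ this is immediate; for agent $ID$ it follows from
\[
\lfloor (2^{\overline{j}_{\max}}-1)\,2^{\overline{j}-\overline{j}_{\max}} \rfloor \;=\; \lfloor 2^{\overline{j}} - 2^{\overline{j}-\overline{j}_{\max}} \rfloor \;=\; 2^{\overline{j}}-1,
\]
since the subtracted term lies in $(0,1]$. Both last bits being $0$, both agents go \dLeft throughout the last super-block, yielding the required common run of $\ge cn$ rounds. The ``both directions'' part will follow at once: since each $\overline{S(\cdot)}$ contains both a $0$-bit and a $1$-bit, phase $J$ contains, for each agent, a block of $\ge cn$ \dLeft rounds and a block of $\ge cn$ \dRight rounds.

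To close the argument I will simply bound the end of phase $J$: using $2^{\overline{j}_{\max}}\le 2L$ and $2^{\lceil\log_2(cn)\rceil}\le 2cn$, phase $J$ ends by round $2^{J+1}\le 8Lcn < 32(len(ID)+3)\,c\cdot n + 1$, which leaves substantial slack in the stated bound. The one genuinely delicate point is the super-block alignment when $\overline{j}<\overline{j}_{\max}$, where the blocks of the two agents are of different sizes; once the floor calculation above is verified, everything else is routine bookkeeping.
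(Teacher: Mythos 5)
Your deadline arithmetic and the super-block alignment computation are essentially sound (though note that $len(ID)\geq len(ID')$ gives $\overline{j}\geq\overline{j}'$, so it is agent $ID$, not $ID'$, whose bit-blocks align exactly with the super-blocks, and the floor computation is what handles $ID'$). The genuine gap is in the central step: ``both last bits being $0$, both agents go \dLeft throughout the last super-block, yielding the required common run.'' This lemma lives inside Algorithm {\sc StartFromLandmarkNoChirality}, where there is \emph{no chirality}: \dLeft refers to each agent's private orientation. If the two agents happen to have opposite orientations, a round in which both read bit $0$ is a round in which they move in \emph{opposite} global directions, and your last super-block witnesses no common-direction run at all. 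A tell-tale symptom is that your proof never uses the hypothesis $ID\neq ID'$; without that hypothesis the same-direction claim is genuinely false (two agents with equal IDs and opposite orientations have identical bit sequences and never move in the same global direction), so any proof that ignores it must be broken.

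The paper's proof is built precisely around this issue: working at phase $\overline{j}$, it exhibits \emph{two} indices, an index $y$ where the two strings $d(ID,\overline{j})$ and $d(ID',\overline{j})$ agree (the trailing $0$ --- essentially your observation) and an index $x$ where they \emph{disagree}; if the agents' orientations coincide, the round at $y$ gives a common global direction, otherwise the round at $x$ does. Producing $x$ is the only place where $ID\neq ID'$ enters, via a case analysis: if $d(ID',\overline{j})=\overline{S(ID')}$ (no duplication) the two distinct padded strings differ directly, while if $ID'$'s string is duplicated, all of its runs have length at least $2$, so it cannot contain the substring $101$ with which every $S(ID)=10\circ b(ID)\circ 0$ begins. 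Stretching both indices in later phases then yields runs of length $2^{j-\overline{j}}\geq c\cdot n$ of the same global direction in either chirality case. To repair your argument you would need to add, alongside your ``both read the trailing $0$'' super-block, a super-block on which the two bit sequences \emph{differ} --- i.e., essentially the paper's index-$x$ analysis. The second half of your proof (each agent moves in both directions for $\geq c\cdot n$ rounds, from the presence of both a $0$ and a $1$ in each padded string) is correct as stated and matches the paper's, since that claim is per-agent and insensitive to orientation.
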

\begin{proof}
By definition $\overline{j}=\lceil \log(len(ID)+3)\rceil$. Phase $\overline{j}$ starts at round $r < 2^{\log(len(ID)+3)+2} < 4(len(ID)+3)$.

Consider $d(ID,\overline{j})$ and $d(ID',\overline{j})$. We now show that there exist two indices $x,y$ such that  $(d(ID,\overline{j}))_x$  $\neq (d(ID',\overline{j}))_x$ and $(d(ID,\overline{j}))_y = (d(ID',\overline{j}))_y$. It is easy to verify that $y=len(d(ID,\overline{j}))$, by construction since  $d(.)$ always terminate with $0$. For the index $x$ we consider two cases:
\begin{itemize}
\item $d(ID',\overline{j})=\overline{S(ID')}$: If  $len(S(ID))> len(S(ID'))$ then $x$ is the index of the first bit of $d_s(ID,\overline{j})$
different from zero. If $len(S(ID))=  len(S(ID'))$, since by assumption $ID \neq ID'$  index   $x$ exists.
\item $d(ID',\overline{j}) \neq \overline{S(ID')}$: By construction,   $d_s(ID',\overline{j})$ is composed by sequences of equal bits of length at least two. Also by construction, the first three bits of $S(ID)$ are  $101$. This means that the substring $101$ cannot be contained in $d(ID',\overline{j})$, and it is contained in $d(ID,\overline{j})$; this implies that index $x$ exists.
\end{itemize}
If the agents agree on the direction,   they will have the same direction in the round corresponding to index $y$, otherwise they will have the same direction in the round corresponding to index $x$.

In phase $j > \overline{j}$, by construction, we have a sequence  of rounds where agents have the same direction of length at least $2^{j -\overline{j}}$. We have $2^{j -\overline{j}} > c \cdot n$ when $j > \log(len(ID)+3)+\log(c\cdot n)+3$. We reach the end round of this phase by $r= \sum_{i=0}^{\log(len(ID)+3)+\log(c\cdot n)+4}2^i \leq 32((len(ID)+3)c \cdot n)$. The last statement of the lemma derives directly by the presence of $1$ and $0$ in each possible $S(ID)$. 

\end{proof}

\begin{theorem}\label{thm-StartFromLandmarkNoChirality}
Algorithm {\sc StartFromLandmarkNoChirality} allows  two anonymous agents without  chirality starting from the landmark
 to explore  a 1-interval connected ring with a landmark  and to explicitly terminate in $O(n\log(n))$ time.
\end{theorem}
\begin{proof}
First note that if the agents catch each other,  as shown in the proof of Theorem~\ref{thm:unknownNWithChirality}, they will explore the ring and terminate in $O(n)$ time after the moment they catch; hence, in the remainder of the proof,  we deal with the case when  the agents never catch each other.
\begin{figure}
\begin{center}
\includegraphics[scale=0.5]{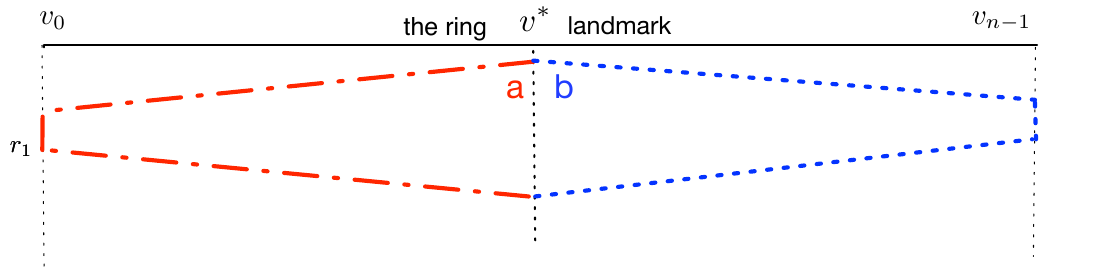}
\end{center}
\caption{Termination from state \sAtLandmark.\label{fig:termatlandmark}}

\end{figure}
Next note that, if the agents meet at the landmark and terminate in state \sAtLandmark, they must have bounced from the same edge and the ring has been explored; this is because they started from the landmark and returned at the same time while both were blocked exactly once; see Figure \ref{fig:termatlandmark}.
However, when two agents meet in the landmark and one terminates,  to ensure that the other is in state \sAtLandmark
the following synchronization step is needed: when an agent enters   state \sAtLandmark and sees also the other agent, it waits one round more in the node without moving.
If the other agent does the same they both terminate.
This obviously happens in the aforementioned case, i.e. when the agents bounced on the same edge and reach the landmark at the same time, entering both in the same round in state \sAtLandmark, thus correctly identifying the exploration of the ring.
The same cannot happen if one of the agents is not in the \sAtLandmark state; in this case only one of the agent will wait and the other will either leave the node, or it will enter a port, thus preventing a possible incorrect termination.

Third, observe that, by   time $3n-1$, either an agent knows $n$ (and terminates in $O(n\log(n))$ time from \sHappy state, or it knows its own ID. Note that IDs are bounded from above by $n^3$, since each $k_i$ is at most $n$, which implies $len(ID) \leq 3 \lceil \log(n) \rceil$.

Consider now the case that at time $3n-1$ an  agent (say $a$) does not know its ID (and hence since time $3n-1$ knows $n$), while the other ($b$) knows its ID but does not know $n$.  Agent $b$ therefore repeatedly switches its direction in state \sReverse, while   agent $a$ moves in the same direction. Note  that by  Lemma \ref{lemma:direction}, by time $32((3 \lceil \log(n) \rceil +3)5 \cdot n)+1$,  agent $b$ has moved to the $\mathit{left}$ and $right$ direction for a sequence of rounds of length at least $5n$, in one of the two both $a$ and $b$ move in the same direction. As at least  one agent makes progress in each of those time steps, while (by assumption) they don't catch each other, $b$ must have moved for at least $2n$ time units. This means that $b$  learns $n$ and eventually terminates as well.

The final case to consider is when both agents know their IDs, but do not    know  $n$. Note that if the agents have the same values of $k_1$ and $k_2$, they must have covered the whole ring and at least one of them will have $k_3\neq 0$. This means that the agents necessarily have different IDs, since if they had the same values of $k_1$ and $k_3\neq 0$, they would have terminated in \sAtLandmark state, see Figure \ref{fig:termatlandmark}.

Since the IDs are different, by Lemma \ref{lemma:direction}, by round $32((3\lceil \log(n) \rceil+3)5 \cdot n)+1$ there has been a time segment of length $5n$ in which both agents were moving in the same direction. Thus, either they catch each other, or both learnt $n$ and terminated thereafter. 
\end{proof}

\begin{figure*}
\footnotesize

\begin{framed}
\begin{algorithmic}
\State States: \{{\sf Init}, {\sf AtLandmark}, {\sf InitL}, {\sf FirstBlock}, {\sf FirstBlockL}, {\sf AtLandmarkL}, {\sf Ready}, {\sf Reverse},  {\sf Bounce}, {\sf Return}, {\sf Forward}, {\sf Terminate}, {\sf BComm}, {\sf FComm}\}.
\AtState{Init}{}
    \State $dir\gets\dLeft$, $k_1\gets 0$, $k_2\gets 0$, $k_3\gets 0$
    \State \Call{LExplore}{$dir \,|$ $n$ is known: \sHappy; $\vBtime > 0$: \sFirstBlock;  \pCatches: \sBounce; \pCatched: \sForward;}
\AtState{FirstBlock}
    \State $dir\gets\dRight$, $k_1\gets \vTtime$
    \State \Call{LExplore}{$dir\,|$ $n$ is known: \sHappy; $isLandmark$: \sAtLandmark; $\vBtime > 0$: \sReady; \pCatches: \sBounce; \pCatched: \sForward}
\AtState{AtLandmark}
        \State $k_3\gets \vEtime$
    \If {both agents are at the landmark}
        \State Wait one round
        \If {both agents are at the landmark}
            \State go to state {\sf InitL}
        \EndIf
        \EndIf
        \State \Call{LExplore}{$dir\,|$ $n$ is known: \sHappy; $\vBtime > 0$: \sReady; \pCatches: \sBounce; \pCatched: \sForward}

\AtState{$S \not\in \{$Init, FirstBlock, AtLandmark$\}$}
    \State The same as in Algorithm StartAtLandmarkNoChirality.
\END
\end{algorithmic}
\caption{Algorithm {\sc LandmarkNoChirality} \label{alg-LandmarkNoChirality}}
\end{framed}
\end{figure*}

\paragraph{Arbitrary initial  positions. }

Algorithm {\sc StartAtLandmarkNoChirality} almost works also in the case of agents starting in arbitrary  position.
The only failure would be due to the fact that,
 when the agents meet in the landmark while  establishing
 $k_1$ and $k_2$, it does not necessarily mean that they have already explored the ring.
The modification to introduce  is not to terminate in this case, but to reset and start a new instance in state \sInitL, executing algorithm {\sc StartAtLandmarkNoChirality}, as now the agents are indeed starting at the landmark.
If the agents do not meet at the landmark, then their values of $k_3$ are different and the algorithm works using the same arguments.
The complete pseudocode is in Figure \ref{alg-LandmarkNoChirality}. Since this adds at most $O(n)$ to the overall time, we obtain the following theorem.

\begin{theorem}\label{thm:LandmarkNoChilarity}
Algorithm {\sc  LandmarkNoChirality}  allows  two anonymous agents 
without  chirality to explore  a 1-interval connected ring with a landmark and to explicitly terminate in $O(n\log(n))$ time.
\end{theorem}

\section{Ring exploration in ${\cal SSYNC}$ \label{ssync}}

In this section we investigate the  exploration problem  when the system is {\em semi-synchronous}. The complexity measure we consider in this case is the total number of edges traversed by the agents.
As in Section \ref{sync},   \vEsteps denotes the total number of  successful moves performed by the agent since   procedure \Explore has been called, \vTnodes denote the total number of nodes that the agent perceived to have explored since the beginning of the protocol,
 \vBtime denotes the number of consecutive rounds the agent has been currently waiting in a port,
 and \pCatches is the predicate denoting that the agent is in the node and  the other agent is observed on a port (in the moving direction of the first agent).
 These definitions hold for all the algorithms of the ${\cal SSYNC}$ section.

\subsection{Impossibility of Exploration in {\sf NS}}
Let us begin by showing an intuitive result for the {\sf NS} model:


\begin{theorem}\label{th:impNS}
In the {\sf NS} model, exploring the ring is impossible with any number of agents,  even if the ring and the agents are not anonymous and there is
chirality.
\label{nsimpossible}
\end{theorem}
\begin{proof}
Consider a non-anonymous ring  where $k >1$ non-anonymous agents are located at  arbitrary nodes, with at least a node without agents; and 
let   ${\cal A}$ be an arbitrary exploration algorithm.  Starting with $t=0$,
let  $A(t)$ denote the set of agents that, according to ${\cal A}$, if active at time $t$  
would want to move to 
a neighbouring node;  
let $P(t)$  denote the set of agents that instead would not move;
and let $first(t)\in A(t)$ be the agent in $A(t)$ that has
not been active the longest, where ties are arbitrarily broken.

Consider now the following  agents and link activation scheduler:  at time $t$ it activates only $P(t)$ and $first(t)$, and 
it removes the  edge on which $first(t)$ would move. Hence, no agent will move at time $t$. 
By repeating this process and by observing that this  scheduler is indeed fair (it activates every  agent infinitey often),
the nodes initially without an agent will never be visited.
\end{proof}

 Notice that Theorem \ref{th:impNS} would hold  even if the agents were equipped with  wireless communication.

%

Motivated by this impossibility result, we now examine the other ${\cal SSYNC}$ models.

\subsection{Exploration in {\sf PT}}


\subsubsection{{\sf PT}: Impossibility of Exploration by Two Agents Without Chirality}
We begin our investigation of the {\sf PT} Model by showing that, without chirality, two agents cannot explore the ring, even with precise knowledge of the network size and with the presence of a landmark.

\begin{theorem}\label{twonotenoughpt}
In the {\sf PT} model without chirality two anonymous agents are not sufficient  to explore a ring of size $n \geq 5$. The result holds even if there is a distinguished landmark node and the exact network size is known to the agents.
\end{theorem}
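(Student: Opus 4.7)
The proof will use the PS/EFD dichotomy just introduced before the lemma. I begin by having the adversary fix starting nodes: place agent $a$ at a node $u$ (moving initially toward $u'$) and agent $b$ at a node $v$ (moving initially toward $v'$) so that $u, u', v, v'$ and the landmark are five distinct nodes, which is always possible when $n\ge 5$.

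The impossibility then splits into two cases according to the behaviour each agent exhibits in the trapping scenario from just above. In the first case, both agents lie in the Perpetual Switching regime, and Lemma \ref{noexplorationperpetualchange} applies directly, producing an adversary that keeps both agents in $\{u, u', v, v'\}$ forever, so the landmark (and any ring node outside these four) is never visited and the assumed exploration algorithm cannot be correct.

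In the remaining case, at least one agent, without loss of generality $a$, lies in the Eventually Fixed Direction regime. My plan is to construct a two-stage adversary. Stage 1 activates only $a$ while $b$ is kept asleep in the interior of $v$ (so passive transport cannot move $b$); maintaining the scenario assumptions for $a$, the EFD property forces $a$ into a state where it indefinitely waits on a single port $p_a$, which must lead outside $\{u,u'\}$ since the internal edge is always present and would otherwise be traversed. Stage 2 then alternates activations between $a$ and $b$ (respecting SSYNC fairness). The single blocked edge per round is assigned to whichever agent is currently in an ``about to leave'' state, and the other agent is arranged to sleep either in a node interior or on an internal port whose passive crossing merely shuffles it within its two-node segment. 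The behaviour of $b$ in Stage 2 is handled by the corresponding sub-strategy: the adversary from Lemma \ref{noexplorationperpetualchange} if $b$ is PS, or an analogous ``wait for commitment and then freeze'' if $b$ is EFD.

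The step I expect to be the main obstacle is the sub-case where both agents are EFD with outside commitments, because naively keeping both external edges blocked every round violates the 1-interval-connected budget. My plan for this sub-case is to exhibit an explicit interleaved schedule of activations and removals that exploits the finite memory of the agents: since each agent's behaviour on its committed port is a short deterministic cycle, the adversary can always place one agent in a harmless configuration (interior or internal port) on the round when it blocks the other agent's external edge, and symmetrically on the next round. Establishing that such a schedule can be sustained forever, so that $\{u,u',v,v'\}$ is never left, is the technical crux; once this is done, the unvisited landmark (and other nodes) contradict the supposed correctness of the algorithm.
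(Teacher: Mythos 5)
Your handling of the Perpetual Switching case is fine and matches the paper, but the proof fails exactly where you flag the ``technical crux,'' and the plan you sketch for it cannot be repaired in the form you propose. If both agents are Eventually Fixed Direction and have committed to indefinite waits on two \emph{distinct} external edges, there is no ``harmless configuration'' in which to park either of them: in the {\sf PT} model an agent sitting on an external port crosses the edge in \emph{every} round in which that edge is present --- actively if it is activated, passively if it is asleep. So on every round each of the two committed edges must be absent, which needs two simultaneous removals and violates 1-interval connectivity; no interleaving of activations helps, and the finite-memory/``short deterministic cycle'' idea does not apply to an agent that, by the EFD definition, is no longer cycling but waiting on a single port. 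The missing idea, which is how the paper closes this case, is to use the \emph{absence of chirality} together with the adversary's freedom in choosing the starting nodes and the agents' initial orientations: since neither agent ever enters $u''$ or $v''$ in the trapping scenario, the adversary can arrange $u^{*}=v''$ and $v^{*}=u''$, i.e., make the two committed waiting edges be \emph{one and the same} edge $(u^{*},v^{*})$. Then the staging you describe (drive $a$ to its last internal crossing while $b$ sleeps in a node interior, park $a$ on the internal edge where passive transport only shuffles it inside $\{u,u'\}$, drive $b$ to its committed port, reactivate $a$) needs at most one blocked edge per round throughout, and at the end blocking the single edge $(u^{*},v^{*})$ forever defeats both agents simultaneously.

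A secondary inefficiency: the mixed case (one agent PS, the other EFD) that you build a two-stage adversary for never arises. The agents are anonymous and execute the same deterministic algorithm, and the trapping scenario is defined identically (up to renaming) for both, so their behaviour class is a property of the algorithm alone: either both are Perpetual Switching, in which case Lemma~\ref{noexplorationperpetualchange} applies, or both are Eventually Fixed Direction, which is the case above. Note also that your mixed-case construction would hit the same budget obstruction (keeping the EFD agent frozen on its committed external port while running the lemma's alternating adversary for the PS agent again requires removing the committed edge on rounds when a second removal may be needed), so dropping it via the anonymity observation is not just a simplification but necessary for your outline to go through.
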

\begin{proof}

By contradiction, let ${\cal A}$ be a solution algorithm. Let $a$ and $b$ be the two agents.
Assume that:  agent $a$ starts at  node $u$ and, according to ${\cal A}$, it  would move  towards $u'$;
     $b$ starts at node $v$ and would  move towards $v'$; and $u$, $u'$, $v$, $v'$ and the landmark are all different.

The algorithm is executed by the agents against an adversary that choses which agent is active in which round,
decides the local orientation of the agents, as well as the topological structure of the ring.
In particular, it does not fix the link relationship between the nodes where $a$ acts ($u, u'$) and  those
where $b$ operates ($v, v'$), until necessary during the execution of the algorithm ${\cal A}$.

The adversary  applies the following {\em alternation strategy}:
It activates only $a$ and keeps it active until
$a$ tries to move to  a node other than    $u$ and  $u'$.
 At this point, the adversary  blocks that edge and  keeps it blocked until $a$ either switches direction, or decides to permanently wait on this port at node  $u^*\in\{u,u'\}$. Notice that one of these two events has to take place.
  When this happens,   the adversary makes $a$ passive,  activates $b$,
  and acts with $b$  exactly in the same way as it did with agent $a$. In other words,
  it keeps $b$ active until  $b$  tries to move  to  a node other than    $v$ and  $v'$ and then
it   blocks that edge and   keeps it blocked until $b$ either switches direction, or decides to permanently wait on this port at node  $v^*\in\{v,v'\}$.

Note that, after the execution of the alternation strategy, since the agents are anonymous and there is no chirality, whatever decision $a$ made (switch or permanently wait), the same is  taken by agent $b$.

Let us consider first  the case when   both agents decide to switch direction.
 In this case, the adversary will continue,  as long as the two agents decide to switch direction, to execute the alternating strategy, never letting the agents move outside the four nodes $u$, $u'$, $v$, $v'$. This means that, if they never decide to wait permanently,  the rest of the ring will not be explored, contradicting the correctness of ${\cal A}$.

Therefore, within finite time, after some executions of the alternating strategy,
  they both must decide to wait permanently, $a$ at node  $u^*$
to go to node   $u''$, and $b$ at node  $v^*$
to go to node   $v''$.
 When this happens,  the adversary  fixes the topology of the ring  by setting
$u'' = v^*$ and $v''=u^*$; it then  blocks forever  edge $(u^*,v^*)$,
and permanently activates both $a$ and $b$.
Thus the rest of the ring will not be explored, contradicting the correctness of ${\cal A}$. 

\end{proof}

As a consequence of Theorem~\ref{twonotenoughpt},  any exploration algorithm must either use chirality, or   employ more than two agents.
We will consider the two cases in the following subsections.


%

\subsubsection{{\sf PT}:  Two Agents with Chirality} \label{ptchiral2agents}

We know  (Theorem \ref{twonotenoughpt}) that two agents need chirality to explore the ring; however  chirality does not suffice
for exploration with termination  (Theorem \ref{impossibilitywithIDterminationsync}) unless the agents know an
upper bound on the ring size or  the nodes are not anonymous (i.e., there is a landmark).

Interestingly, in the  {\sf PT} model, even with both knowledge of the the ring size and  a landmark,
explicit termination of both agents is impossible
 as shown by the following theorem.

\begin{theorem}
In the {\sf PT} model,   any exploration algorithm on a ring of size $n \geq 3$ with two agents can only guarantee partial termination.
The result holds even if the size of the ring is known, a landmark node is present and there is chirality.
 \label{parttermination}
\end{theorem}
\begin{proof}
%
%

By contradiction, let ${\cal A}$  be a  two-agents exploration algorithm   for  {\sf PT}, where both
agents always terminate in every execution.
Let $a,b$ be the two agents, and let them start at the same location, the landmark node $v_s$, both initially asleep.
First notice that, if only one agent wakes up and the other remains asleep,
eventually the awake agent has to start exploring the ring (otherwise, by alternating the sleeping of $a$ and $b$, they will never move).

Let $E(a)$ (resp. $E(b)$) be the unfair execution of ${\cal A}$ in which $a$ (resp. $b$) forever sleeps,  
$b$ (resp. $a$) is active and explores the ring, 
and no edge is removed.
Observe that,  since  the agents are anonymous, 
 then $a$ does in $E(b)$ the same moves as $b$ does in $E(a)$. 

Let us now examine the four  possible situations in which $b$ can find itself in $E(a)$:

%

\begin{enumerate}

\item  

	Agent $b$ terminates in some location $v_x\neq v_s$. Consider now the execution  
	$E'$ coincident with $E(a)$ until the round when $b$ terminates in  $v_x$; then
	 the adversary wakes up $a$ and  blocks it on  $v_s$; hence $a$ will never leave $v_s$. 
	 Observe that $E'$ is a fair execution;
	 since, by assumption, both agents  explicitly terminate in every fair execution,
	then    $a$  will terminate in $E'$ on node $v_s$, say  $t$ rounds after becoming awake.
	
	Consider now an execution $E''$ coincident with $E(a)$ until $b$ moves to a node $v_{y}\neq v_s$;
	when  this happens
	 $a$ is woken up and blocked
	 on node $v_s$, while $b$ is kept asleep.	
	From the local view of $a$, the executions $E'$ and $E''$ are not distinguishable; hence 
	$a$  terminates in $E''$  on $v_s$  $t$ rounds after waking up. When this happens, the adversary wakes up $b$ and blocks
	it on node $v_y$. Observe that also $E''$ is a fair execution.
	As $n \geq 3$ and only $v_s$ and $v_y$ are visited in this execution, the ring will never be explored, contradicting the correctness of ${\cal A}$.

\item  
Agent $b$ terminates on node $v_s$.  
Consider now the execution  
	$E'$ coincident with $E(a)$ until the round when $b$ terminates in  $v_s$;
when this happens the adversary wakes up $a$.
Since $a$ does not know that $b$ has terminated and  $b$'s behaviour is indistinguishable
from being asleep, $a$  has to leave $v_s$. When it  reaches a neighbouring node $v_y$,
the adversary blocks  it there. Notice that $E'$ is a fair execution.
Consider next the execution  
$E''$ coincident with $E(b)$ (i.e., $a$ is made active while $b$ is asleep) until 
$a$  leaves $v_s$; when it  reaches  a neighbouring node $v_y$, also in this case
the adversary blocks  it there.

Observe that  $a$ cannot distinguish between executions $E'$ and $E''$; hence $a$ will take the same decisions in both.
 If it terminates within finite time, then, when that happens, in  execution $E''$ the adversary awakes $b$
 and  perpetually blocks  it  on  $v_s$;
 hence the ring will never be explored. Since $E''$ is a fair execution, this contradicts the correctness of ${\cal A}$.
 If, on the other hand, $a$ does not terminate,   execution $E'$   contradicts the assumption that  ${\cal A}$
always guarantees explicit termination.

\item 

Agent $b$ does not terminate, and it visits $v_s$ only finitely many times. Hence, there is a round $r$ after which $b$ does not visit $v_s$ 
($b$ can be waiting at some node $v_x\neq v_s$; it can also be perpetually moving among a set of nodes not containing $v_s$).

Consider now the execution  
	$E'$ coincident with $E(a)$ for the first $r$ rounds;  the adversary then  wakes  $a$ at round $r+1$ and blocks it on $v_s$ forever; notice that $E'$ is fair.
 By assumption, both agents terminate in ${\cal A}$. In particular, this means that $a$ terminates on $E'$ without leaving $v_s$. 
 
 Consider now an execution $E''$ coincident with $E'$   until the first time $b$ visits a node $v_y\neq v_s$. 
 From that moment on, $b$  becomes asleep in $v_y$, while $a$ is activated and blocked on $v_s$. 
 From the point of view of $a$ this execution is undistinguishable from $E'$, therefore $a$ terminates in $v_s$. After the termination of $a$, 
 the adversary wakes up   $b$  in $E''$ but  perpetually blocks it; notice that $E''$ is fair.
 As only $v_s$ and $v_y$ have been explored in this execution, ${\cal A}$ fails to explore the ring.

\item 
Agent $b$ does not terminate, and it visits $v_s$ infinitely often, either by reaching it and no longer moving  or by moving inside an interval  $I$ of nodes containing $v_s$. 
Consider now the execution $E'$ coincident with $E(a)$ until $b$ 
returns to  $v_s$; the adversary then puts $b$ to sleep 
and  wakes up $a$ until it 
returns to  $v_s$. 

Now, the adversary will  repeate the whole process,
alternating forever the sleep of $a$ and $b$:
wake up $b$ (resp. $a$) and put $a$ (resp. $b$)  to sleep, util the next round in which $b$ (resp. $a$) is at $v_s$ (recall that  it might never leave it).
Notice that  $b$ is unable  to distinguish this fair execution $E$ from  $E(a)$, and similarly $a$ is unable  to distinguish it from  $E(b)$;
hence they will never terminate in this execution.

\end{enumerate}
\end{proof}

\noindent As a consequence, the best that can be achieved is partial termination.

In the rest of this section, we show that the knowledge of  an upper bound or the presence of a landmark is sufficient
for the two agents to explore the ring;  with respect to termination, we achieve a strong partial termination,
 with {one} agent always explicitly terminating, and the other  either terminating or  no longer moving.

\paragraph{A. Chirality and Known Upper Bound}
\ \\
First consider the case when an upper bound $N$ on the ring size is known to the agents.
We present an algorithm, {\sc {\sf PT}BoundWithChirality}, shown in Figure \ref{alg-ptbchirality}, for exploration of dynamic ring by two agents with chirality and knowledge of an upper bound.
Both agents start moving $left$. If an agent
 finds a blocked edge with the other agent waiting in the left port,
 it  enters state   {\sf Bounce}, reverses direction  and starts moving $right$.
If the agent in   state   {\sf Bounce}
 finds a missing edge before traversing $N$ edges,  it reverses direction again  (becoming {\sf Reverse});   that agent might   be alternating   {\sf Bounce}  and  {\sf Reverse} state several times.
 An agent terminates    upon discovering  it has traversed $N$  consecutive edges in a given direction.  Additionally,  if  in  state {\sf Reverse}   the  agent
  catches  the other agent
   at a distance smaller than that in the previous catch,
   it means the two agents have crossed and they can safely terminate.

\begin{figure}
\footnotesize
\begin{framed}
\begin{algorithmic}

\State States: \{{\sf Init}, {\sf Bounce}, {\sf Reverse}\}.
\State $\mathit{leftSteps} \gets \bot$
\State $\mathit{rightSteps} \gets \bot$

\AtState{Init}{}
    \State \Call{Explore}{\dLeft$|$ \vTnodes $\geq N$: \sTerminate,  \pCatches: \sBounce}
\AtState{Bounce}{}
	\State $\mathit{leftSteps} \gets$\vEsteps
	\If{ $(\mathit{rightSteps} \neq \bot) \land  (\mathit{rightSteps} \geq \mathit{leftSteps})$ }
		\State {\sf Terminate}
	\EndIf
    \State \Call{Explore}{\dRight$|$ \vTnodes $\geq N$: \sTerminate, \vBtime $> 0$: \sReverse }
\AtState{Reverse}{}
	\State $\mathit{rightSteps} \gets$\vEsteps
    \State \Call{Explore}{\dLeft$|$ \vTnodes $\geq N$: \sTerminate, \pCatches: \sBounce}

\END
\end{algorithmic}
\caption{Algorithm {\sc {\sf PT}BoundWithChirality} \label{alg-ptbchirality}}

\end{framed}
\end{figure}

\begin{theorem}\label{PTupperbound}
Two agents executing Algorithm\\ {\sc {\sf PT}BoundWithChirality} in the {\sf PT} model with   chirality  and a known upper bound $N$ on the size of the ring will explore the ring using at most $O(N^2)$ edge traversals. Furthermore, one agent explicitly terminates, while the other  either terminates or it waits perpetually on a port.
\end{theorem}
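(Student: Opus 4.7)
The plan is to split the execution into two phases: the initial \sInit phase, in which both agents move leftward in tandem, and the subsequent bounce-reverse dynamics initiated after one agent first catches the other. Chirality forces both agents to move in the same absolute direction, and since the adversary removes at most one edge per round while the two agents occupy distinct ports in every round they are at distinct nodes, at least one agent advances leftward each round. Hence within $O(N)$ rounds either some agent accumulates $N$ successful leftward moves and terminates directly from \sInit having traversed the full ring (since $N\ge n$), or one agent, which we call $a$, catches the other agent $b$ at some node $u^*$; in the former subcase the theorem is immediate. In the latter, $a$ transitions to \sBounce while $b$ remains in \sInit (which has no \pCatched branch), and from this point we track $a$'s position along the ring as a signed displacement from $u^*$, with leftward counted as positive.

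Agent $a$ then performs bounce-reverse cycles: in cycle $i$, it moves rightward by $R_i$ steps until the adversary blocks an edge and triggers \sReverse, then leftward by $L_i$ steps until it catches $b$ again and triggers \sBounce. Reaching $N$ successful moves within any single \Explore call terminates $a$ with the ring already traversed. The algorithm's counter \textit{tSL} equals $L_a + \sum_{j\le i}(L_j - R_j)$, which is precisely $a$'s cumulative net leftward displacement from its starting node, so the condition $\mathit{tSL}\ge N$ forces a wraparound of the ring and implies full exploration as well.

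The critical nontrivial termination check triggers when the just-completed rightward excursion $R_i$ is at least the just-completed leftward excursion $L_i$. Under the invariant that $b$ moves only leftward while in \sInit, $b$'s position at the moment of the re-catch is at most the previous catch point $p_{i-1}$ on the unrolled line, whereas $a$'s landing position after $R_i$ right-steps and $L_i$ left-steps is $p_{i-1} + R_i - L_i \ge p_{i-1}$. The two positions cannot coincide on the line unless $R_i = L_i$ and $b$ is stationary throughout the cycle; but $b$ can be kept stationary only if the adversary blocks $b$'s edge in every round of the cycle, which leaves no round available to block $a$ and therefore prevents $a$ from ever triggering $\vBtime > 0$ and entering \sReverse to begin with. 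Consequently the re-catch must be accompanied by a wraparound of $a$'s trajectory around the ring, completing exploration. Combined with the $N$-consecutive-move terminations and the $\mathit{tSL}\ge N$ case, every halting event of $a$ coincides with full ring coverage.

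For the complexity bound, each non-terminating cycle strictly increases \textit{tSL} by $L_i - R_i \ge 1$, capping the number of cycles at $N$; since each cycle contributes at most $2N$ moves by $a$, we obtain the claimed $O(N^2)$ bound on $a$'s edge traversals. Agent $b$ contributes at most $N$ moves while in \sInit, and if it ever catches $a$ (which requires $a$ to be blocked on its left port in \sReverse) it transitions to its own \sBounce and executes an analogous $O(N^2)$-bounded trajectory, so the total remains $O(N^2)$. The main obstacle is handling the interleavings when $b$ transitions out of \sInit, which breaks the simple monotonicity invariant for $b$; we resolve this by noting that once $b$ is moving rightward, $a$'s \pCatches predicate can no longer fire against it, so $a$ simply runs to its $N$-step cap in \sReverse and terminates with full exploration. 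Finally, for $b$'s final fate, either $b$ terminates through one of the mechanisms above, or from some round onward the adversary perpetually blocks $b$'s outgoing edge, pinning $b$ to a port indefinitely, exactly as the theorem statement allows.
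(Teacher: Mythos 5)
Your proof has a genuine gap at its crux, namely the analysis of the nontrivial termination check $\mathit{rightSteps}\geq \mathit{leftSteps}$. You rule out the case ``$R_i=L_i$ and $b$ stationary'' by arguing that pinning $b$ requires the adversary to block $b$'s edge in every round of the cycle, ``which leaves no round available to block $a$,'' so that $a$ could never trigger $\vBtime>0$ and enter \textsf{Reverse}. This inference is false: a single missing edge blocks \emph{both} of its endpoints. With $b$ waiting on edge $e$ and $e$ kept missing in every round, agent $a$ can travel right around the entire ring, arrive at the \emph{other} endpoint of $e$, be blocked there (so $\vBtime>0$ fires with $R_i=n-1$), reverse, return with $L_i=n-1$, and re-catch $b$ with $R_i=L_i$ --- a perfectly realizable execution in which the check fires. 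Note that in this execution $a$'s net displacement on your unrolled line is zero, so your dichotomy (``coincide on the line $\Rightarrow$ impossible; otherwise wraparound'') misses it entirely. This case is precisely where the paper's proof makes essential use of the \textsf{PT} assumption: an agent can remain on a port only if its edge is missing in every round (whether active or asleep it is otherwise transported), and since at most one edge is missing per round, the edge on which the bouncing agent was blocked must be the \emph{same} edge pinning the stationary agent --- which forces the bouncer to have circumnavigated the ring, so the ring is explored and termination is correct. Your conclusion happens to be true, but only because you incorrectly declared a possible execution impossible; the argument as written does not establish it.

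Two secondary problems. First, your round-based progress claims (``at least one agent advances leftward each round,'' ``within $O(N)$ rounds one agent catches the other'') are invalid in ${\cal SSYNC}$: the adversary controls activation, and only fairness (each agent activated infinitely often) is guaranteed --- this is exactly why the section measures complexity in edge traversals rather than time. The paper's liveness argument is event-based instead: if an edge is perpetually removed, the $\mathit{rightSteps}\geq\mathit{leftSteps}$ check eventually fires; an agent never blocked reaches $N$ steps; and in the remaining case each reiteration strictly increases $\mathit{tSL}$, so some counter eventually reaches $N$. Second, your handling of the regime after $b$ leaves \textsf{Init} is asserted rather than proven (``executes an analogous $O(N^2)$-bounded trajectory''), whereas the paper closes this off by observing that once one agent terminates the other can no longer bounce right, hence it either accumulates $N$ left moves or waits perpetually on a port. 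Your cycle-counting for the $O(N^2)$ bound ($\mathit{tSL}$ increases by at least $1$ per non-terminating cycle, at most $N$ cycles of $O(N)$ moves each) does match the paper's complexity argument.
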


\begin{proof}
We first  prove exploration. Note that, by definition of $\vTnodes$,  if it exceeds $N$, the ring has been explored. The only non-trivial case to consider is   termination due to the condition: $\mathit{rightSteps} \geq \mathit{leftSteps}$.

Let us first consider the case when where agent $b$ terminates
after it bounces on agent $a$ blocked on edge $e$ at round $r$,  then changes direction on a missing edge and finally catches $a$ again terminating.

If $a$ stayed in the same port during all this process,   then $b$ would have visited  the other side of   $e$
 (otherwise the  {\sf PT} condition would have ensured passive transport of $a$). In such a case  the ring has been explored.

If $a$ moved after round $r$, two different scenarios are possible:
\begin{itemize}
\item $(i)$ $a$ crosses $b$  while $b$ is in \sBounce state, this implies that $a$ and $b$ explored the ring.

\item  $(ii)$  $a$ bounced on $b$ while $b$ was in \sReverse, $a$ traverses $e$ in state \sBounce, and $a$ goes back in state \sReverse. It is clear that also in this case the ring has been explored: the portion between $e$ and the node where $a$ bounced on $b$ has been explored by $b$, and the other portion by $a$.
\end{itemize}


\begin{figure}
\begin{center}
\includegraphics[scale=0.52]{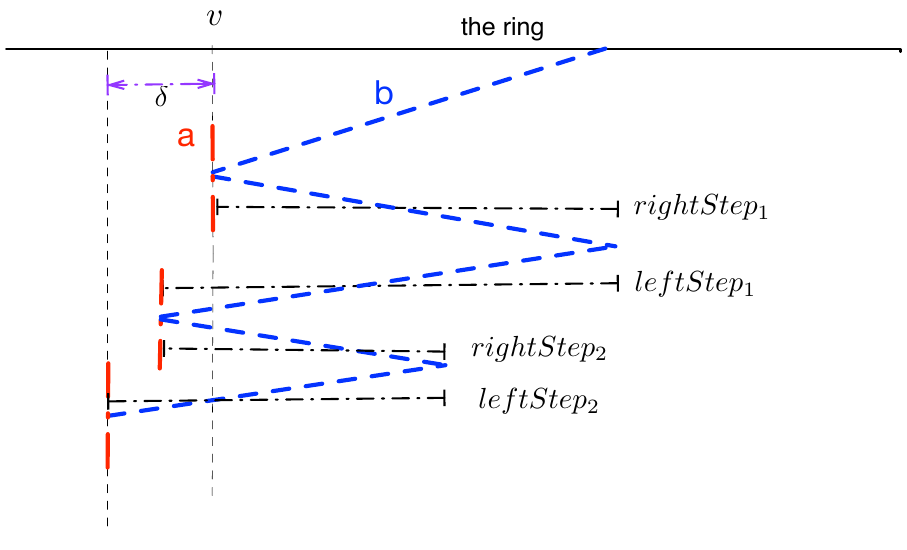}
\end{center}
\caption{Run where  $\mathit{rightSteps} < \mathit{leftSteps}$, the $\delta$ grows at each bounce-reverse of $b$. \label{fig:zigzagincreasing}}
\end{figure}

We now  have to show that at least one agent terminates. If the adversary keeps an edge perpetually removed,  eventually the algorithm terminates due to condition $\mathit{rightSteps} = \mathit{leftSteps}$. Moreover, if an agent is not blocked in its traversal, it will eventually do $N$ steps leading to termination.

 The only possibility left  to be analyzed is if $a$ is blocked on some edge $e$, $b$ bounces first on edge $e$,  then reverses direction on edge $e'$ and, when $b$ catches on $a$, $\mathit{rightSteps} < \mathit{leftSteps}$. Notice that,  when this happens, $b$ has done at least one step further to the left of edge $e$, otherwise we would have $\mathit{rightSteps} \geq \mathit{leftSteps}$ leading to the termination of $b$.

  We have that (1) the adversary cannot keep an agent sleeping, or blocked, forever, (2) the adversary cannot let an agent do $N$ steps in one direction and (3) an agent switches from state \sReverse to \sBounce only catching someone; therefore if the agents do not terminate, then $a$ and $b$ have to catch each other multiple times. As  discussed previously, for each catch the catching agent will do an additional step   to the left.  Therefore, we will eventually have $\vTnodes > N$ for one of the two.
An example is reported in Figure \ref{fig:zigzagincreasing}; in the figure we can see that after the first bounce on node $v$, the area to the left of $v$ explored by $b$  (named $\delta$ in the figure)  has to grow at least by one for each sequence of \sReverse-\sBounce: we must have $\mathit{rightSteps}_{i} < leftStep_{i}$.

  If an agent terminates, the other   cannot bounce to the right. Hence, it will either terminate due to exceeding $N$ left moves, or will be perpetually blocked on a port, and the last part of the theorem holds.

Let us now analyze the complexity of the algorithm. Observe that during one \sBounce-\sReverse phase an agent can do $O(N)$ steps. There could be at most $N$ of these \sBounce-\sReverse phases: in each of them the agent has to do an additional step left otherwise the termination condition is satisfied. Since the termination check bounds the total number of left steps by $N$, this yields $O(N^{2})$   complexity of the algorithm. 
\end{proof}

The complexity of the proposed algorithm, $O(N^{2})$, depends on the accuracy
of the upper bound $N$. We will now show that such a dependency is to a certain degree
 inevitable. In fact we prove  that $\Omega(N \cdot n)$ are indeed required by any solution protocol.
 This also means that the proposed algorithm is {\em optimal}, whenever $N=O(n)$.

\begin{theorem}\label{th:lowerboundmovespt}
In the {\sf PT} model with chirality in which the two agents know an upper bound $N$ on the ring size $n \geq 5$,
 any   exploration algorithm with partial termination requires
  at least $\Omega(N \cdot n)$ edge traversals
by the agents in the worst case.
\end{theorem}
\begin{proof}

Let ${\cal A}$ be a solution algorithm for the   {\sf PT} model with chirality and let $N$ be the known upper bound on the ring size.
The actual size $n\leq N$ of the ring will be decided by the adversary.
The adversary will operate in logical phases; in  phase $i$, it selects
 a continuous  segment $X_i$ of the ring that includes both agents;
 initially   $X_1$  has size $x \geq 5$.
In each phase, the adversary
 uses the  strategy described below.
Let $N_i$ denote the number of distinct nodes that the agents would have explored,
under this strategy,
at the beginning of phase $i$ if   $n=N$, i.e. the upper bound was tight.
We assume, without  loss of generality,   that
$N_1 = |X_1| = x$; this only makes the lower bound proof stronger.
The adversary repeats the same process as long as
 $N_i < N$.

 In  phase $i$
the adversary considers one agent at a time.
 If that agent, say $a$,
  tries to leave $X_i$, the adversary will block that edge
  and keeps it blocked until either $a$ changes direction
  (i.e., stays  within $X_i$)
  or also the other agent tries to leave $X_i$ (from opposite directions).

 Eventually both agents must  want to leave $X_i$ from
opposite directions,
otherwise the adversary can continue to prevent the agents
from leaving $X_i$ and
since $N_i <  N$, choosing $n>N_i$ would make the algorithm ${\cal A}$  fail.

When the adversary determines that also the second agent, say $b$,
would want to  leave $X_i$ should it become active (recall  the first agent is  currently
waiting on a blocked edge),
the adversary
 considers what  each agent would do, according to ${\cal A}$, if  the edge
 the agent  wants to cross is  blocked indefinitely.

Clearly,  they cannot both wait indefinitely, otherwise  the adversary,
by choosing $n=N_i$ and activating both agents at all times,
would make the two agents wait forever at the two endpoints of the same blocked edge, without either of them terminating (although they have explored the ring).

This means that  at least one of the agents,
after  a finite number of activations, would  change direction and
move towards a neighbouring node in  $X_i$.
Let $c\in\{a,b\}$ be such an agent, and $d$ be the other.
Upon this determination,
if $d=b$,   the adversary  activates $d$ (letting it move),
 and then makes $d$ passive; otherwise,  i.e. $d=a$,  it makes $d$ passive immediately, and let it move  passively on the next node.
In any case, it also  blocks the edge from which
$c$ is trying to leave $X_i$
until $c$  changes direction and moves from its current position
 $u$ to the next node $u'$ in $X_i$.
At this point, the adversary blocks the edge  $(u,u')$ and keeps it blocked until $c$
performs at least $|X_i|-1$ additional moves, possibly reaching $d$. Note that this will always happen as shown by the claim below.

\noindent {\bf Claim 1.}
{\em   If $c$ never reaches $d$ in this phase and it does not wait forever in the port of node $u'$ trying to reach node $u$, it will perform
an unbounded number of moves.}
 \begin{proof}

By contradiction, let $c$ satisfy the conditions of the claim, but perform only a bounded number of moves, without reaching $d$.
Thus, within finite time it will stop in a node indefinitely.
 When this happens, the adversary activates $d$ blocking any edge through which
it wants to move. Hence no new node will be explored; since
 $N_i <  N$,
by choosing $n=N$ the adversary would make the algorithm ${\cal A}$  fail.

\end{proof}

Further note that, during this time, agent $d$ moves (if $d=b$) or is passively
moved (if $d=a$) to a neighbour $v$  outside $X_i$.

  In   {\bf Claim 1} we   assumed  that  $c$ does not wait forever in the port of node $u'$ trying to reach node $u$.
Notice that if this happens, then we can define the set $X_{i+1}=(X_i \setminus \{u\} )\cup \{v\}$.

If $d$ is blocked traversing an edge to go outside $X_{i+1}$,  it cannot wait forever, otherwise the adversary can create a schedule and
a configuration in which both agents are forever blocked on the endpoints of edge $(u,u')$, without the possibility of termination.

An example is shown in Figure \ref{fig:imp1}: in this run we have that agent $c$ is blocked until round $r_0$, when it decides  to  change direction. When  $c$  reaches $u'$ at round $r_1$, it is forced to sleep and agent $d$ is allowed to move reaching $u$. When this happens, agents $c$ is activated again and it   tries to traverse edge $(u,u')$ at round $r_2$. At round $r_3$, also agent $d$   tries to traverse edge  $(u,u')$, and both agents keep doing that forever without terminating. Notice, that the agents only know an upper bound $N$ on the ring size.
\begin{figure}
\begin{center}
\includegraphics[scale=0.43]{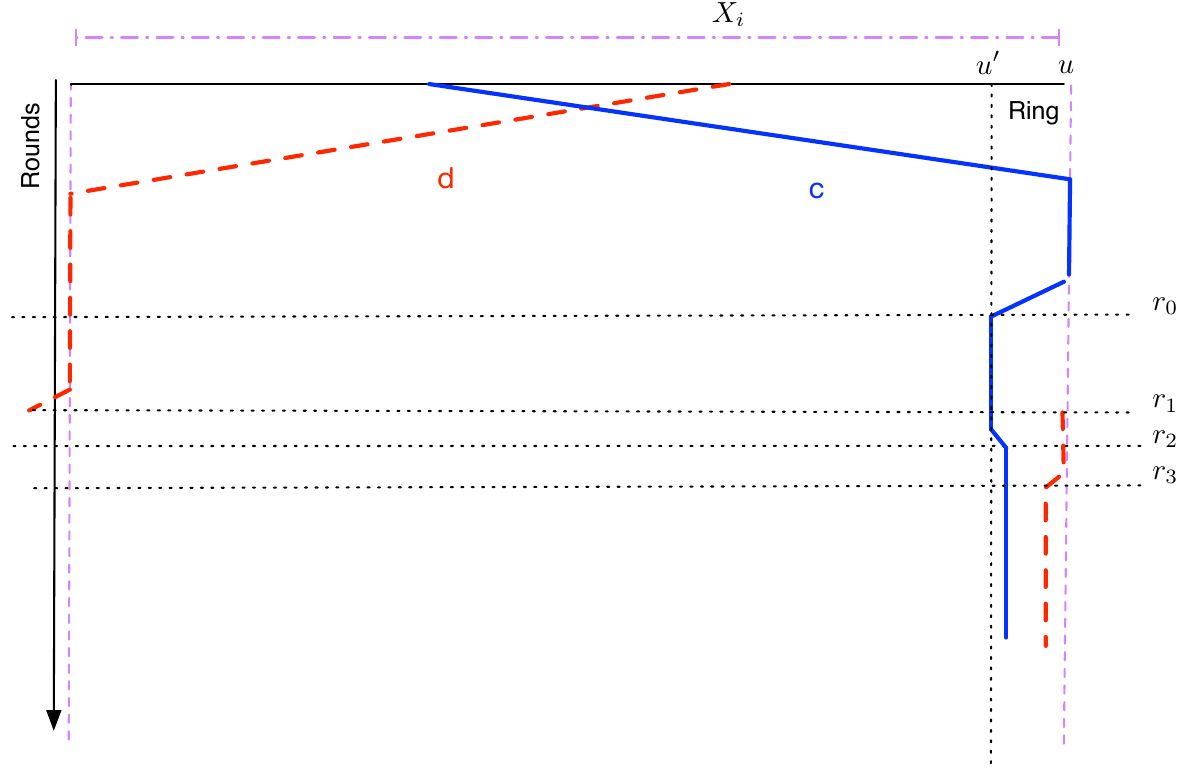}
\end{center}
\caption{Run where agents end up waiting on the same edge. \label{fig:imp1}}
\end{figure}

Therefore, agent $d$ eventually leaves node $v$ to go to the neighbouring node in $X_{i+1}$, thus the adversary can let $c$ visit node $u$, then it forces $c$ to sleep and it keeps $d$ blocked until $d$ leaves $v$.
At this point we are again in the situation where both agents are confined in the set of nodes $X_i=(X_{i+1} \setminus \{v\} )\cup \{u\}$.
If the agents keep behaving in this way,  the adversary makes them oscillating forever between the sets $X_i$ and $X_{i+1}$.

 Therefore, $c$ has to perform $|X_i|-1$ steps and eventually has to reach $d$.

 As soon as  $c$
performs  $|X_i|-1$ additional moves,
  the adversary starts the next phase $i+1$,  selecting
 $X_{i+1}$ to be the area currently  delimited by the blocked edge on one side and the passive agent on the other (a ``shift" of the former $X_i$ area in the direction of   $d$'s move); that is,
 $X_{i+1}=(X_i \setminus \{u\} )\cup \{v\}$; thus,  $| X_i |= | X_{i+1}| = x$.
 Observe that the added node $v$ could have been already visited in a previous round;
 that is    $N_i  \leq N_{i+1} \leq N_i +1$.

The adversary repeats the same process until  phase $j$ where
 $N_j= N$.
 Note that termination cannot occur before the end of this process.
In fact,  should one agent  terminate at any time during the $i-$th phase,  $i< j$ ,
then  every  attempt of the other agent to move would be blocked by the adversary,
preventing further exploration,  and the algorithm would fail on rings of size $n=N>N_i$.

 Since the explored part
  grows of at most one node in each phase   $i<j$, the total number of
  phases is at least $N- x$.  Since in each phase at least $x$ moves
  are performed by the agents, the total number of moves performed by
  the agents is at least $x (N- x) $. By choosing $x=\lceil n/2\rceil$
  the adversary ensures the claimed bound. \\

\end{proof}

\paragraph{B. Chirality and Landmark}
\ \\
Consider now the case when, instead of knowledge of an    upper bound  on the ring size, there is a landmark in the ring.

The general strategy for two agents with chirality (but without bound on the ring size)
is  essentially the same as   the one for   {\sc {\sf PT}BoundWithChirality},  where however  an agent cannot use $N$ for termination, but terminates  when it performs a complete loop around the landmark (and thus knows $n$), or  when it detects that the agents have  crossed (like in the previous algorithm).
The resulting algorithm {\sc {\sf PT}LandmarkWithChirality}
is however more efficient in terms of number of link traversals by the agents.
\begin{figure}
\footnotesize
\begin{framed}
\begin{algorithmic}

\State States: \{{\sf Init}, {\sf Bounce}, {\sf Reverse}\}.
\State $\mathit{leftSteps} \gets \bot$
\State $\mathit{rightSteps} \gets \bot$

\AtState{Init}{}
    \State \Call{Explore}{\dLeft$|$ $n$ is known: \sTerminate,  \pCatches: \sBounce}
\AtState{Bounce}{}
	\State $\mathit{leftSteps} \gets$\vEsteps
	\If{ $(\mathit{rightSteps} \neq \bot) \land  (\mathit{rightSteps} \geq \mathit{leftSteps})$ }
		\State {\sf Terminate}
	\EndIf
    \State \Call{Explore}{\dRight$|$ $n$ is known: \sTerminate, \vBtime $> 0$: \sReverse }
\AtState{Reverse}{}
	\State $\mathit{rightSteps} \gets$\vEsteps
    \State \Call{Explore}{\dLeft$|$ $n$ is known: \sTerminate, \pCatches: \sBounce}

\END
\end{algorithmic}
\caption{Algorithm {\sc {\sf PT}LandmarkWithChirality} \label{alg-ptlandmarkchirality}}

\end{framed}
\end{figure}

\begin{theorem}\label{PTlandmark}
Two agents executing Algorithm\\ {\sc {\sf PT}LandmarkWithChirality} in the {\sf PT} model  with  chirality will explore a  ring with a landmark using at most $O(n^2)$ edge traversals. Furthermore, one agent explicitly terminates, while the other  either terminates or it waits perpetually on a port.
\end{theorem}
\begin{proof}
The correctness proof is analogous to the one of Theorem \ref{PTupperbound}. Regarding complexity, the key observation is that, in a \sBounce-\sReverse phase an agent cannot do more than ${\cal O}(n)$ steps, otherwise it will loop around the landmark and terminate. Each time $\mathit{rightSteps} < \mathit{leftSteps}$ is verified by an agent,   the agent performs an additional step to the left;  if this condition is verified $2n-1$ times, that agent  has done a loop around the landmark.
Hence the   ${\cal O}(n^2)$ bound trivially follows. 
\end{proof}

Adapting the  proof  of Theorem \ref{th:lowerboundmovespt}, it is possible to show that
 the algorithm is asymptotically {\em optimal}:

\begin{theorem}
In the {\sf PT} model, any exploration algorithm for a ring of size $n$, with $n \geq 5$, with a landmark by two agents with chirality requires  $\Omega(n^2)$ edge traversals in the worst case.
\end{theorem}

\begin{proof}

Let ${\cal A}$ be a solution algorithm for exploration of a ring with a landmark
by two agents with chirality.
 Initially   the adversary locates both agents at the landmark node
   and lets them execute the algorithm until $N_1 = 5 $ nodes have been explored.
 The adversary
 operates in logical phases; let $X_i$ denote the area of the ring explored by the agents at the beginning of   phase $i$, and let $N_i=|X_i|$.
 In  phase $i$ with  $N_i <  n$, the adversary considers one agent at a time.
 If that agent, say $a$,
  tries to leave $X_i$, the adversary will block that edge
  and keeps it blocked until either $a$ changes direction
  (i.e., stays  within $X_i$)
  or also the other agent tries to leave $X_i$ (from opposite directions).
 Eventually both agents must  want to leave $X_i$ from
opposite directions:
otherwise the adversary can continue to prevent the agents
from leaving $X_i$ and
since $N_i <  n$, the algorithm ${\cal A}$ would  fail.

When the adversary determines that also the second agent, say $b$,
would want to  leave $X_i$ should it become active (recall  the first agent is  currently
waiting on a blocked edge),
the adversary
 considers what  each agent would do, according to ${\cal A}$, if  the edge
 the agent  wants to cross is  blocked indefinitely.

Clearly,  they cannot both wait indefinitely, otherwise in the ring where $n=N_i$,  the adversary,
 activating both agents at all times,
would make the algorithm fail by having the two  agents wait forever at the two endpoints of the same blocked edge, without either of them terminating (although they have explored the ring).
This means that  at least one of the agents,
after  a finite number of activations, must  change direction and
move towards a neighbouring node in  $X_i$.
Let $c\in\{a,b\}$ be such an agent, and $d$ be the other.

Upon this determination,
if $d=b$,   the adversary  activates $d$ (letting it move),
 and then makes $d$ passive; otherwise, if $d=a$, it makes $a$ passive immediately
 and let it move passively on the next node.

In any case, it also  blocks the edge from which
$c$ was trying to leave $X_i$
and keeps it blocked until $c$
performs at least $|X_i|$   moves, possibly reaching $d$.
Note that, as in the case of  Algorithm\\ {\sc {\sf PT}BoundWithChirality},
 this will always happen as shown by the   claim below.

\noindent {\bf Claim 2.}
{\em   If $c$ never reaches $d$ in this phase and it does not wait forever in the port of node $u'$ trying to reach node $u$, it will perform
an unbounded number of moves.}
 \begin{proof}

By contradiction, let $c$ satisfy the conditions of the claim, but perform only a bounded number of moves, without reaching $d$.
Thus, within finite time it will stop in a node indefinitely. When this happens, the adversary activates $d$ blocking any edge through which
$c$  wants to move. Hence no new node will be explored; since
 $N_i <  n$ the adversary would make the algorithm ${\cal A}$  fail.

\end{proof}

Further note that, during this time, agent $d$ moves (if $d=b$) or is passively
moved (if $d=a$) to a neighbour $v$  outside $X_i$.

In  {\bf Claim 2}  we   assumed that  $c$ does not wait forever in the port of node $u'$ trying to reach node $u$.
Notice that if this happen, then we can define the set $X_{i+1}=(X_i \setminus \{u\} )\cup \{v\}$, if $d$ is blocked traversing an edge to go outside $X_{i+1}$ it cannot wait forever or the adversary can create a schedule and
a configuration in which both agents are forever blocked on the endpoints of edge $(u,u')$, without the possibility of termination.
Therefore, agent $d$ eventually leaves node $v$ to go to the neighbour node in $X_{i+1}$, thus the adversary can let $c$ visit node $u$,   it then forces $c$ to sleep and it keeps $d$ blocked until $d$ leaves $v$.
At this point we are again in the situation where both agents are confined in the set of nodes $X_i=(X_{i+1} \setminus \{v\} )\cup \{u\}$.
If the agents keep behaving in this way,  the adversary makes them oscillating forever between the sets $X_i$ and $X_{i+1}$.

 Therefore, $c$ has to perform $|X_i|-1$ steps and eventually has to reach $d$.

%
%
%

 As soon as  $c$
performs  $|X_i|$  moves,
  the adversary starts the next phase $i+1$. Since $v$ is a  newly explored node,
 $X_{i+1}$  is  the former $X_i$ area augmented by node $v$; thus,
   $ N_{i+1} =N_i  + 1$.

The adversary repeats the same process until  phase $j$ when $N_i=n-1$.
 Note that termination cannot occur before the end of this process.
In fact,  should one agent  terminate at any time during the $i-$th phase,  $i< j$ ,
then  every  attempt of the other agent to move would be blocked by the adversary,
preventing further exploration,  and the algorithm would fail because the ring has not been fully explored yet.

 Since the explored part
  grows of one node in each phase, the total number of
  edge traversals is   $\sum_{i=3}^n i > {{n^2} \over 2} $
  and the  theorem follows. 
\end{proof}

\subsubsection{{\sf PT}:   Without Chirality} \label{treeagent}

Without  chirality, we have shown that two agents do not suffice (Theorem~\ref{twonotenoughpt}). We thus consider the presence of three agents in absence of chirality.
Also in this case, we need to assume some other knowledge
 because the presence of three agents  without additional information  is not sufficient (Theorem \ref{impossibilitymoreagents}) even for partial termination.

 As in the previous Section, we first assume the agents have knowledge of an upper bound $N$ on the ring size; the case when  there exists a landmark node is considered later.

 \paragraph{A.  Upper Bound}\ \\

The algorithm,  {\sc {\sf PT}BoundNoChirality},  for this case is described in Figure~\ref{alg-ptbnochirality}.
 Upon activation, at least  two of the three agents  will necessarily agree on the orientation.
 An agent   bounces only when catching  another agent, performing a ``zig-zag tour".
There are several ways in which an agent terminates.

{ More precisely,    an agent changes direction if and only if it reaches another agent that is waiting on a missing edge in the same direction.

Each agent memorizes the distance $d$ that it has traveled between the first time it changed state from {\sf Bounce}  to {\sf Reverse}. Each time the agent changes state from {\sf Bounce} to {\sf Reverse} (or viceversa) it checks that the number of steps done   is strictly greater than $d$, otherwise it terminates. If the distance is greater the agent memorizes this new distance. If the agent meets an agent on a node (condition {\sf MeetingB} and  {\sf MeetingR} of the algorithm) and distance $d$ has been set,  it also checks that the distance from the last direction change is greater than $d$ and updates $d$, otherwise it terminates. Finally, there is a trivial terminating condition of doing $N$ steps in the same direction.
\begin{figure*}
\footnotesize
\begin{framed}
\begin{algorithmic}
\State States: \{{\sf Init}, {\sf Bounce}, {\sf Reverse},{\sf MeetingR}, {\sf MeetingB}\}.

\State $d  \gets 0$

\AtState{Init}{}
    \State \Call{Explore}{\dLeft$|$ \vTnodes $\geq N$: \sTerminate,  \pCatches: \sBounce}
    
\AtState{Bounce}{}
   \State \Call{CheckD}{\vEsteps}
    \State \Call{Explore}{\dRight$|$ \vTnodes $\geq N$: \sTerminate, \pMeeting: {\sf MeetingB}, \pCatches: \sReverse }
\AtState{Reverse}
	\If{$d =0$}
	\State $d \gets$\vEsteps     \Comment{First time I change state from {\sf Bounce} to {\sf Reverse}}
	\Else
	   \State \Call{CheckD}{\vEsteps}
	\EndIf
    \State \Call{Explore}{\dLeft$|$ \vTnodes $\geq N$: \sTerminate, \pMeeting: {\sf MeetingR}, \pCatches: \sBounce}

\AtState{MeetingR}{}
	     	\If{ $\vEsteps \leq d$ }
		\State {\sf Terminate}
		\EndIf
	    \State \Call{ExploreNoResetEsteps}{\dLeft$|$ \vTnodes $\geq N$: \sTerminate, \pCatches: \sBounce} \Comment{This procedure is the same of \Call{Explore} but it does not reset \vEsteps}
	    \AtState{MeetingB}{}
	    	    	\If{ $\vEsteps \leq d$ }
		\State {\sf Terminate}
		\EndIf
	    \State \Call{ExploreNoResetEsteps}{\dRight$|$ \vTnodes $\geq N$: \sTerminate, \pCatches: \sReverse}\Comment{This procedure is the equal of \Call{Explore} but it does not reset \vEsteps}
\END
\Function{CheckD}{x}
	\If{$d >0$}
	\If{ $x \leq d$ }
		\State {\sf Terminate}
	\Else
	\State $d \gets x$
	\EndIf
	\EndIf
\EndFunction

\end{algorithmic}
\caption{Algorithm  {\sc {\sf PT}BoundNoChirality} \label{alg-ptbnochirality}}
\end{framed}
\end{figure*}

\begin{figure}
\begin{center}
\includegraphics[scale=0.45]{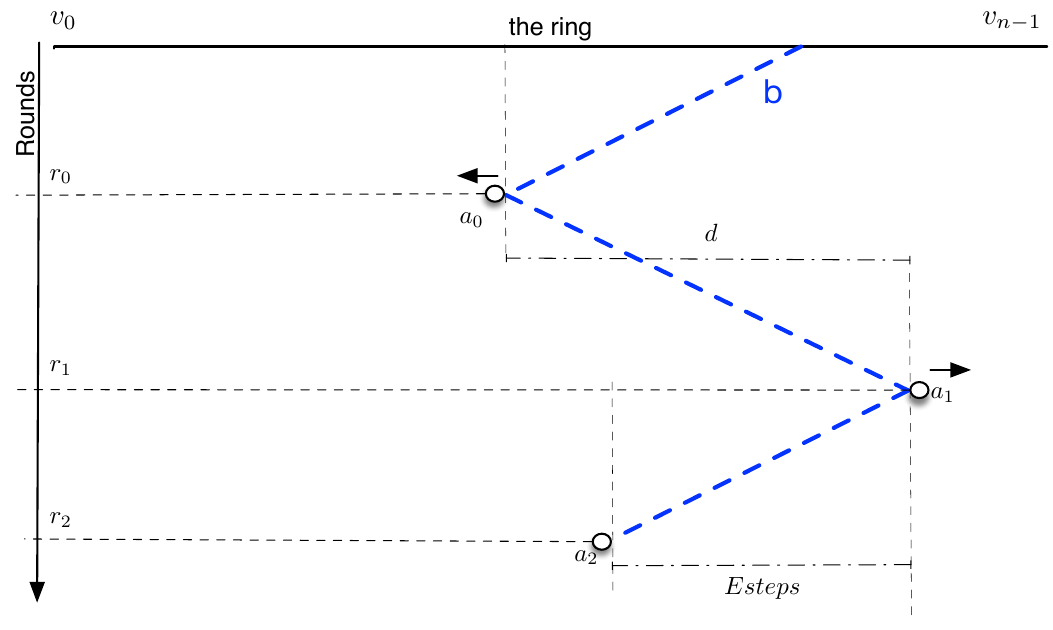}
\end{center}
\caption{ {\sf PT}NoChirality Termination  \label{figure:pt3agent}}
\end{figure}

\begin{lemma}\label{PTcorrectterm}
In  Algorithm {\sc {\sf PT}BoundNoChirality}, if an agent terminates then the ring has been explored.
\end{lemma}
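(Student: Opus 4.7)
The plan is to split on which of the two termination branches in the algorithm is taken. The explicit threshold $\vEsteps \geq N$ is immediate: since $N \geq n$, an agent that has successfully completed $N$ consecutive moves in one direction has traversed every edge of the ring at least once. All the work lies in the second branch, namely termination inside \textsc{CheckD}, which fires precisely when the length of the newly completed leg of the zig-zag does not strictly exceed the length of the previous one.

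I would therefore fix the terminating agent $a$ and track the sequence of nodes $q_0, q_1, q_2, \ldots$ at which it reverses direction. Each $q_i$ is produced by one of the events \pCatches, \pCatched, or \pMeeting, and by construction $|q_{i-1} q_i|$ (the arc length of the $i$-th leg) is exactly the value of $\vEsteps$ recorded by \textsc{CheckD}. The central claim to prove is: if $|q_{k-1} q_k| \leq |q_{k-2} q_{k-1}|$ for some $k$, then the entire ring has been visited by the three agents. Granting this claim, the lemma follows because termination inside \textsc{CheckD} is exactly this condition.

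To establish the claim, I would reason as follows. Let $b$ be the agent that caused $a$ to reverse at $q_{k-1}$; between $q_{k-1}$ and $q_k$, agent $b$ moves in the direction opposite to $a$, sweeping the arc on the other side of $q_{k-1}$. Consider the agent, call it $c$, that triggers $a$'s event at $q_k$. Case analysis on the identity of $c$ (either $c=b$ having come back from the far side, or $c$ is the third agent arriving from the region not yet covered by $a$ or $b$) shows that in order for $c$ to meet or catch $a$ within the arc of length $|q_{k-1} q_k| \leq |q_{k-2} q_{k-1}|$, agent $c$ must have traversed the complementary arc of the ring; combining its traversal with the arcs swept by $a$ and $b$ covers all $n$ edges.

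The main obstacle I expect is handling passive transports cleanly in the \textsf{PT} model. A sleeping agent carried across a reappearing edge can fire a \pMeeting\ event at an unexpected location, and one must check that the arc-coverage bookkeeping above is not falsified by such a transport. I would resolve this by observing that a passive transport traverses the very edge on which the agent was blocked, so the edge in question is certifiably explored, and the \pMeeting\ is triggered at an endpoint of that edge; combining this traversal with the arcs attributed to $a$ and $b$ in the case analysis still yields a full covering of the ring. This, together with the trivial $\vEsteps \geq N$ case, completes the argument.
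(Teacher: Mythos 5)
Your overall shape (split off the trivial $\vEsteps \geq N$ branch, then argue that a non-increasing comparison inside {\sf CheckD} forces exploration) matches the paper's, but your justification of the central claim has a genuine gap in exactly the critical case: the equality $\vEsteps = d$. You assert that the agent $c$ triggering the event at $q_k$ ``must have traversed the complementary arc of the ring.'' This is false precisely in the scenario the non-strict check is designed for: $a$ shuttles between two agents that are both blocked on their outward ports and never move, so $a$ returns to the very same node ($q_k = q_{k-2}$) after a leg of exactly the same length, while neither blocker has traversed anything. Exploration nonetheless holds there, but for a reason your sketch never uses: in the {\sf PT} model an agent sitting on a port advances in \emph{every} round in which its edge is present --- actively if awake, by passive transport if asleep --- so a boundary agent that stays put over an interval must have had its outward edge missing throughout; since at most one edge is missing per round, two simultaneously stationary boundaries must be waiting on the \emph{same} edge, i.e., the explored region wraps the whole ring. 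This is the engine of the paper's argument (``otherwise there would have been passive transport and $d$ would have grown''); you invoke passive transport only to patch the \pMeeting bookkeeping, not here, so the equality case --- the one that distinguishes the non-strict check of {\sf PT} from the strict check needed in {\sf ET} --- is unproven in your plan.

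A secondary problem is the bookkeeping itself. In this algorithm \pCatched triggers no transition at all, and \pMeeting calls {\sf CheckD} \emph{without} reversing direction while resetting \vEsteps, so the quantity compared by {\sf CheckD} is the distance since the last event, not the arc $|q_{k-1}q_k|$ between reversals; your identification of the two is inaccurate. More importantly, your local per-agent case analysis silently assumes the events at $q_{k-1}$ and $q_k$ occur at the ends of the swept territory. The paper secures this via a global invariant you would need to establish: after the first bounce there is a contiguous explored segment whose two endpoints are each tended by an agent pushing outward, with exactly one middle agent crossing between them (roles may swap through passive-transport overtakes, but there is never more than one middle agent, so two agents can only be co-located at the boundary). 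From this invariant, each compared segment is a full crossing of the explored region, the lengths are non-decreasing, and the only way the check can fire is the equality case above. Without it, you cannot rule out an event strictly inside the territory producing a short $\vEsteps$ while the ring is far from explored.
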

\begin{proof}

The only non trivial part is the termination for Condition [{\bf if}  $\vEsteps \leq  d$ ]. Let $b$ be the first agent that terminates by satisfying condition $\vEsteps \leq  d$  (if there are several such agents, take any of them), at some round $r_2$. We will show that assuming that there are unexplored nodes in the ring at time $r_2$ leads to contradiction.

First  note that, if there are unexplored nodes at time $r_2$, no agent could have terminated before time $r_2$ by satisfying condition \vTnodes $\geq N$; i.e., $b$ is indeed the first agent to terminate.

Let us define rounds $r_0$ and $r_1$ as the last two rounds in which $b$ changed direction, with $r_0<r_1$ (see Figure \ref{figure:pt3agent}). Observe that $r_0$ and $r_1$ are well defined: At time $r_2$, $b$ has a positive value of $d$ (otherwise $\vEsteps \leq  d$ won't be satisfied). By construction of the algorithm, the first change of direction does not set $d$, only the subsequent ones do; i.e., $b$ must have had at least two changes of direction before satisfying the termination condition.

Let us define the {\em leftmost} and the {\em rightmost} agents as follows:
\begin{itemize}
  \item If at time $r_0$   agent $b$ changes direction from \dLeft to \dRight, then   agent $b$ becomes the {\em rightmost} agent, while   agent $a$ (that $b$ catched and reversed direction on) becomes the {\em leftmost} agent. Otherwise, $b$ becomes the {\em leftmost} agent and $a$ becomes the {\em rightmost} agent.
  \item The leftmost agent   remains  so unless it is overtaken by  another agent $a'$ moving left or it bounces on an agent $a''$, in which case the other agent becomes the leftmost agent.
  \item Similarly, the rightmost agent   remains  so unless it is overtaken by  another agent $b'$ moving right or it bounces on an agent $b''$, in which case the other agent becomes the rightmost agent.
  \end{itemize}

Let us define as {\em visited}  (for any round $r$, $r_0 \leq r\leq r_2$) the set of nodes which have been visited by the leftmost or the rightmost agent by round $r$ (included). A node that is not {\em visited} at round $r$ is called {\em unvisited} at round $r$. Hence, if there are no unvisited nodes, the whole ring has been explored. It is easy to show by induction on the round number, that the set of visited nodes forms a compact set of nodes between the leftmost and the rightmost agent (it starts as a single node at time $r_0$, and grows only by leftward move of the leftmost agent and by a rightward move of the rightmost agent; note that the fact that the actual leftmost/rightmost agents might change over time does not affect this). Hence, the leftmost and the rightmost visited node are well defined.

From the definition of the leftmost/rightmost agents (in particular, that the leftmost/rightmost agent never disappears, it can only be replaced by another one) it follows that as long as there are unvisited nodes (i.e. by assumption up to and including round $r_2$), the leftmost visited node is occupied by the leftmost agent and the rightmost visited node is occupied by the rightmost agent. As there are three agents altogether, this means that the internal visited nodes (different from the leftmost and the rightmost one) contain together at most one agent. As change of direction and/or test $\vEsteps \leq  d$ are performed only when an agent catches another agent, as long as there are unvisited nodes, no agent can change direction or terminate due to $\vEsteps \leq  d$ in an internal visited node.

Hence, when $b$ changed direction in round $r_1$, it was in the rightmost or leftmost visited node (depending on its direction in round $r_0$). As $b$ did not terminate in round $r_1$, $b$ set its $d$ in the last line of function \textsc{CheckD} to be the distance it traveled between $r_0$ and $r_1$. As $b$ changed direction in $r_1$, after traveling less than $d$ step it is still in an internal visited node and cannot terminate. Hence, $b$ must have terminated in round $r_2$ by satisfying $\vEsteps = d$. However, this means that the leftmost agent did not advance in any round between $r_0$ and $r_2$. In particular, it did not advance in round $r_1$, which is only possible (due to passive transport) if both   leftmost and   rightmost agents were blocked on a single edge. That is, all  the nodes have already been visited in round $r_1$, a contradiction. 
\end{proof}

\begin{theorem}\label{PTupperboundNoChirality}
Three  agents performing Algorithm\\ {\sc {\sf PT}BoundNoChirality} in the {\sf PT} model with
a known upper bound $N$ on the ring size and no chirality, explore the ring with $O(N^2)$ edge traversals. One
agent explicitly terminates,  the others  either terminate or wait perpetually on a port.
\end{theorem}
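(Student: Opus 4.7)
The plan is to build on Lemma~\ref{PTcorrectterm}, which already guarantees that whenever any agent terminates the ring has been fully explored; what remains is to establish that termination must occur, that no agent moves forever, and that the total number of edge traversals is $O(N^2)$.

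First I would prove liveness (at least one explicit termination). With three agents and only two directions, at any round at least two of them agree on a direction. If no bounce ever occurs, then at least one agent performs $N$ successful moves and terminates via the $\vEsteps\geq N$ guard. Otherwise, following the setup in the proof of Lemma~\ref{PTcorrectterm}, from the first round $r$ at which the three agents straddle into a leftmost/middle/rightmost configuration there is a well-defined ``explored arc'' whose boundaries can only be pushed outward by the two boundary agents. I would then show, between consecutive bounces of the same boundary agent, the dichotomy used in Lemma~\ref{PTcorrectterm}: either (i) the arc grows strictly, in which case $\vEsteps > d$ and CheckD just updates $d$, or (ii) the test $\vEsteps\leq d$ fires and that agent terminates. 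Since the arc has size at most $N$, case (i) can occur at most $N$ times per boundary agent, so some agent must terminate within $O(N)$ bounce phases.

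For the complexity bound, I would observe that each bounce phase of an agent consists of a single traversal of (at most) the current explored arc, costing $O(N)$ edge moves; combining with the $O(N)$ bound on phases yields $O(N^2)$ traversals in total. For the final clause on the remaining agents, once some agent terminates and stops, any still-moving agent $x$ either achieves $N$ successful moves and terminates via the $\vEsteps\geq N$ guard, or bounces off the stationary agent and terminates via CheckD by the argument above, or is perpetually blocked on a port by the adversary, which matches the ``waits perpetually on a port'' alternative of the statement.

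The main obstacle I anticipate is making the passive-transport case fully air-tight when updating the counter $d$. Because an agent can be displaced without executing a \aMove, it is not immediately obvious that the value recorded by a bouncing agent at its next \pCatches event correctly reflects the change in the explored arc. I would address this exactly as in Lemma~\ref{PTcorrectterm}: argue that the only way $d$ can fail to strictly increase between two bounces of a boundary agent is if both relevant agents were simultaneously blocked on the same edge, which by the {\sf PT} simultaneity condition can happen only when that edge is the unique missing edge, i.e., when the ring has already been explored and termination is correct.
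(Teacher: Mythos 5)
Your proposal is correct and follows essentially the same route as the paper's proof: correctness of termination is delegated to Lemma~\ref{PTcorrectterm}, liveness follows from the fact that each non-terminating invocation of CheckD strictly increases $d$ (with the same {\sf PT}-condition argument that a stagnant $d$ forces both boundary agents onto the same missing edge, i.e., a fully explored ring), and the $O(N^2)$ bound comes from $O(N)$ bounce phases of cost $O(N)$ each. Your explicit treatment of the final clause (what the remaining agents do after the first termination) is slightly more detailed than the paper's, which leaves it implicit by analogy with Theorem~\ref{PTupperbound}, but this is a refinement rather than a different approach.
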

\begin{proof} The correctness of termination derives from Lemma \ref{PTcorrectterm}. It remains to prove that eventually at least one agent terminates.
Having three agents,  at least two will agree on the   same direction. We will consider this direction as global {\em left}.
It is easy to see that if an edge is perpetually removed, then eventually the agents terminate: two agents will be positioned at the end point of the missing edge and the third agent terminates detecting $\vEsteps=d$. If an agent is not forced to change direction and the edges are not perpetually removed,  then it will terminate since $\vEsteps > N$. Therefore, the adversary has to force the agents to bounce on each other. But let us notice that, as soon as an agent changes state from {\sf Bounce} to {\sf Reverse}, it sets a distance $d$;  if this distance does not increase at each state change, the agent    terminates. This implies that eventually we will have $ d > N$ and termination for $\vEsteps > N$. 

Let us now analyze the complexity of the algorithm. If an agent does not set $d$, then it performs at most $O(N)$ steps. If an agent sets $d$, its value is at most $O(N)$; there are at most $O(N)$  increases of  $d$, therefore an agent will do at most $O(N^2)$ movements. Since the number of agents is constant,  the total sum of movements over all agents is at most $O(N^2)$. 

\end{proof}


 \paragraph{B. Landmark}\ \\

 The general strategy for three agents without chirality and without an upper bound on the ring size, Algorithm  {\sc {\sf PT}LandmarkNoChirality},
is  essentially the same as   the one for Algorithm  {\sc {\sf PT}BoundNoChirality},
where however  an agent cannot use $N$ for termination, but terminates  when it performs a complete loop around the landmark (and thus knows $n$), or  when it detects that the agents have  crossed (like in the previous algorithm).
Essentially,  Algorithm \\{\sc {\sf PT}LandmarkNoChirality} is obtained by modifying Algorithm  {\sc {\sf PT}BoundNoChirality} (shown in Figure \ref{alg-ptbnochirality})
as follows: the predicate  ``$\vTnodes \geq N$" is substituted with ``$n$ is known", that is the agent has done a loop around the landmark.

\begin{theorem}\label{PTupperboundNoChiralityLandmark}
Three  agents performing Algorithm\\ {\sc {\sf PT}LandmarkNoChirality} in the {\sf PT} model with
no chirality in presence of a landmark, explore the ring with $O(n^2)$ edge traversals. One
agent explicitly terminates,  the others  either terminate or wait perpetually on a port.
\end{theorem}

The proof follows the same lines of the one of Theorem \ref{PTupperboundNoChirality}, where  termination does not happens when $\vEsteps \geq N$ but the first time an agent does a loop around the landmark.  It is easy to see that this has to happen since $d$ increases.

\subsection{{\bf Exploration in the {\sf ET}} \label{ssync:et} Model}

\subsubsection{Basic Results}

Let us first introduce a simple result on a unconscious  exploration:
\begin{theorem}\label{th:perpetual}
In the {\sf ET} model with chirality,
 two robots  can perform an unconscious exploration of the ring.
\end{theorem}
\begin{proof}
A trivial algorithm in which an agent changes direction only when it catches someone solves the exploration in {\sf ET}. 
\end{proof}

Given the previous results, a natural question  is whether there is an   algorithm with partial termination, as we have shown for the {\sf PT} model. Unfortunately the following theorem shows that, without exact knowledge of the network size, it is impossible to design such an algorithm.


\begin{theorem}\label{ETnottermination}
Let us consider the {\sf ET} model where only an upper bound on the ring size is known.
Given any number of agents, there does not exist any exploration algorithm with partial termination.
This holds true even if the ring has a landmark node, the agents have distinct IDs, there is common chirality.
\end{theorem}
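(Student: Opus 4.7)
The plan is to prove the impossibility by contradiction via an adversarial schedule that exploits the slack in the ``eventually'' clause of the {\sf ET} model. Suppose, toward a contradiction, that some algorithm $A$ with a fixed number $k$ of finite-memory agents (enjoying chirality, knowledge of $N$, and optionally a landmark and/or distinct IDs) guarantees that at least one agent enters a terminal state within $T(N)$ rounds, correctly signalling full exploration of any 1-interval connected ring of size at most $N$. I will construct an execution on a ring of size exactly $N$ in which some node $v_g$ is never visited during $[1, T(N)]$, while $A$ is nevertheless forced to terminate, yielding the desired contradiction.

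Fix the ring of size $N$, pick a target node $v_g$ distinct from the landmark and from every initial agent position, and let $e_- = (v_{g-1}, v_g)$ and $e_+ = (v_g, v_{g+1})$ be its incident edges. The adversary's invariant is that no agent ever crosses $e_-$ or $e_+$ into $v_g$ during $[1, T(N)]$. At each round, it inspects which agents (if any) sit on a port of $e_-$ or $e_+$ attempting to enter $v_g$: if exactly one such edge has an agent poised on it, the adversary removes that edge and activates that agent, so its move attempt fails; if agents are poised on ports of both edges, the adversary alternates across rounds which of $e_-, e_+$ is the currently missing edge and activates in each round only the agent whose incident edge is missing, leaving the opposing agent sleeping. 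Agents not adjacent to $v_g$ may be activated freely, so $A(t)$ is always non-empty, and at most one edge is missing per round, preserving 1-interval-connectivity.

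The delicate point is compliance with the {\sf ET} simultaneity condition: an agent sleeping on a port of, say, $e_-$ does experience rounds within $[1, T(N)]$ in which $e_-$ is present (namely, those rounds in which $e_+$ is the removed edge). The adversary simply declines to activate that agent in such rounds and commits instead to activating it at some future round $t'' > T(N)$ when $e_-$ is present; because {\sf ET} asserts only the \emph{existence} of some such $t''$, with no bound on it, this is admissible. The ``infinitely often'' activation obligations are likewise discharged by liberally activating every agent after round $T(N)$, which lies outside the bounded-termination window.

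Because $v_g$ is never visited in $[1, T(N)]$, the ring is not fully explored at any round of this interval; yet by hypothesis $A$ declares termination at some $t^* \le T(N)$, producing a false completion signal and the desired contradiction. The main obstacle in carrying out the plan is the two-sided approach case: 1-interval connectivity forbids removing $e_-$ and $e_+$ simultaneously, so the adversary must serialize the blockade through activation control, and one must carefully verify that the resulting schedule respects, all at once, the non-empty $A(t)$ constraint, 1-interval connectivity, and both the simultaneity and infinite-activation conditions of {\sf ET}.
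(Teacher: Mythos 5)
Your proof is correct for the statement as written, but it takes a genuinely different route from the paper's, and the two arguments use different hypotheses in an essential way. The paper argues by indistinguishability across two ring sizes: it fixes $R_1$ of size $n$ with a single edge $e_0$ removed \emph{forever} (which makes the {\sf ET} simultaneity condition vacuous there, since that condition only binds when the edge reappears) and $R_2$ of size $n'$ with $n<n'<N$, where the agents are confined to an $n$-node segment between two boundary edges $e_0$ and $e_n$ by removing whichever boundary edge is being attempted and, in rounds where both are attempted, removing one while putting the agents at the other to sleep. A correct algorithm must explore $R_1$ and hence some agent terminates there at some finite round $t^*$; replaying the identical local history on $R_2$ up to $t^*$ yields a false signal. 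Note that this argument never actually uses boundedness of the termination time---it excludes even finite-but-unbounded terminating signals---but it crucially uses that only an upper bound on the size is known. Your argument is dual: you stay on a single ring of exactly known size $N$ and exploit only the unbounded slack in the ``eventually'' clauses of {\sf ET}, besieging one node $v_g$ during $[1,T(N)]$ and discharging the simultaneity and infinitely-often-activation obligations after $T(N)$. This buys a strengthening the paper does not state (bounded termination is impossible even with \emph{exact} knowledge of $n$, which is consistent with the paper's positive {\sf ET} algorithm, whose termination time is finite but unbounded), but it proves strictly less in the other direction: it leaves open algorithms terminating in finite yet unbounded time under upper-bound-only knowledge, which is exactly what the paper's two-ring argument rules out and what the paper later relies on when it asserts, in Algorithm {\sc {\sf ET}BoundNoChirality}, that the size must be known precisely. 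So your proof establishes Theorem~\ref{ETnottermination} as literally stated but could not support that downstream use. Two details to tighten: the adversary must choose the missing edge and the activation set \emph{jointly} by simulating the \aCompute step, because an agent can step onto a port and cross the edge within the same round, so deciding based on who already ``sits on a port'' is not quite sufficient (though your alternation plus activation control handles the corrected version verbatim); and the choice of $v_g$ avoiding the landmark and all agents is legitimate only because the algorithm must work for all initial placements, which the adversary gets to fix---worth saying explicitly.
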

\begin{proof}

The proof is by contradiction.
Let us assume that there is an exploration algorithm ${\cal A}$ that always achieves partial termination under the assumptions of the theorem.
Let us consider two different rings $R_1 =(V_1,E_1)$ and $R_2=(V_2,E_2)$, with $V_1:\{v_0,\ldots, v_{n-1}\}$ and $V_2:\{v_0,\ldots,v_{n'-1}\}$ and where $n<n'$.
Let $e_i$ denote the edge $(v_i,v_{i+1})$, where indices are taken modulo the size of the ring.
 Finally, let $N$ be the upper bound on the both ring sizes such that $n' < N$.

On ring $R_1$, starting from round $0$, the adversary perpetually removes edge $e_0$, and it schedules the activation of the agents as follows.
 In rounds  where the agents would not traverse $e_0$    or they would traverse $e_0$ from   one endpoint only, the adversary lets
all the agents be active. In the {\em busy} rounds where there are agents trying to traverse $e_0$ from both endpoints, say nodes $v_0$ and $v_{n-1}$, the adversary activates agents on $v_0$ and on $v_{n-1}$ in alternating fashion, i.e. they are never active in the same round. Notice  that $e_0$ is perpetually removed, so no one will traverse it.
Since ${\cal A}$ is assumed to be correct, it will eventually explore $R_1$. Therefore, there exists a round $r$ where ${\cal A}$ decides that $R_1$ has been explored, and at least one agent terminates.

We now consider the following schedule of activations and edges removal for ring $R_2$.
On ring $R_2$ the adversary places all agents in the same positions of ring $R_1$, i.e. they are only in nodes belonging to \\ $\{v_0,v_1\ldots,v_{n-1}\}$. In rounds when some agent wants to traverse one of edges $e_0$ and $e_n$ and no agent wants to traverse the other, the adversary will block that edge. In the {\em busy} rounds when there are agents trying to traverse both $e_0$ and $e_n$, it will alternate between making agents waiting on $e_0$ passive and blocking edge $e_n$, and making agents waiting on $e_n$ passive and blocking $e_0$.
 Notice  that, in the {\sf ET} model, such a schedule can be kept for a finite but unbounded amount of rounds, hence this can be kept until round $r$.

Observe that, until round $r$, this execution of ${\cal A}$ on $R_2$ is not distinguishable, from the point of view of agents, from the execution of ${\cal A}$ on $R_1$ previously explained. Therefore,
the agent that terminates at round $r$ in the execution of ${\cal A}$ on $R_1$ also terminates in round $r$ in the execution of ${\cal A}$ on $R_2$. However, when this happens ring $R_2$ has not been explored.


Note that the above arguments apply also if the agents have distinct IDs. If there is a landmark, it can be placed anywhere in $R_1$, and in the corresponding node in $R_2$; $R_1$ and $R_2$ will still look the same to the agents. 
\end{proof}

 Notice that Theorem \ref{ETnottermination} would hold  even if the agents were equipped with  wireless communication.

\subsubsection{Exploration Algorithm for ET}
 Algorithm {\sc {\sf ET}BoundNoChirality} is a direct adaptation of Algorithm {\sc {\sf PT}BoundNoChirality}, the only differences are that
 $N$ is set to $n-1$ (since from Theorem~\ref{ETnottermination} we know the size needs to be known precisely),
 and the  inequality check in {\sf CheckD} becomes strict:  ({\bf if} $\vEsteps < d$). As in the {\sf PT} model, three agents are employed, with no chirality assumption.
%
%
%
%
%
%

\begin{theorem}\label{th:ETpartialtermination}
Three anonymous agents performing Algorithm {\sc {\sf ET}BoundNoChirality} in the {\sf ET} model with
known ring size and no chirality explore the ring, with one agent explicitly terminating
and the other agents either terminating or waiting perpetually on a port.
\end{theorem}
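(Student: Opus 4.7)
The plan is to follow the same two-phase argument used for Theorem~\ref{PTupperboundNoChirality}, i.e.\ separately establish (i) partial correctness -- whenever an agent terminates, the ring has been explored -- and (ii) termination -- at least one agent eventually reaches a terminating state. Since the only differences from Algorithm~{\sc {\sf PT}BoundNoChirality} are the choice $N = n-1$ and the strict inequality $\vEsteps < d$ in \Call{CheckD}{}, the bulk of the work is to verify that these two changes are exactly what is needed to compensate for the loss of passive transport.

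\paragraph{Partial correctness.}
First I would dispatch the easy termination conditions: $\vEsteps \geq N = n-1$ forces the agent to have traversed $n-1$ consecutive edges in one direction, so all $n$ nodes have been visited; the $\pMeeting$ transitions inherit their correctness verbatim from Lemma~\ref{PTcorrectterm}. The delicate case is the \Call{CheckD}{} termination. I would re-run the three-agent leftmost/middle/rightmost analysis of Lemma~\ref{PTcorrectterm}: after the first round in which two agents share a direction, the interval bounded by the leftmost and rightmost agents defines an explored arc that can only grow, unless its endpoints have wrapped around (in which case the ring is explored). In the {\sf PT} argument, the non-strict condition $\vEsteps \leq d$ was safe because passive transport guaranteed that, whenever an endpoint was blocked while the matching edge was present, the endpoint was carried forward, strictly extending the explored arc. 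In {\sf ET} this is no longer true: an endpoint agent may simply sleep on its port during rounds when the edge is available, so the middle agent can shuttle and return with $\vEsteps = d$ \emph{without} any growth of the explored arc. Hence equality is not informative. The strict inequality $\vEsteps < d$, however, means the middle agent's crossing was shorter than the previous one, which is only possible if one of the endpoints moved \emph{inward}; by the monotone-expansion structure the only way an endpoint moves inward is by crossing its counterpart, i.e.\ by completing the ring.

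\paragraph{Termination.}
Two of the three agents necessarily agree on a direction; call it global left, and let the third be $c$. I would argue by exhaustion on the adversary's behaviour. If the adversary perpetually removes some edge, then by the {\sf ET} simultaneity condition every sleeping agent eventually acts in a round in which its incident edge is present, so the agents converge to the two endpoints of that missing edge and the zigzagging one terminates (either by $\vEsteps \geq N$ or through \Call{CheckD}{} once the explored arc fills the ring). Otherwise no edge is perpetually removed, so each time an agent enters a port while blocked, the simultaneity condition guarantees it eventually moves. Then each successive crossing of the middle agent must strictly increase $d$: non-growth would force $\vEsteps \leq d$ infinitely often, but in the absence of perpetual blocking the endpoints must advance strictly each round-cycle, contradicting $\vEsteps = d$. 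Since $d$ is integer-valued and bounded above by $n-1 = N$, this monotone growth cannot continue forever; eventually either $\vEsteps \geq N$ triggers termination, or \Call{CheckD}{} triggers (with the ring already explored, by the previous paragraph). The role of the exact size $n$, which by Theorem~\ref{ETnottermination} is unavoidable, is precisely to allow setting $N = n-1$ so that the $\vEsteps \geq N$ check fires exactly after a full traversal and not sooner.

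\paragraph{Complexity and remark on the other agents.}
The $O(N^2)$ bound is inherited verbatim from the {\sf PT} analysis: each \sBounce--\sReverse cycle of an agent costs at most $O(N)$ moves, and the monotone growth of $d$ allows at most $N$ such cycles per agent before termination. Finally, after one agent terminates, the others cannot reverse past the terminating position and thus either reach $\vEsteps \geq N$ themselves or are perpetually blocked on a port. I expect the main obstacle to be the fine-grained argument that the strict inequality $\vEsteps < d$ truly implies endpoints crossing in {\sf ET}; this requires carefully handling the possibility that, while the middle agent travels, an endpoint is passively \emph{not} transported (unlike {\sf PT}) yet might later act under the simultaneity condition -- one must track the explored arc across multiple asynchronous activation rounds, rather than a single round as in the {\sf PT} case.
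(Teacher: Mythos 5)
Your partial-correctness paragraph is sound and matches the paper: with the strict test $\vEsteps < d$, the problematic case $\vEsteps = d$ simply ceases to be a terminating condition, so the argument of Lemma~\ref{PTcorrectterm} goes through with the {\sf PT}-specific step deleted, and a strict decrease of the crossing length does force the endpoints of the explored arc to have crossed. The gap is in your termination argument. You claim that, when no edge is perpetually removed, ``each successive crossing of the middle agent must strictly increase $d$'' because ``the endpoints must advance strictly each round-cycle''. In {\sf ET} this is false: the simultaneity condition only guarantees \emph{eventual} progress for an agent sleeping on a port, with no bound on when, so an endpoint agent can stay put for arbitrarily many rounds while the middle agent shuttles back and forth with $\vEsteps = d$ repeatedly --- and under the strict check this terminates nobody; the execution simply continues. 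Moreover, when the endpoints finally do move, the identity of who catches whom can change, so one must exclude \emph{every} infinite pattern of catches with stabilized values $d_x$ and turning points $l_x, r_x$.

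That exclusion is the actual content of the paper's proof, and your plan contains no substitute for it. The paper encodes catches as events $Lxy$, $Rxy$ organized into a ``Catch Tree'' rooted at the first event; it shows cyclic sequences $Dxy:\overline{D}xz:Dxy$ cannot repeat forever by the {\sf ET} condition (the two stationary agents wait on different edges, else $d = n-1$ is detected); Lemma~\ref{lm:ranges} proves the complements of the agents' ranges must be pairwise disjoint in a non-terminating schedule; Lemma~\ref{lemma:treezigzag} and Corollary~\ref{corr:forbidden} rule out six two-event sequences; and an exhaustive case analysis of the trees rooted at $Lab$ and $Lac$ shows every branch ends in a cycle or a forbidden sequence, yielding the contradiction. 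Your one-sentence monotonicity claim is circular where it matters (repeated equality $\vEsteps = d$ is precisely what the adversary can engineer for long stretches, and with the strict check it is not a termination event). Relatedly, your complexity paragraph is wrong: the $O(N^2)$ bound does \emph{not} carry over --- the paper explicitly notes that in {\sf ET} the number of moves is finite but unbounded, since two agents can be held blocked in opposite directions on two distinct edges for an unbounded number of rounds while the third shuttles between them; this is why, unlike Theorem~\ref{PTupperboundNoChirality}, the present statement carries no move bound, and your ``at most $N$ cycles per agent'' count again presupposes the strict growth of $d$ that {\sf ET} does not provide.
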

\begin{proof}
Let us first observe that if an agent terminates, then it terminates correctly. The proof     follows the same steps as the one for Lemma  \ref{PTcorrectterm}. The only difference is that in {\sf ET} the {\sf CheckD} requires ({\bf if} $\vEsteps < d$), thus the part of the proof that uses the {\sf PT} assumption to handle the case $\vEsteps = d$ is not needed anymore.

What remains to be shown is that eventually at least one agent terminates. We show this by contradiction. Let us notice that if an edge is perpetually removed, an agent eventually terminates: two agents will be positioned at the two ports of the missing edge and the other one will do exactly $\vEsteps=n-1$ steps  going from one endpoint node to the others. If an edge is not perpetually removed  then, by construction (the agents bounce only on other agents, not on timeout on blocked edges) and by the {\sf ET} condition, an agent that is waiting on the port of that edge will eventually traverse it. Since the agents terminate if they traverse $n-1$ steps in the same direction, the only possibility left to consider is for the adversary to force the agents to perpetually catch each other, changing directions without increasing the $d$ value. In such a case there is a round $r^*$ after which each agent $x$ reaches a certain stable value $d_{x}$ and it always changes position at the same points $l_{x}$ and $r_{x}$; furthermore, there exists a round $r\geq r^*$ in which two agents go left and one agent goes right. In the remainder we assume the execution is in round at least $r$.

Let $a,b,c$ be the agents and let $Lxy$ indicate the event: agent $x$ catches agent $y$ at $l_x$ (and changes direction from left to right); the event $Rxy$ is analogous. Let $Dxy:D'x'y'$ denote the statement that the event $D'x'y'$ is the catching event immediately following $Dxy$.

Note that the $Dxy$ event can only be followed by an $\overline{D}xz$ or $\overline{D}zx$ events (here and in the remainder, we implicitly assume $x\neq y\neq z$), where $D\in \{L,R\}$ and $\overline{D}$ is the direction opposite to $D$. This is so  because, after $Dxy$,   agents $x$ and $z$ were moving in the same direction $\overline{D}$, $y$ remained moving in direction $D$, and only agents that move in the same direction can catch each other.
Hence, the possible sequences of direction changes of a perpetual schedule can be represented by a binary tree rooted at the initial $Dxy$ event (w.l.o.g. $Lab$ or $Lac$), with
leaves corresponding to agent termination. We call this tree \emph{Catch Tree}, see Figure \ref{figure:catchtree2}. The assumption that the algorithm does not terminate implies that there is an infinite path in this tree. In the remainder of this proof we will show that there is no such path in the \emph{Catch Tree}.

\begin{figure*}
\center
\includegraphics[width=4.6in]{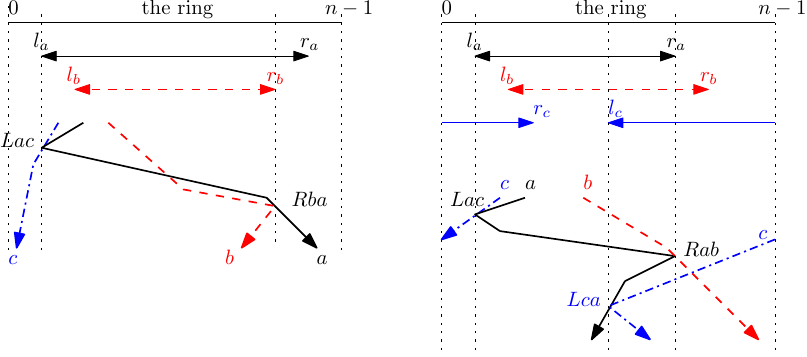}
\caption{Left: The agent ranges and events for the case $Lac:Rba$. Right: the case $Lac:Rab$.\label{fig:rangeLac}}
\end{figure*}

First, observe that the agents never meet on a node -- that would imply at least one of them performing the termination check somewhere inside its range, immediately terminating. This means that unlike the case of  Algorithm {\sc {\sf PT}BoundNoChirality} we do not have to worry about the agents overtaking each other. (They can still cross each other without noticing when traveling in opposite direction and crossing the same edge at the same time.)

Consider now a sequence of events $Dxy:\overline{D}xz:Dxy$. This corresponds to $x$ bouncing off $y$ and $z$ while those remained stationary. From the {\sf ET} condition and the fact that $y$ and $z$ are waiting on different edges (otherwise $x$ would detect $d=n-1$ and terminate) we know that eventually either $y$ or $z$ makes progress, hence this sequence cannot repeat indefinitely and eventually a different event must occur. This can be represented in the \emph{Catch Tree} by removing the bottom $Dxy$ (and its whole subtree) and adding an arrow from its parent to the top $Dxy$ (implying that the same configuration repeats). For a representation of this see Figure \ref{figure:catchtree2}.

Let $A$, $B$, $C$ be the ranges (sets of nodes an $x$ agent visits between $l_x$ and $r_x$) of agents $a$, $b$, $c$, respectively, and $\overline{A},\overline{B},\overline{C}$ be their complements.\\

\noindent {\bf Claim 3.} If the algorithm does not terminate, we have $\forall X, Y \in \{A,B,C\}, X\neq Y: \overline{X}\cap\overline{Y} = \emptyset$.
\begin{proof}
The proof is by contradiction. Assume w.l.o.g, $A$, $B$  such that $\overline{A} \cap \overline{B} \neq \emptyset$. From this and symmetry, we may assume $l_a \notin B$. This means $Lac$ is the only way for the agent $a$ to change direction while going left. Consider the possible next event (consult Figure~\ref{fig:rangeLac}, left).  It cannot be $Rba$: for that $r_b$ must be in $A$, which combined with $l_a\notin B$ implies $\overline{A}\subset \overline{B}$. Since from the moment of $Lac$ agent $a$ was moving right, starting outside $B$ and to the left of it, in order to bounce off $b$ it would need to overtake it, which is impossible.

Hence, $Lac$ must be followed by $Rab$, which in turn might be followed by $Lac$ or $Lca$. $Lac$ is the cyclic case which, as has been discussed above, cannot repeat indefinitely and can therefore be ignored.
The only possible case is $Lca$, which can only happen if $\overline{C}\subset A$.

Observe the movement of the agents (see Figure~\ref{fig:rangeLac}, right): From $Rab$ on, $b$ is moving right from $r_a$, while from $Lca$ on, $c$ is moving right from $l_c$. This means that $Rbc$ is impossible, because $c$ would have had to overtake $b$. However, $Rcb$ is also impossible as $b$ cannot get into position at $r_c$ without changing direction or crossing $\overline{B}$.    
\end{proof}

From Claim 3
it remains  to consider the case of pairwise disjoint range complements, as shown in Figure~\ref{fig:LacRba}.
\begin{figure}
\center
\includegraphics[width=3.2in]{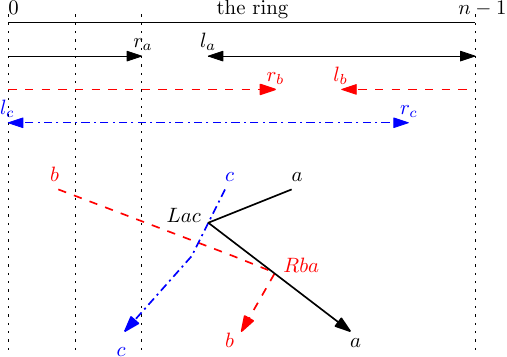}
\caption{The case of $Lac:Rba$ for disjoint range complements. \label{fig:LacRba}}
\end{figure}

\noindent {\bf Claim 4}
{\em If the algorithm never terminates, and if the agents are named in such a way  that the complement of the ranges are in order $\overline{A},\overline{B},\overline{C}$ from left to right (see Figure \ref{fig:LacRba}), then $Lac$ cannot be immediately followed by $Rba$}.

\begin{figure*}
\center
\includegraphics[width=5in]{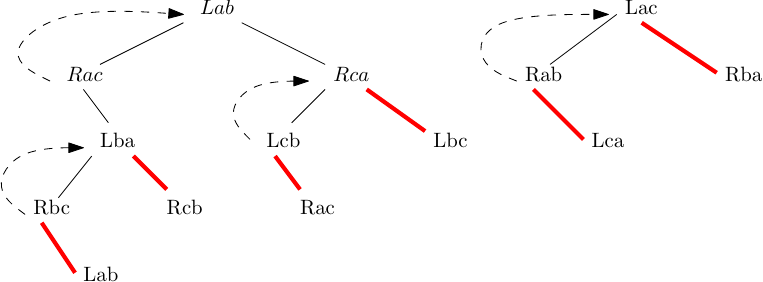}
\caption{ Catch Trees rooted at $Lab$ and $Lac$. The red edges denote forbidden sequences, while the dashed ones depict the loops. \label{figure:catchtree2}}
\end{figure*}

\begin{proof}
After event $Lac$, we have that $c$ is moving left from $l_a$. After event $Rba$, agent $b$ is moving left from $r_b$. The next event is either
$Lbc$ or $Lcb$. However, $Lbc$ is impossible, as for $c$ to reach $l_b$ without changing direction it would have to cross $\overline{C}$. Event $Lcb$ is also impossible, because $b$ would have to overtake $c$ to get into a position where $c$ could bounce on $b$.  
\end{proof}

The following corollary is immediately obtained from Claim 4 by rotation and symmetry:\\

\noindent {\bf Claim 5}
{\em If the algorithm never terminates, then the following sequences are forbidden $Lac:Rba$, $Lba:Rcb$, $Lcb:Rac$, $Rbc:Lab$, $Rca:Lbc$, $Rab:Lca$.}\\

At some point in the perpetual schedule we should have either $Lab$ or $Lac$. As can be seen in Figure~\ref{figure:catchtree2}, any branch of a Catch Tree starting from these configuration leads either to a cycle or to a forbidden sequence of events, leading to a contradiction with the assumption that the algorithm never terminates.  
\end{proof}




\noindent Note that the number of moves  performed by the agents before termination is finite but possibly unbounded.
Consider the situation when
 two agents are blocked going on opposite directions on two different edges, while the third agent goes back and forth between them;
since we are in the {\sf ET} model,   this configuration cannot be kept forever, but there is no bound on the number of rounds in which it holds.

 \section{Conclusion}

In this paper we started the investigation  of the  distributed exploration problem  for   {\em 1-interval-connected} dynamic graphs by focusing on rings  in fully synchronous and semi-synchronous environments.
We studied the impact that structural information and knowledge can  have on the solvability of the problem. In particular, we considered such factors as: knowledge of the exact size of the ring or of an upper bound, agreement on   orientation, anonymity.

 These results open the investigation of live  exploration of dynamic networks and the algorithmic study of dynamic networks in the semi-synchonous environment.

Among the several open problems, a challenging one  is the study of
live  exploration in a network of arbitrary topology where, right now, almost nothing is known. In particular,
there are no  non-trivial bounds on the number of agents,
exploration time and  type of  knowledge necessary for solvability.
Little is also known in case of dynamic networks with special but relevant underlying topologies, such as meshes, tori, hypercubes, etc.

A parallel line of research could be to provide formal proofs of that can be automatically verified by theorem provers. This seems particularly important for the case of distributed algorithms working on dynamic networks since, besides the inherent difficulties of designing a distributed algorithm, there is the additional non trivial components of considering all possible dynamic graphs.


Finally, it would be interesting to study distributed solutions to  other classical problems for mobile agents (such as gathering, in this regard see the recent \cite{gatheringsirocco}, and pattern formation)  in dynamic networks.

 \ \\

\noindent {\bf Acknowledgments.}

We would like to thank the anonymous reviewers for the comments and suggestions. Especially, for suggesting the improved bounds discussed in Section \ref{knownupp} and a correction of Theorem \ref{thm-PerpExpl}.

This research has been supported in part by  NSERC under the Discovery Grant program,
 by Dr. Flocchini's University Research Chair, and by the  VEGA 2/0165/16 grant.


\bibliographystyle{spmpsci}      
 \bibliography{mybibfile}  

\end{document}